%% file: main.tex
\documentclass[11pt]{article}
\usepackage[letterpaper, margin=1in]{geometry}
\usepackage{amsmath, amssymb, amsthm}
\usepackage{bbm}
\usepackage{graphicx}
\usepackage{enumitem}
\usepackage{url}
\usepackage{hyperref}
\usepackage[capitalise]{cleveref}
\usepackage{algorithm}
\usepackage[noend]{algpseudocode}
\usepackage{subcaption}
\usepackage{xspace}
\usepackage{mathrsfs}
\usepackage{xcolor}
\usepackage{tabularx}
\usepackage{booktabs}
\usepackage{tikz}
\usepackage[sortcites, natbib=true, backref=false, url=false, style=alphabetic, doi=false, isbn=false, minalphanames=3,maxalphanames=4,maxbibnames=9]{biblatex}
\input{preamble}

\hypersetup{
    colorlinks=true,
    linkcolor=black,
    filecolor=blue,      
    urlcolor=blue,
    citecolor=blue,
    pdfpagemode=FullScreen,
}
\DeclareSourcemap{
  \maps[datatype=bibtex]{
    \map{
      \step[fieldset=note, null]
    }
  }
}

\newcommand{\REV}{\operatorname{REV}}
\newcommand{\BREV}{\operatorname{BREV}}
\newcommand{\SREV}{\operatorname{SREV}}
\newcommand{\BSPA}{\operatorname{BSPA}}

\newcommand{\E}{\mathbb{E}}
\newcommand{\Ical}{\mathcal I}
\newcommand{\one}{\mathbf{1}}
\newcommand{\Ent}{\operatorname{Ent}}

\title{A Bulow--Klemperer Theorem for Bundling and Improved Approximation Bounds for Simple Auctions}
\author{Yichen Huang\footnote{Supported in part by NSF grant CNS-2107078. Email: \texttt{yichenhuang@g.harvard.edu}.}\\Harvard University}
\date{\today}
\begin{document}
\maketitle
\begin{abstract}
\pagenumbering{gobble}
We study simple auctions for one additive bidder whose values for $m$ items are independent. Write $\mathrm{REV}$ for the optimal revenue, $\mathrm{SREV}$ and $\mathrm{BREV}$ for the optimal revenues from separate selling and grand bundling, and $\mathrm{BSPA}_2$ for the grand-bundle second-price revenue with two independent bidders.

For independent regular items, we prove \[\mathrm{REV} \le \mathrm{BSPA}_2 \le 2\,\mathrm{BREV}.\] Thus, one additional bidder lets a reserve-free second-price auction for the grand bundle match the optimal revenue from one bidder, regardless of the number of items. This gives a Bulow–Klemperer theorem for bundling and resolves an open question of Beyhaghi, Cai, Feng, Li, and Weinberg (EC 2025) on whether constant competition suffices for bundling. The factor-$2$ bundling approximation is also tight.

For general independent items, we prove \[\mathrm{REV} \le \mathrm{SREV} + \frac{\pi^2}{6}\mathrm{BREV} \le 2.645 \max\{\mathrm{SREV}, \mathrm{BREV}\}.\] This improves the previous approximation factors of $6$ from Babaioff, Immorlica, Lucier, and Weinberg (FOCS 2014, JACM 2020) and $5.2$ from Ma and Simchi-Levi (AISTATS 2021). For the lower bound, we construct an instance in which pricing three disjoint bundles earns more than $2.11\max\{\mathrm{SREV},\mathrm{BREV}\}$, improving the previous lower bound of $2$ from Rubinstein (ITCS 2016). Together, these results narrow the tight approximation ratio to the range $[2.11, 2.645]$.






\newpage
\pagenumbering{arabic}
\end{abstract}

\section{Introduction}

Auction design is a fundamental problem in economics and computer
science~\cite{Daskalakis2015}. For one item and one bidder, an optimal
mechanism is a posted price~\cite{Myerson1981}. Selling several items
is substantially harder: even with a single additive bidder whose item values
are independent, an optimal mechanism can require a continuum of
lotteries~\cite{DaskalakisDeckelbaumTzamos2013}, and exact optimization is
computationally intractable even when each item has only two possible
values~\cite{DaskalakisDeckelbaumTzamos2014}. These difficulties motivate
the study of \emph{simple versus optimal} in auction theory, which asks
how well simple mechanisms can approximate optimal revenue and has
received extensive attention~\cite{HartNisan2017,LiYao2013,BabaioffEtAl2020,MaSimchiLevi2021}.

Two particularly simple selling rules are to sell each item
separately and to sell the grand bundle. Separate selling
adapts prices to the demand for individual items. Bundling pools their
values and can make the bidder's total willingness to pay more
predictable. Either rule can lose a large fraction of optimal revenue
on its own, but the better of the two always yields a constant fraction of the optimal revenue
when values are independent~\cite{BabaioffEtAl2020}.
This raises a quantitative question: how much revenue must a seller
give up to use these simple prices?

This comparison was initiated by Hart and Nisan~\cite{HartNisan2017}. Li and Yao~\cite{LiYao2013} introduced the core--tail
decomposition, which separates the contribution of ordinary values
from rare high values. Building on this approach, \cite{BabaioffEtAl2020} proved that the better of separate selling
and bundling gives a factor-$6$ approximation for independent items.
Ma and Simchi-Levi~\cite{MaSimchiLevi2021} improved the factor to $5.2$,
leaving a gap to the best lower bound of $2$, due to
Rubinstein~\cite{Rubinstein2016}.\footnote{As we were finalizing the manuscript, we became aware of the recent concurrent work~\cite{CaiEtAl2026,GiannakopoulosHahn2026} on the same topic. See \cref{subsec:related-work} for the comparison regarding concurrent work.}


Another way of measuring the relative strength of a simple auction rule is through \emph{enhanced competition}. This line of work evaluates an auction format through its \emph{competition complexity}: the number of extra bidders it needs to match the original optimal revenue. The seminal work of Bulow and Klemperer~\cite{BulowKlemperer1996} showed that, for one item and identically distributed regular\footnote{Regularity is a common assumption in auction theory. A value distribution is regular if its revenue curve is concave, where the curve gives posted-price revenue as a function of the probability of sale. See~\cref{sec:preliminaries} for the formal definition.} bidder values, a second-price auction with one additional bidder earns at least the optimal auction's revenue with the original population. For separate selling, the guarantee worsens immediately with multiple items. For selling $m$ items to a single additive bidder, optimal separate-item auctions require $\Theta(\log m)$ i.i.d. bidders to match the optimal one-bidder revenue~\cite{BeyhaghiWeinberg2019,FeldmanFriedlerRubinstein2018}, even if the items are regular. Bundling offers a different way to use competition and raises the question of whether bundling can recover the constant competition guarantee in this multidimensional setting.

In this work, we give tight competition complexities and improved approximation guarantees
for separate selling and bundling with a single additive bidder. We show that, perhaps surprisingly, for independent regular items, one additional bidder lets a reserve-free auction for the grand bundle match the optimal multidimensional mechanism. Beyond regularity, we also substantially narrow the approximation gap.

\subsection{Our Results}

For one additive bidder with independent nonnegative item values, write
$D$ for their product distribution, $\REV(D)$ for optimal truthful
revenue, and $\SREV(D)$ and $\BREV(D)$ for the revenues from separate
selling and grand bundling. Let $\BSPA_2(D)$ denote second-price revenue
from the grand bundle with $2$ independent bidders drawn from $D$.
Throughout, the distribution of each item's value $V_j$ has finite monopoly revenue, i.e., $\sup_{p \ge 0} \set{p\Pr[V_j \ge p]} < \infty$.

\paragraph{Bundling under Regularity.}
First, we prove tight competition and approximation guarantees for
bundling under regularity. This standard economic assumption expresses
diminishing marginal revenue as the probability of sale increases.
Under regularity, higher values correspond to higher marginal revenue,
so allocating by marginal revenue is consistent with truthful
bidding~\cite{Myerson1981}. Regularity also underlies the
Bulow--Klemperer comparison between competition and auction
optimization~\cite{BulowKlemperer1996}.

\begin{theorem}[Regular items]
\label{thm:regular-bundling-two}
If every marginal is regular, then $\REV(D)\le\BSPA_2(D)\le2\BREV(D)$. The approximation factor $2$ is tight.
\end{theorem}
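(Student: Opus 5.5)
The bound $\BSPA_2(D)\le 2\BREV(D)$ and its tightness are essentially one-dimensional, so I would handle them first; the main inequality $\REV(D)\le\BSPA_2(D)$ comes last. Throughout, let $S=\sum_j V_j$ and let $S'$ be an independent copy, so that
\[
\BSPA_2(D)=\E[\min(S,S')]=\int_0^\infty \Pr[S\ge t]^2\,dt .
\]
Write $R=\BREV(D)=\sup_t t\Pr[S\ge t]$. Then $\Pr[S\ge t]\le\min\{1,R/t\}$ for every $t$, and therefore
\[
\BSPA_2(D)\le \int_0^R 1\,dt+\int_R^\infty \frac{R^2}{t^2}\,dt = 2R .
\]
This step uses no regularity. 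For tightness I take $m=1$ and the truncated equal-revenue distribution on $[1,H]$, with $\Pr[V\ge t]=1/t$ and an atom at $H$. It is regular, has $\BREV=\REV=1$, and gives $\E[\min(V,V')]=2-1/H\to2$.

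For $\REV(D)\le\BSPA_2(D)$, the first sanity check is the separate-selling part. Pointwise, $\min(S,S')\ge\sum_j\min(V_j,V_j')$. By the single-item Bulow--Klemperer theorem for each regular $V_j$, this gives $\BSPA_2\ge\sum_j\E[\min(V_j,V_j')]\ge\SREV$. This alone is too lossy, since $\REV$ can exceed $\SREV$. It does suggest the template, though: find an upper bound on $\REV$ whose terms are each dominated by pieces of $\min(S,S')$.

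I would use the Cai--Devanur--Weinberg duality benchmark for one additive bidder with independent items, in its core--tail form. With an appropriate flow (item $j$ is "favorite" when $V_j$ is the largest quantile-adjusted value), it gives
\[
\REV(D)\le \E\Big[\sum_j \bar\varphi_j(V_j)^+\,\one\{j\text{ favorite}\}\Big]+\E\Big[\sum_j V_j\,\one\{j\text{ not favorite}\}\Big].
\]
Under regularity the ironed virtual value is the plain virtual value $\varphi_j$. I would then aim to show that the right-hand side is at most $\E[\min(S,S')]$.

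The coupling I would try is as follows. Each non-favorite value $V_j$ of bidder $1$ should be charged against the bundle value of bidder $2$: since $S'$ is a sum of independent values, it stochastically dominates the non-favorite part whenever that part is "typical". The favorite's virtual value $\varphi_j(V_j)^+$ should be charged via the identity $\E[\varphi_j(V_j)\,\one\{V_j\ge p\}]=p\Pr[V_j\ge p]$. One can then set the threshold $p$ equal to the other bidder's competing bundle value, exactly as in the one-item proof of Bulow--Klemperer, where $\E[\max(\varphi(V),\varphi(V'))^+]\le\E[\min(V,V')]$. Concretely, I would try to prove the pointwise-in-expectation inequality
\[
\E\big[\min(S,S')\big]\ \ge\ \E\Big[\max_j \varphi_j(V_j)^+ + \big(S-\max_j V_j\big)\Big]
\]
by symmetrizing over the two bidders and conditioning on which bidder holds the larger favorite.

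This last step is the main obstacle. The favorite and non-favorite contributions must be paid for simultaneously by a single $\min(S,S')$. Making this work likely requires choosing the flow or favorite rule adapted to regularity, for instance defining the favorite by the largest virtual value, so that the Myerson charging argument and the additive tail bound align without losing a constant factor.
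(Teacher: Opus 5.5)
Your proof of $\BSPA_2(D)\le2\BREV(D)$ is correct. It uses $\E[\min\{S,S'\}]=\int_0^\infty\Pr[S\ge t]^2\,dt$ together with the tail bound $\min\{1,R/t\}$, and it is a fine direct alternative to the paper's bidder-sampling lemma. The other two parts have genuine gaps.

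\textbf{The main inequality.} You leave $\REV(D)\le\BSPA_2(D)$ unproved, and the intermediate inequality you aim for is false. Take three i.i.d.\ $\mathrm{Unif}[0,1]$ items, which are regular with $\varphi(v)=2v-1$. Then $\E[S]=3/2$, $\E[\max_jV_j]=3/4$ and $\E[(2\max_jV_j-1)^+]=17/32$. So your right-hand side is $41/32\approx1.281$, while $\E[\min\{S,S'\}]=3/2-G(S)=1021/840\approx1.215$.

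No choice of flow or favorite rule can repair this. With values in $[0,1]$, any benchmark of the form ``positive virtual value of the favorite plus full value of the non-favorites'' is at least $\E[S]-1$. By contrast, $\BSPA_2(D)=\E[S]-G(S)$ with $G(S)=\Theta(\sqrt m)$ for i.i.d.\ items. Relaxing $x_j(V)\le1$ on non-favorite items throws away exactly the unallocated value that must pay for the spread of the bundle value.

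The missing idea is to keep the mechanism in the comparison. The paper uses the residual $w(v)=\sum_jv_j-u(v)=p(v)+\ell(v)\le S$. This residual is concave, coordinatewise nondecreasing and $1$-Lipschitz, with $\partial_j^+w\le1-x_j$. The paper bounds $G(w(V))$ by the sum of one-coordinate conditional variations, via the triangle inequality along coordinate replacements. Regularity enters only in the scalar bound $G(g(X))\le\E[Xg'_+(X)]$. Together these give $G(w(V))\le\E[\ell(V)]$, and hence $\E[p(V)]\le\E[w(V)]-G(w(V))=\E[\min\{w(V),w(V')\}]\le\E[\min\{S,S'\}]$.

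\textbf{Tightness.} Your example proves the wrong statement. With $m=1$ you have $\REV=\BREV$, so it only shows that the intermediate bound $\BSPA_2\le2\BREV$ is tight. The theorem asserts that the approximation factor in $\REV(D)\le2\BREV(D)$ is tight. That requires $\REV/\BREV\to2$, and hence at least two items. The paper takes independent $X_T\sim\mathrm{Unif}(T/(T+1),1)$ and $Y$ with $\Pr[Y\ge y]=1/(1+y)$. Separate prices earn $2T/(T+1)$, while every bundle price $p\ge1$ earns at most $p\Pr[Y\ge p-1]=1$, and prices below $1$ earn less than $1$.
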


For this setting with regular items, \cite{BeyhaghiEtAl2025} established
$\REV(D)\le48\BSPA_2(D)$ and $\REV(D)\le18\BREV(D)$.
They also gave a competition-complexity bound of $m$ for BSPA when
$m\in\{2,3\}$, and asked whether constant or sublogarithmic competition
suffices for an arbitrary number of regular items.
\Cref{thm:regular-bundling-two} answers this question in a very strong sense: one additional
bidder suffices, regardless of the number of items. It also strengthens the guarantee in~\cite{FeldmanFriedlerRubinstein2018}, which uses $O_\varepsilon(1)$ bidders to recover a $(1-\varepsilon)$ fraction of $\REV(D)$.

Recall that the classical Bulow--Klemperer theorem~\cite{BulowKlemperer1996} shows that, for one regular item and any number of i.i.d. bidders, one additional bidder lets a second-price auction earn at least as much revenue as the optimal auction for the original population.
\Cref{thm:regular-bundling-two} gives a complementary guarantee: one additional bidder suffices for any number $m$ of independent regular items with one original bidder.\footnote{This does not follow by applying Bulow--Klemperer to the grand bundle. In general, a sum of regular values need not be regular  (see, e.g., an example \href{https://yangpliu.github.io/repo/regular-distributions-not-closed-under-convolution/paper.pdf}{here}); even when it is, the theorem compares $\BSPA_2(D)$ with $\BREV(D)$, rather than with $\REV(D)$.} As in the classical theorem, the auction is prior-independent: $\BSPA_2$ sets no reserve and requires no knowledge of $D$. Thus, one additional bidder is worth at least as much as knowing $D$ and optimizing over all multidimensional mechanisms.


The proof also quantifies the loss from irregularity. For item $j$'s distribution $F_j$, let $\Ical(F_j)$ be the area between item $j$'s revenue curve and its least concave majorant. Recall that a distribution is regular if its revenue curve is concave. Thus, $\Ical(F_j)$ can serve as a definition of \emph{distance to regularity}, which may be of independent interest. For general distributions, we show $\REV(D)\le\BSPA_2(D)+2\sum_j\Ical(F_j)$. As a corollary, this gives a simple proof of $\REV(D)\le2\BREV(D)+\SREV(D)$ for arbitrary independent items (\cref{cor:simple-three}), already improving the previous approximation factor of $5.2$.

\paragraph{Approximation Using $\SREV$ and $\BREV$.}
Working directly with general distributions, we then refine the argument to obtain an approximation factor of $1+\pi^2/6<2.645$ for arbitrary independent item values.
This improves the factor $5.2$ from~\cite{MaSimchiLevi2021} and the
concurrent bounds of $3.52$ and $3$ in~\cite{CaiEtAl2026,GiannakopoulosHahn2026}. 


\begin{theorem}[General independent items]
\label{thm:entropy-upper-summary}
For every product distribution $D$,
\[
 \REV(D)\le\SREV(D)+\frac{\pi^2}{6}\BREV(D)
 \le\left(1+\frac{\pi^2}{6}\right)
       \max\{\SREV(D),\BREV(D)\}.
\]
\end{theorem}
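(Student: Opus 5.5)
We plan to follow the core--tail template of Li--Yao and Babaioff et al., but with a finer, multi-scale treatment of the tail in place of a single threshold. The first step is the standard reduction. Write $r_j=\REV(F_j)$ and $r=\sum_j r_j=\SREV(D)$. Using subadditivity of revenue over the events of which item has the largest ``normalized'' value, we bound $\REV(D)$ by two kinds of terms: a sum over tail contributions, on events where some item exceeds a threshold, and a core contribution $\REV$ of the truncated product distribution. We will charge the tail terms entirely to $\SREV(D)$. The natural choice is to price each item $j$ at its own threshold, scaled so that the probability of some item being in its tail sums to a controlled quantity. This part should cost exactly one copy of $\SREV$, as in the $\REV\le 2\BREV+\SREV$ corollary (\cref{cor:simple-three}).

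The second, main step is to bound the core by $\frac{\pi^2}{6}\BREV(D)$ rather than by the trivial bound $\E[\text{core sum}]$, which needs $\BREV$ plus a variance term. Instead of a single truncation, we would decompose the core values dyadically, or better, by a harmonic sequence of levels. At level $k$ we compare the revenue of the optimal mechanism on the part of the distribution where values lie in a band with the bundle price $p_k$ at which $\Pr[\sum_j V_j\ge p_k]$ is about $1-1/k$ or a similar quantile. The guiding identity is $\sum_{k\ge1}1/k^2=\pi^2/6$. We would aim for a statement of the form
\[
\REV(\text{core}) \le \sum_{k\ge 1}\frac{1}{k^2}\, \BREV(D)
\]
by showing that the $k$-th layer of optimal-mechanism revenue is at most $\BREV(D)/k^2$. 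Concretely, we would try to bound the optimal revenue via $\E[\max]$-type quantities of the bundle value, and then integrate the bundle's tail $\Pr[\sum V_j\ge p]\le \BREV/p$ against the layer weights. An alternative route, suggested by the macro $\Ent$, is an entropy/information-theoretic bound on how much revenue a mechanism can extract beyond pricing the bundle. Here $\sum 1/k^2$ would arise from summing a per-quantile loss.

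The expected obstacle is exactly this core bound. A concentration inequality for the core sum, whether Chebyshev or Bernstein, generally gives constants like $2$ or $3$, not $\pi^2/6$. So the sharper approach should exploit the structure of the revenue curve of the bundle: the function $p\mapsto p\Pr[\sum V_j\ge p]$ is bounded by $\BREV$ everywhere. We would first try to prove a pointwise inequality relating the optimal (lottery) revenue to an integral of $\min\{1,\BREV/p\}$ against the distribution of the maximal deviation, with the harmonic weights arising from independence across items. Lastly, the second inequality of \cref{thm:entropy-upper-summary} is trivial: $\SREV+\frac{\pi^2}{6}\BREV\le(1+\frac{\pi^2}{6})\max\{\SREV,\BREV\}$.
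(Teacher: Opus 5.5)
Your proposal contains no argument for the step that carries the theorem. The split it aims for (tail $=$ one $\SREV$, core $\le\frac{\pi^2}{6}\BREV$) is also not something the core--tail template delivers.

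With the thresholds that keep the tail charge of order $\SREV$ in the standard analysis, namely $\Pr[V_j>t_j]\le r_j/\SREV(D)$ (e.g.\ a common threshold $\SREV(D)$), the intermediate claim $\REV(\text{core})\le\frac{\pi^2}{6}\BREV(D)$ is false. Take items worth $2^i$ with probability $2^{-i}$ for $i=1,\ldots,n$. Then $\SREV(D)=n$, and the grand bundle earns at most $4$ by the argument in the proof of \cref{lem:competition-two-block}. But every item with $2^i<n$ never exceeds its threshold. So about $\log_2 n$ whole items sit in the core, and its revenue is at least about $\log_2 n$.

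More generally, these analyses bound the core only through its expected welfare (via the marginal-mechanism bound). Turning that welfare into bundle revenue costs a concentration term of order $\SREV(D)$, which is why such arguments stop at $6$ and $5.2$, as you anticipate. The harmonic layering does not repair this. You never define the $k$-th layer of revenue of an arbitrary lottery menu, and nothing in the plan links it to $\BREV(D)/k^2$; matching $\sum_k k^{-2}$ to $\pi^2/6$ is numerology, not a proof step. Note also that the $\SREV$ in \cref{cor:simple-three} is not a tail charge; it is the ironing correction $2\sum_j\Ical(F_j)\le\SREV(D)$.

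The missing idea is to drop the core--tail split entirely. Work with an arbitrary truthful menu through the residual $w(v)=\sum_jv_j-u(v)=p(v)+\ell(v)\le S$, where $\ell(v)=\sum_jv_j(1-x_j(v))$ is the unallocated value. The paper measures the spread of $w(V)$ by the cumulative entropy $L(Z)=-\int_0^\infty F_Z(t)\log F_Z(t)\,dt$, the integrated functional entropy of threshold indicators. This statistic tensorizes over independent coordinates, is shift invariant, and is at most the mean of a nonnegative variable.

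On a coordinate slice, truthfulness gives $w(v_{-j},z)-w(v_{-j},0)=z(1-x_j(v_{-j},z))+p_j(z)$, where $p_j$ is the payment of an induced truthful one-item mechanism, so $\E[p_j(V_j)]\le r_j$. Summing over coordinates gives $L(w(V))\le\E[\ell(V)]+\SREV(D)$. Hence $\E[p(V)]\le\SREV(D)+J(w(V))\le\SREV(D)+J(S)$, where $J(Z)=\int_0^\infty\varphi(\Pr[Z>t])\,dt$ and $\varphi(q)=q+(1-q)\log(1-q)$ is increasing. Finally, $\Pr[S>t]\le\min\{1,\BREV(D)/t\}$, and on this equal-revenue tail $J(S)\le\BREV(D)\int_0^1\frac{-\log(1-q)}{q}\,dq=\frac{\pi^2}{6}\BREV(D)$.

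So the single $\SREV$ comes from the one-item mechanisms induced on coordinate slices. The sum $\sum_kk^{-2}$ comes from the power series of $-\log(1-q)/q$ on the extremal tail, not from layers of bundle prices. Your passing mention of an entropy bound pointed the right way but was not developed. Your observation that the second inequality is immediate is correct.
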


We also improve the lower bound for the first time
since~\cite{Rubinstein2016}, showing that the natural target of
a factor of two is impossible. Together, the bounds place the worst-case
approximation factor in $[2.11,2.645]$.

\begin{theorem}[A lower bound beyond two]
\label{thm:lower-summary}
There is a product distribution $D$ with finitely many items and bounded nonnegative
item values for which posted prices for three disjoint bundles earn
more than $2.11\max\{\SREV(D),\BREV(D)\}$.
\end{theorem}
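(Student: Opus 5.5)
The plan is to build an explicit instance with two kinds of items, so that neither pure separate selling nor pure grand bundling is good, while a mechanism that bundles within groups does well. I would combine Rubinstein's factor-$2$ gadget with a third, differently scaled component. Take three disjoint groups $G_1,G_2,G_3$, each made of i.i.d.\ items from a simple distribution. A natural choice is the equal-revenue-type distribution truncated at a large value $H$, or a two-point distribution that takes a large value $H_k$ with small probability $q_k$ and $0$ otherwise. The parameters $(|G_k|,H_k,q_k)$ are set so that:
\begin{itemize}
\item within each group, the group bundle sells with revenue close to the sum of the item means;
\item the best grand-bundle price captures essentially only one group's contribution;
\item separate selling also captures essentially only one group's contribution.
\end{itemize}
Concretely, one group would be many items with concentrated values (favoring bundling), one would be a few heavy-tailed items (favoring separate selling), and a third intermediate group at a very different scale would be hurt by both rules. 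The lower-bound mechanism posts one price per group bundle, $p_k$ for $G_k$, and the bidder buys each bundle independently when its value exceeds the price. Its revenue is $\sum_k p_k\Pr[\sum_{j\in G_k}V_j\ge p_k]$, and I would show this is close to $\sum_k \E[\sum_{j\in G_k} V_j]$ in a limit where each group's value concentrates.

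The key steps, in order, are as follows. First, fix the family of instances with a few real parameters and pass to a limit (number of items $\to\infty$, tail parameter $\to\infty$). In this limit each group-bundle value converges to a tractable law: a constant, a Poisson-type compound variable, or a scaled equal-revenue tail. Second, compute the limiting three-bundle revenue. Third, upper bound $\SREV$ exactly, which is easy because it is a sum of one-dimensional monopoly revenues. Fourth, upper bound $\BREV=\sup_p p\Pr[\sum_j V_j\ge p]$. The total value is the convolution of the three group laws, and this supremum must be bounded uniformly over all $p$. Fifth, numerically optimize the parameters so that the ratio exceeds $2.11$. Then return to finite parameters and check that the strict inequality survives, using continuity and the fact that the limits are approached monotonically or with explicit error terms. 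Values can be truncated at a large cap so they are bounded without changing the revenues much.

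The main obstacle is the fourth step: rigorously certifying $\BREV$ for the convolution at a finite instance, since the bundle's tail is a mixture whose price supremum has no closed form. My first approach would be to choose the limiting group laws so that the grand-bundle value $S$ is a sum of a deterministic part, a compound Poisson part, and a part with tail $\approx c/x$. Then I would bound $\Pr[S\ge p]$ by splitting according to the number of heavy-tailed arrivals. This gives an explicit function of $p$, which can be maximized on a grid with Lipschitz control between grid points, yielding a certified value below $\text{(3-bundle revenue)}/2.11$. I would then transfer the bound to a finite instance using a coupling whose total-variation error is at most $\epsilon$, which changes $p\Pr[S\ge p]$ by at most $\epsilon\cdot\text{cap}$.
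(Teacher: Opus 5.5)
\textbf{There is a genuine gap.} Your instinct that the third group must be hurt by both simple rules is right. However, the constraints you place on the groups make this impossible, and the property that actually produces a gain beyond $2$ is missing.

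Your first requirement, that each group bundle earn close to its mean, forces each group total to be essentially supported on $\{0,p_k\}$, since $p\Pr[X\ge p]\le\E[X\mathbf 1_{\{X\ge p\}}]\le\E[X]$. For a sum of independent items, such a total takes one of two forms:
\begin{enumerate}
\item It is nearly constant. Then the grand bundle simply adds it. Concentrated values add up, so placing such a group at another scale does not hide it.
\item It is carried essentially by one active item at a time. Then separate prices already earn its mean.
\end{enumerate}
Summing over groups, your three-bundle menu would earn at most about $\BREV+\SREV\le2\max\{\SREV,\BREV\}$. (In the other direction, three comparable groups each captured by only one benchmark would give a ratio near $3$, which the paper's upper bound $1+\pi^2/6<2.645$ rules out.)

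The missing idea is the paper's \emph{seed}. This is a group whose own bundle earns strictly more than its separate revenue, and so, by the argument above, well below its mean. It must also be invisible to the grand bundle once stacked on a concentrated group of level $c\ge\SREV_{\mathrm{seed}}$, namely $\REV(c+S_{\mathrm{seed}})=c$. Add a concentrated group with total near $c$ and vanishing separate revenue, plus one rare item (worth $M$ with probability $d/M$) that brings separate revenue up to $c$. This gives ratios approaching $2+(\REV_{\mathrm{seed}}-\SREV_{\mathrm{seed}})/c$.

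This condition works at the \emph{same} scale as $c$, not through scale separation. $S_{\mathrm{seed}}$ needs an atom at zero and fast tail decay, so that every price above $c$ loses more in sale probability than it gains. Neither of your families qualifies:
\begin{enumerate}
\item For two-point items, separate selling already earns the full mean, so no menu on them beats it.
\item Truncated equal-revenue items have a positive minimum $a$, so price $c+a$ sells surely.
\end{enumerate}

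To exceed $2.11$ the paper takes $S_{\mathrm{seed}}=X+Y$. Here $X$ is zero with probability $3/4$ and otherwise $\mathrm{Exp}(1)$, and $Y$ is zero with probability $1/12$ and otherwise $\mathrm{Exp}(4)$. This total is realized by compound-Poisson splitting, so that every item carries the same mixture of both scales; separate item populations would let separate selling price each scale. The seed is sold at price $1/2$ with $c=1/4$. The paper then verifies $\REV(1/4+S_{\mathrm{seed}})=1/4$, $\REV_{\mathrm{seed}}>7/50$ and $\SREV_{\mathrm{seed}}\le 9/80+o(1)$ analytically.

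Your transfer step also fails as stated:
\begin{enumerate}
\item A sum of many nondegenerate items is at total-variation distance close to $1$ from the constant it concentrates around.
\item The rare item is within total variation $d/M$ of zero but takes value $M$, so your error $\varepsilon\cdot\mathrm{cap}$ stays of order $d$.
\end{enumerate}
The paper instead uses $\REV(X+E)\le(\sqrt{\REV(X)}+\sqrt{\E[E]})^2$, with $E$ the deviation of the concentrated group from $c$. It bounds the rare item's effect on bundle prices by a direct case analysis.
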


Finally, we determine how much competition the better of separate selling and bundling needs to recover one bidder's optimal revenue. Beyhaghi and Weinberg~\cite{BeyhaghiWeinberg2019} show that $O(\log m)$ extra bidders suffice for separate selling to recover the optimal revenue, which is tight~\cite{FeldmanFriedlerRubinstein2018}. We show that $\Omega(\log m)$ bidders are necessary even with bundling. Write $\SREV_k(D)$ and $\BREV_k(D)$ for the revenues from optimal separate-item and grand-bundle auctions with $k$ i.i.d. bidders.

\begin{theorem}[Competition for separate selling or bundling]
\label{thm:competition}
There is a universal constant $c>0$ such that, for every sufficiently
large integer $m$, there is a product distribution $D$ on $m$ items such that for every integer $1 \le k \le c \log m$,
\[
 \max\{\SREV_k(D),\BREV_k(D)\}<\REV(D).
\]
\end{theorem}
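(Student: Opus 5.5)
We construct, for large $m$, a product distribution on $m$ items and show that for every $k\le c\log m$ both $\SREV_k(D)$ and $\BREV_k(D)$ fall strictly below $\REV(D)$. The idea is to combine two separate losses. The hard instance for separate selling from~\cite{FeldmanFriedlerRubinstein2018} already forces $\SREV_k<\REV$ for $k\le c\log m$. To defeat bundling as well, we attach a small gadget under which even $k$-bidder bundling loses a constant fraction.

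\textbf{Choice of instance.} I would use i.i.d. heavy-tailed items, namely equal-revenue-type marginals truncated at a large value $H$, for instance $\Pr[V_j\ge x]=1/x$ on $[1,H]$. Here $\REV(D)\ge\BREV(D)=\Theta(m\log H)$ once $H$ is moderate compared with $m$, while $\SREV_1=m$. This is the mechanism behind the lower bound in~\cite{FeldmanFriedlerRubinstein2018}: with $k$ bidders, separate selling earns per item roughly the expected second-highest of $k$ equal-revenue draws, which is $\Theta(k)$ (or $\Theta(\log$-type$)$ after truncation). So $\SREV_k\le O(mk)$, which is below $\Theta(m\log H)$ whenever $k\le c\log m$ and we take $H=m^{\Theta(1)}$.

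\textbf{Handling bundling.} Bundling with $k$ bidders is now the danger, since $\BREV_1$ is itself of order $\REV$. I would therefore mix in a second block of items where bundling is weak but a one-bidder mechanism, for example separate selling, does well. Concretely, I would add items whose values are independent rare large spikes: value $M$ with probability $1/M$, and $0$ otherwise. Their bundle is then a sum of essentially independent Bernoulli spikes. On this block, one-bidder separate pricing earns revenue equal to the number of such items, but the grand bundle, because it is priced jointly with the first block, can extract essentially only the first block's concentrated value plus a constant share of the spikes.

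\textbf{Verification.} The key steps, in order, are as follows.
\begin{enumerate}
\item Lower-bound $\REV(D)$ by selling block~1 as a bundle and block~2 separately.
\item Upper-bound $\SREV_k$ by the Myerson $k$-bidder revenue per item. Block~1 contributes $O(mk)$, and block~2 at most its separate value.
\item Upper-bound $\BREV_k$ by the $k$-bidder optimal single-item auction on the bundle value $S=S_1+S_2$. Since $S_1$ concentrates around $m\log H$ with fluctuations of order $m\sqrt{H}$-type scale, competition adds little. The spike part, pooled with $S_1$, is monetized at most up to the expected maximum over $k$ bidders of $S_2$ above threshold, which grows only like $k\cdot(\text{single})$ while being diluted.
\end{enumerate}
Choosing the block sizes so that each simple auction misses one block's gain strictly yields the strict inequality for all $k\le c\log m$.

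\textbf{Main obstacle.} The hard step is the bound on $\BREV_k$. Competition among $k$ bidders helps bundling through the upper tail of $S$, and the spikes of block~2 create exactly such a tail. I would try first to bound $\BREV_k\le\E[\max_{i\le k}$ virtual surplus$]$ via an ironed revenue curve of $S$. The plan is to show that this is at most $\E[S_1]$ plus $(1-\Omega(1))$ times the block-2 value, plus $O(k)$ times the per-item scale. If that fails, I would instead take the fallback of choosing the block-2 item count comparable to $m\log H/\log m$, so that the additive $O(k\cdot\text{count})$ error stays below the separate-selling gain.
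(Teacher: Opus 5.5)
\paragraph{Verdict.} Your two-block architecture is the right one and matches the paper's. The first block is capped equal-revenue items whose sum concentrates at a mean far above $k$ times their separate revenue. The second is a rare block that one-bidder separate pricing monetizes but bundling should not, and $\REV(D)$ is bounded below by bundling the first block and pricing the second separately. The gap is in how you build the rare block, and it is fatal.

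\paragraph{Why identical spikes fail.} Take $n_2$ identical spikes, each worth $M$ with probability $1/M$, with $n_2\ll M$. Their sum $S_2$ is essentially $M$ times an event of probability $q:=n_2/M$, so $\REV(S_2)\approx\E[S_2]=n_2$: these spikes bundle as well as they sell separately. Pooling them with $S_1\approx\mu:=\E[S_1]$ only defeats a single posted price. A $k$-bidder auction can simply identify the bidder holding a spike. Idealize $S\in\{\mu,\mu+M\}$; this ignores the fluctuations of $S_1$ and multiple spikes, which are lower order in your regime. The two types then have virtual values $\mu+M$ and $\mu-n_2/(1-q)$. For $\mu>n_2/(1-q)$, which holds in your regime, the optimal bundle auction earns
\[
 \mu+\bigl(1-(1-q)^k\bigr)M-(1-q)^{k-1}n_2\approx\mu+(k-1)n_2,
\]
while trivially $\REV(D)\le\E[S]=\mu+n_2$. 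So $\BREV_k(D)>\REV(D)$ already at $k=3$. The bound ``$\E[S_1]+(1-\Omega(1))\cdot(\text{block-2 value})$'' you hope for is therefore false. Your own estimate ``$k\cdot(\text{single})$'' equals $kn_2$, which exceeds block 2's entire contribution $n_2$ to $\REV(D)$ for every choice of $n_2$. The fallback of tuning the block-2 count cannot repair this.

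\paragraph{The missing idea.} Put the spikes on widely separated scales, following Rubinstein: item $i$ is worth $2^i$ with probability $2^{-i}$, for $i=1,\ldots,n_2$. Each item still has revenue $1$, so the block's separate revenue is $n_2$. Their sum $Z$, however, has $\REV(Z)\le4$. For a price $p$, let $j$ be least with $p\le\sum_{i\le j}2^i$. A sale then needs an active item of index at least $j$, which has probability at most $2^{1-j}$, while $p<2^{j+1}$.

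With this gadget you need no ironed-revenue-curve analysis of the pooled $S$. Apply the scalar marginal-mechanism bound $\REV_k(X+Y)\le k\REV(X)+\E[\max_{i\le k}Y_i]$ with $X=Z$ and $Y=S_1$. Together with $\E[\max_iS_1^{(i)}]\le\E[S_1]+k\sqrt{\operatorname{Var}(S_1)}$, this gives $\BREV_k(D)\le\E[S_1]+4k+k\sqrt{\operatorname{Var}(S_1)}$.

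It remains to choose $n_2$:
\begin{itemize}
\item large enough to exceed $4k+k\sqrt{\operatorname{Var}(S_1)}$ plus the concentration loss from pricing block~1 below its mean;
\item small enough that $k$ times the total one-bidder separate revenue stays below $\REV(D)$.
\end{itemize}
The paper takes truncation $T=e^{16k-1}$, $n=\lceil k^2T\rceil$ concentrated items and $4n$ spikes. That is $e^{O(k)}$ items, hence $k=\Theta(\log m)$, and monotonicity of both simple revenues in the number of bidders covers smaller $k$.

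\paragraph{A smaller slip.} Since $\operatorname{Var}(S_1)\le2mH$, the fluctuations of block~1 are of order $\sqrt{mH}$, not $m\sqrt{H}$. You need $H$ well below $m$ (by roughly a $k^2$ factor) for block~1 to concentrate.
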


Interestingly, \cite{FeldmanFriedlerRubinstein2018} show that for any $\varepsilon > 0$, a constant number of extra bidders, depending only on $\varepsilon$, suffices for $\max\{\SREV_k, \BREV_k\}$ to recover a $(1-\varepsilon)$ fraction of the optimal revenue, regardless of $m$. On the other hand, the same paper shows that separate selling requires $\Omega(\log m)$ extra bidders even to approximate. Thus, allowing bundling alongside separate selling removes the dependence on $m$ for every fixed approximation, but falls short for exact recovery.

\subsection{Our Techniques}
\label{subsec:our-technique}
Here, we give an overview of the techniques behind~\cref{thm:regular-bundling-two} and \cref{thm:entropy-upper-summary}. Our proofs take a very different approach from prior and concurrent work~\cite{CaiDevanurWeinberg2016,CaiEtAl2026,LiYao2013,BabaioffEtAl2020,GiannakopoulosHahn2026}, using no duality arguments or core--tail decompositions.

\paragraph{Setting up.}
Let $V=(V_1,\ldots,V_m)$ be the random value vector, with realization $v=(v_1,\ldots,v_m)$, and write $S:=\sum_j V_j$ for the total value. Write $r_j:=\sup_{t\ge0}t\Pr[V_j\ge t]$ for item $j$'s monopoly revenue, so $\SREV=\sum_jr_j$. In this overview, suppose that the item values are bounded. Primes on random variables denote independent copies.

For a scalar random variable $Z$, write $G(Z):=\E[|Z-Z'|]/2$ for its variation. We would like to compare the value that different mechanisms leave uncollected. A second-price auction for the grand bundle with two bidders receives the smaller of two independent bundle values, so its revenue is $\BSPA_2=\E[\min\{S,S'\}]$, and the gap to $\E[S]$ is $G(S)$.

A truthful mechanism instead leaves value uncollected through bidder utility and unallocated items. Fix a truthful menu with nonnegative prices. For realized value $v$, write $x_j(v)\in[0,1]$ for the probability of allocating item $j$, $p(v)$ for the payment, and $u(v):=\sum_j v_jx_j(v)-p(v)$ for the bidder's utility. Let $\ell(v):=\sum_j v_j(1-x_j(v))$ be the unallocated value. Since $S=u(V)+p(V)+\ell(V)$, the menu earns $\E[S]-\E[u(V)]-\E[\ell(V)]$. Thus $\REV\le\BSPA_2$ says that every truthful menu leaves at least $G(S)$ uncollected.

We cannot charge all variation in $S$ to unallocated value. For example, if every realized type purchases the full bundle at a fixed price, all fluctuations in total value become bidder utility, while $\ell(V)=0$. This motivates subtracting utility first and considering the remainder $w(v):=\sum_j v_j-u(v)$. It is the payment plus the value left unallocated, and it is constant in this example. In general,
\[
 w(V)=p(V)+\ell(V),\qquad 0\le w(V)\le S.
\]
It suffices to prove that $G(w(V))\le\E[\ell(V)]$; this is the only step that needs regularity. Once this holds, the identity $w=p+\ell$ gives
\[
 \E[p(V)]=\E[w(V)]-\E[\ell(V)]
 \le\E[w(V)]-G(w(V))
 =\E[\min\{w(V),w(V')\}].
\]
Since $w(V)\le S$, the last expectation is at most $\E[\min\{S,S'\}]=\BSPA_2$.

To bound this variation, we express it through thresholds. For nonnegative $Z$ with cumulative distribution function $F_Z$, the variance of $\one_{\{Z\le t\}}$ is half the probability that two independent draws lie on opposite sides of $t$. Integrating over thresholds gives
\[
 G(Z)=\int_0^\infty\operatorname{Var}(\one_{\{Z\le t\}})\,dt
     =\int_0^\infty F_Z(t)(1-F_Z(t))\,dt.
\]

\paragraph{Controlling the remaining variation.}
Truthfulness connects changes in $w$ to unallocated value. If item $j$'s value increases by $\delta$, the bidder can keep the previously selected option and gain $\delta x_j(v)$ in utility. Therefore, $w$ increases by at most $\delta(1-x_j(v))$, giving $\partial_j^+w(v)\le1-x_j(v)$. Truthfulness also makes $w$ concave, nondecreasing, and $1$-Lipschitz in each coordinate (\cref{lem:menu-residual}). Fixing the other values thus gives a nondecreasing, concave function $g(z):=w(v_{-j},z)$ whose right derivative satisfies $g'_+(z)\le1-x_j(v_{-j},z)$.

For this overview, suppose a regular value $X\sim F$ has a positive density $f$ on its support. Concavity of the revenue curve gives $F(s)(1-F(s))\le sf(s)$ almost everywhere. Applying the threshold representation to $g(X)$ therefore gives
\[
 G(g(X))=\int_0^\infty F(s)(1-F(s))g'_+(s)\,ds
 \le\int_0^\infty sf(s)g'_+(s)\,ds=\E[Xg'_+(X)].
\]
\Cref{lem:regular-scalar} proves this bound for all regular distributions. Applying it with $X=V_j$, we obtain
\[
 G(g(V_j))\le\E_{V_j}[V_jg'_+(V_j)]
 \le\E_{V_j}[V_j(1-x_j(v_{-j},V_j))].
\]
Thus each one-item variation is covered by the corresponding expected unallocated value. Independence then lets us combine these bounds. Comparing two independent value vectors by replacing one coordinate at a time bounds the full variation by the sum of the conditional one-item variations (\cref{clm:coordinate-replacement}). After averaging over the other coordinates, these bounds sum to $\E[\sum_jV_j(1-x_j(V))]=\E[\ell(V)]$. Hence $G(w(V))\le\E[\ell(V)]$, proving $\REV\le\BSPA_2$. The scalar bound $\BSPA_2\le2\BREV$ from \cref{lem:bundle-tail} completes the regularity guarantee in~\cref{thm:regular-bundling-two}.

\paragraph{General distributions without regularity.}
For general marginals, unallocated value alone need not cover variation. We couple a scalar value $X\sim F$ with a larger value $\overline X$ whose revenue curve is the least concave majorant of $F$'s revenue curve. The regular scalar bound applies to $\overline X$. Under this coupling, $\E[\min\{\overline X,\overline X'\}]-\E[\min\{X,X'\}]=2\Ical(F)$, and \cref{lem:irreg-area-control} bounds the shortfall in the scalar inequality by this increase. The same coordinate comparison then gives $G(w(V))\le\E[\ell(V)]+2\sum_j\Ical(F_j)$, and consequently $\REV\le\BSPA_2+2\sum_j\Ical(F_j)$. Since $2\Ical(F_j)\le r_j$ (\cref{lem:area}), this also yields $\REV\le\SREV+2\BREV$.

\paragraph{Strengthening the variation bound.}
Recall that $w(V)=p(V)+\ell(V)$. The preceding coordinate comparison gives $G(w(V))\le\E[\ell(V)]+\SREV$, so
\[
 \E[p(V)]\le\SREV+\E[w(V)]-G(w(V)).
\]
To improve the revenue bound, we seek a stronger lower bound on expected unallocated value. Specifically, we look for a variation statistic $L\ge G$ for which $L(w(V))\le\E[\ell(V)]+\SREV$ still holds.

To this end, recall the threshold representation of $G$. We replace the variance of each threshold indicator by its functional entropy\footnote{Note that this is different from Shannon entropy, although they are indeed closely related. We refer the reader to \cite[Chapter 4]{BoucheronLugosiMassart2013} for more discussions on entropies.}: for bounded $Y\ge0$, define $\Ent(Y):=\E[Y\log Y]-\E[Y]\log\E[Y]$, with $0\log0:=0$ and natural logarithms. For an indicator, the first term vanishes, so $\Ent(\one_{\{Z\le t\}})=-F_Z(t)\log F_Z(t)$. Integrating gives the \emph{cumulative entropy} introduced by Di Crescenzo and Longobardi~\cite{DiCrescenzo2009}:
\[
 L(Z):=-\int_0^\infty F_Z(t)\log F_Z(t)\,dt.
\]
The elementary inequalities $z(1-z)\le-z\log z\le1-z$ for $0\le z\le1$ imply $G(Z)\le L(Z)\le\E[Z]$. Thus $L$ provides the larger deduction we seek. It remains to show that $L(w(V))\le\E[\ell(V)]+\SREV$.

Independence again lets us bound the full variation by adding the one-item variations, with the other values fixed and then averaged. For $L$, this follows from standard entropy tensorization~\cite{Chafai2004} (\cref{lem:entropy-tensorization}).
Fix the other values and write $g(z):=w(v_{-j},z)$. Since $L$ is unchanged by constant shifts and is bounded by the mean, it is enough to bound the mean of the nonnegative increase $g(V_j)-g(0)$. Truthfulness lets us express this increase as the unallocated value of item $j$ plus a payment that could be earned by selling that item alone. The expected payment is at most $r_j$, giving
\[
 L(g(V_j))\le\E_{V_j}[V_j(1-x_j(v_{-j},V_j))]+r_j.
\]
Summing over items gives $L(w(V))\le\E[\ell(V)]+\SREV$. Defining $J(Z):=\E[Z]-L(Z)$, the identity $w=p+\ell$ then gives the comparison in~\cref{thm:entropy-comparison}: $\E[p(V)]\le\SREV+J(w(V))$.

It remains to remove the dependence on the mechanism. We can write $J$ directly in terms of tail probabilities:
\[
 J(Z)=\int_0^\infty\varphi(\Pr[Z>t])\,dt,
 \qquad \varphi(q):=q+(1-q)\log(1-q).
\]
This integral also defines $J$ for unbounded values. Since $\varphi'(q)=-\log(1-q)\ge0$ for $0\le q<1$, $J$ increases with the tail probabilities. Thus $w(V)\le S$ gives $J(w(V))\le J(S)$.

To bound this remaining term by bundle revenue, consider a scalar value $X$ with monopoly revenue $r$. Every price $t>0$ satisfies $t\Pr[X\ge t]\le r$, so its tail is at most $\min\{1,r/t\}$. The equal-revenue distribution attains this tail bound everywhere: every price $t\ge r$ earns $r$. It therefore maximizes $J$ among values with monopoly revenue $r$. Substituting this tail and integrating by parts gives
\[
 J(X)\le r\int_0^1\frac{-\log(1-q)}q\,dq=\frac{\pi^2}{6}r
\]
as shown in \cref{lem:entropy-bundle-cost}. The two-sample minimum instead uses the larger integrand $q^2\ge\varphi(q)$, giving $2r$ on the same tail. This is where entropy improves the coefficient. Applying the bound for $J$ to $S$, whose monopoly revenue is $\BREV$, gives
\[
 \REV\le\SREV+J(S)\le\SREV+\frac{\pi^2}{6}\BREV,
\]
proving~\cref{thm:entropy-upper-summary}.

\subsection{Related Work}
\label{subsec:related-work}
Rochet~\cite{Rochet1987} connects multidimensional incentive constraints
to convexity and cyclic monotonicity. Manelli and
Vincent~\cite{ManelliVincent2006} study when bundling is optimal;
\cite{DaskalakisDeckelbaumTzamos2013,DaskalakisDeckelbaumTzamos2017}
develop optimal-transport and strong-duality approaches to optimal
mechanisms.
Daskalakis~\cite{Daskalakis2015} surveys this literature.

Hart and Nisan~\cite{HartNisan2017} initiated the systematic comparison
of separate selling and bundling with optimal revenue. Li and
Yao~\cite{LiYao2013} introduced the core--tail decomposition and a
constant bundling approximation for i.i.d. items. For general
independent items, \cite{BabaioffEtAl2020} proved an approximation
factor of $6$, improved to $5.2$ by Ma and
Simchi-Levi~\cite{MaSimchiLevi2021}. 

\cite{EdenEtAl2017} initiated multi-item competition
complexity; \cite{BeyhaghiWeinberg2019} and
\cite{DerakhshanEtAl2024} give tight asymptotic
bounds for separate second-price auctions under regularity.
\cite{FeldmanFriedlerRubinstein2018}
combine competition and approximation: for each $\varepsilon>0$, $O_\varepsilon(1)$ i.i.d. bidders suffice to recover $(1-\varepsilon)$ of
one bidder's optimal revenue using separate selling or bundling.
For regular items, their grand-bundle second-price auction alone
suffices. In addition to the results mentioned earlier, \cite{BeyhaghiEtAl2025} show that for MHR (monotone hazard rate) items, a grand-bundle second-price auction with four bidders earns at least the one-bidder welfare $\mathbb{E}[S]$, and that with two or more original bidders, bundling needs a number of additional bidders exponential in~$m$.

\paragraph{Concurrent Work.}
While finalizing this manuscript, after obtaining our main mathematical results,
we became aware of the independent concurrent
papers~\cite{CaiEtAl2026,GiannakopoulosHahn2026}.
\cite{CaiEtAl2026} obtain a factor of
$3.52$ using the duality framework from~\cite{CaiDevanurWeinberg2016} and capped
second moments. Giannakopoulos and Hahn~\cite{GiannakopoulosHahn2026}
obtain a factor of $3$ through mathematical programming and
simple-revenue inequalities. 
Although we also obtain a factor of $3$ as an intermediate result, our techniques differ significantly. Our proof is based on bundle competition under regularity and a
scalar correction for irregularity. The entropy refinement strengthens the variation bound and gives the stronger factor $1+\pi^2/6$.


\section{Preliminaries}
\label{sec:preliminaries}

A seller has $m$ items and faces one additive bidder with value vector
$V:=(V_1,\ldots,V_m)$ drawn from $D:=\prod_{j=1}^m F_j$.
The coordinates are independent and nonnegative. An allocation
$x\in[0,1]^m$ gives the bidder expected value $v\cdot x$.
Each coordinate is the probability of receiving the corresponding item,
so allocations may be fractional when the mechanism randomizes. A
mechanism specifies an allocation $x(v)$ and
an expected payment $p(v)$, giving utility $u(v):=v\cdot x(v)-p(v)$.
Truthfulness requires $u(v)\ge v\cdot x(\widehat v)-p(\widehat v)$ for every
report $\widehat v$, and individual rationality requires $u(v)\ge0$.
Equivalently, the bidder chooses a utility-maximizing option from a menu
containing the outside option $(0,0)$.

We write $\REV(D)$ for the supremum of expected payments over truthful,
individually rational mechanisms, including menus of lotteries. Without loss of generality, we may
restrict attention to nonnegative prices. We take menus to be closed, so that
utility-maximizing choices exist for bounded types.

\paragraph{Simple Auctions and Revenue Benchmarks.}
For a nonnegative scalar value $X\sim F$, the optimal one-bidder revenue
is the best posted-price revenue,
$\REV(F)=\REV(X):=\sup_{t\ge0}t\Pr[X\ge t]$.
Throughout, each marginal has finite revenue $r_j:=\REV(F_j)$.
Selling the items separately earns
$\SREV(D):=\sum_{j=1}^m r_j$.
Write $S:=\sum_{j=1}^m V_j$ for the value of the grand bundle.
Selling this bundle at a single price earns
$\BREV(D):=\sup_{t\ge0}t\Pr[S\ge t]$.

With $k$ bidders, their value vectors are independent draws from $D$.
For a scalar distribution $F$, let $\REV_k(F)$ denote the optimal
single-item auction revenue with $k$ independent draws from $F$.
Selling the items through their optimal single-item auctions earns
$\SREV_k(D):=\sum_j\REV_k(F_j)$.
Auctioning the entire bundle as one object earns
$\BREV_k(D):=\REV_k(S)$, where we also use $\REV_k$ for the distribution
of a scalar random variable. Thus $\SREV_1=\SREV$ and $\BREV_1=\BREV$.

For $k\ge2$, $\BSPA_k(D)$ denotes grand-bundle second-price revenue
without a reserve: the expected second-highest value among $k$ independent
bundle values. The optimal bundle auction defining $\BREV_k$ may use
reserves and
ironing tailored to the distribution.
Competition complexity usually counts the additional bidders needed
to match an original population's optimal revenue. Our subscripts count the total number of bidders; the original benchmark in this
paper has one bidder.

The following standard comparisons relate these benchmarks. The last
inequality is the standard bidder-sampling bound
\cite[Corollary~4.8]{RoughgardenTalgamCohenYan2012}.
We include a short direct proof in Appendix~\ref{app:basic-revenue-bounds} for
completeness.

\begin{lemma}
\label{lem:bundle-tail}
For every product distribution in this model and every integer $k\ge2$,
\[
 \BREV(D)\le m\SREV(D)<\infty,
 \qquad \BSPA_k(D)\le\BREV_k(D)\le k\BREV(D).
\]
\end{lemma}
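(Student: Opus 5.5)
We prove the four inequalities separately, each by an elementary argument on scalar revenue curves.

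\emph{Step 1: $\BREV(D)\le m\SREV(D)$.} Fix a price $t>0$. If $S\ge t$, then some coordinate satisfies $V_j\ge t/m$. A union bound gives
\[
 t\Pr[S\ge t]\le\sum_j t\Pr[V_j\ge t/m]=m\sum_j (t/m)\Pr[V_j\ge t/m]\le m\sum_j r_j .
\]
Taking the supremum over $t$ yields the claim. Finiteness follows because each $r_j<\infty$ by the standing assumption and $m$ is finite.

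\emph{Step 2: $\BSPA_k(D)\le\BREV_k(D)$.} This is definitional. The reserve-free second-price auction on the grand bundle, with $k$ i.i.d.\ bundle values $S^{(1)},\dots,S^{(k)}$, is a truthful, individually rational single-item auction for the scalar distribution of $S$. Hence its revenue is at most the optimum $\REV_k(S)=\BREV_k(D)$.

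\emph{Step 3: $\BREV_k(D)\le k\BREV(D)$.} This is the main step. We prove it for an arbitrary nonnegative scalar $X$ with $r:=\REV(X)<\infty$, namely $\REV_k(X)\le k\,r$. The plan is to decompose any truthful, individually rational $k$-bidder auction bidder by bidder. Fix bidder $i$ and condition on the others' values $x_{-i}$. Bidder $i$ then faces a one-dimensional truthful mechanism: a monotone allocation $a(\cdot)\in[0,1]$ with a Myerson payment. Such a mechanism is a randomization over posted prices, because any monotone $a$ is a mixture of threshold allocations $\one\{x\ge\theta\}$ and the payment is linear in the allocation. Each posted price earns at most $r$ against $X$, so bidder $i$'s expected payment, conditional on $x_{-i}$, is at most $r$. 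Averaging over $x_{-i}$ and summing over the $k$ bidders gives $\REV_k(X)\le k r$. Applying this with $X=S$ gives the inequality.

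The only delicate point in Step 3 is justifying the mixture-of-posted-prices representation when $X$ may have atoms or unbounded support. We handle it with the standard identity $p(x)=x a(x)-\int_0^x a(z)\,dz$, valid up to a nonpositive constant by individual rationality, which we may drop when bounding revenue from above. Writing $a(x)=\int \one\{x\ge\theta\}\,d\mu(\theta)$ for a sub-probability measure $\mu$ (the distribution function of $a$), we get $p(x)\le\int \theta\one\{x\ge\theta\}\,d\mu(\theta)$. Fubini then gives
\[
 \E[p(X)]\le\int\theta\Pr[X\ge\theta]\,d\mu(\theta)\le r .
\]
At atoms, the choice of tie-breaking in $a$ only affects which side of the threshold is included, and the bound $\theta\Pr[X\ge\theta]\le r$ covers the relevant case. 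Alternatively, one can simply cite \cite[Corollary~4.8]{RoughgardenTalgamCohenYan2012} as the paper does.
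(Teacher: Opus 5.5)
Your proof is correct and matches the paper's. Steps 1 and 2 are identical. Your Step 3 is the same per-bidder reduction as the paper's bidder-sampling argument: the paper places the real bidder at a uniformly random position and simulates the other $k-1$ bundle values, which amounts to averaging your conditional bounds over $i$ and $x_{-i}$. Your mixture-of-thresholds computation also proves a fact the paper takes for granted, namely that a truthful one-bidder mechanism earns at most $\sup_t t\Pr[X\ge t]$.

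One small caveat: conditioning on $x_{-i}$ presumes dominant-strategy truthfulness and ex-post individual rationality. For Bayesian-truthful auctions, run the same argument on bidder $i$'s interim allocation and payment, which is exactly what the paper's simulation produces.
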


\paragraph{Regularity.}
For a nonnegative scalar value $X\sim F$, define its upper-tail quantile by
\[
 v_F(q):=\sup\{t\ge0:\Pr[X\ge t]\ge q\},\qquad 0<q<1.
\]
A distribution is \emph{regular} if the revenue curve
$R_F(q):= qv_F(q)$ is concave~\cite{Myerson1981}. For a distribution with positive density $f$ on its
support interval, this is equivalent to monotonicity of its virtual value
$\phi_F(t):=t-(1-F(t))/f(t)$.

We use $\E$ for expectation and $\one_A$ for the indicator of an event $A$. For any random variable $Z$, we use $Z'$ to denote an independent copy.

\section{Competition Complexity of Bundling under Regularity}
\label{sec:regular-bundling-tight}

In this section, we prove~\cref{thm:regular-bundling-two}. We first
establish the comparison with two-bidder bundling and give a tight
two-item example. We then extend the comparison beyond regularity and
interpret the resulting correction through revenue curves.

\subsection{The Regular Comparison}
\label{subsec:regular-bundling-proof}
Recall $G(Z):=\tfrac12\E|Z-Z'|$, half the expected absolute difference
between two independent draws. For an integrable random variable $Z$, it is the
loss from retaining only the smaller draw rather than the mean:
\[
 \E[\min\{Z,Z'\}]=\E[Z]-G(Z).
\]
We will charge this variation to the value a truthful menu leaves
unallocated.

Fix a menu with nonnegative prices, and write $\ell(v):=\sum_jv_j(1-x_j(v))$ for its unallocated value. Following the roadmap in~\cref{subsec:our-technique}, we deduct the bidder's utility from total value and consider $w(v):=\sum_jv_j-u(v)$. This is payment plus
unallocated value:
\[
 w(V)=p(V)+\ell(V),\qquad 0\le w(V)\le S.
\]
We first consider menus with bounded $w$, which we obtain by adding a full-bundle offer at an arbitrarily large price $M$. Taking $M \to \infty$ extends the argument to general menus.

Our key claim is that expected unallocated value covers the variation
of $w(V)$:
\begin{equation}
\label{eq:comparison-reg-goal}
 G(w(V))\le\E[\ell(V)].
\end{equation}
Assuming this bound, let $V'$ be an independent copy of $V$. The identity $w=p+\ell$ gives
\[
 \E[p(V)]\le\E[w(V)]-G(w(V))
 =\E[\min\{w(V),w(V')\}].
\]
Since $w(V)\le S$, the last expectation is at most
$\E[\min\{S,S'\}]=\BSPA_2(D)$. The rest of this subsection proves~\eqref{eq:comparison-reg-goal}.

Truthfulness connects this subgoal to the shape of $w$. When an item's
value increases, the bidder can keep the same menu option, gaining
utility from its allocated fraction. This leaves at most the
unallocated fraction to increase $w$.

\begin{lemma}
\label{lem:menu-residual}
The function $w$ is concave, nondecreasing, and $1$-Lipschitz in each
coordinate. For every value vector $v$ and item $j$, $\partial_j^+w(v)\le1-x_j(v)$.
\end{lemma}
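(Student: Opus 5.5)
We will work directly with the menu representation of the mechanism. Let $\mathcal M$ be the closed menu of options $(y,q)$, with $y\in[0,1]^m$ an allocation and $q\ge0$ a price, and including the outside option $(0,0)$. Truthfulness gives
\[
 u(v)=\max_{(y,q)\in\mathcal M}\{v\cdot y-q\}.
\]
Substituting $\sum_jv_j=v\cdot\one$,
\[
 w(v)=v\cdot\one-u(v)=\min_{(y,q)\in\mathcal M}\{v\cdot(\one-y)+q\}.
\]
Thus $w$ is a pointwise minimum of affine functions of $v$, and every argument of the minimum is attained by the chosen option $(x(v),p(v))$. This single representation gives all the claimed properties.

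\emph{Concavity.} A minimum of affine functions is concave, so $w$ is concave. Our standing reduction keeps $w$ finite: adding the full-bundle option at price $M$ gives $w\le M$. In any case $w\le\sum_jv_j$, because $(0,0)$ lies in the menu.

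\emph{Monotonicity and the Lipschitz bound.} Every affine piece $v\mapsto v\cdot(\one-y)+q$ has $j$-th slope $1-y_j\in[0,1]$. Each piece is therefore nondecreasing and $1$-Lipschitz in every coordinate, and both properties survive taking a pointwise minimum. Concretely, for $\delta>0$,
\[
 0\le w(v+\delta e_j)-w(v)\le\delta .
\]
The lower bound holds because each piece increases. The upper bound follows by evaluating the option optimal at $v+\delta e_j$ at the point $v$.

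\emph{Derivative bound.} Fix $v$ and let $(x(v),p(v))$ be the option selected at $v$, so that
\[
 w(v)=v\cdot(\one-x(v))+p(v).
\]
At $v+\delta e_j$ this same option is still feasible in the minimum, hence
\[
 w(v+\delta e_j)\le (v+\delta e_j)\cdot(\one-x(v))+p(v)=w(v)+\delta(1-x_j(v)).
\]
Dividing by $\delta$ and letting $\delta\downarrow0$ gives $\partial_j^+w(v)\le1-x_j(v)$. The right derivative exists because $w$ is concave along the line.

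The only delicate point is bookkeeping rather than mathematics. We must check that the option reported at $v$ actually attains the minimum; this holds by truthfulness together with closedness of the menu. We must also check that $w$ is finite, so that concavity and one-sided derivatives make sense; this holds because $0\le w\le\sum_jv_j$. Ties in the choice of option cause no difficulty, since the derivative bound holds for whichever maximizer the mechanism selects.
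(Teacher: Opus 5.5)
Your proof is correct and is essentially the paper's argument: you write $w$ directly as a pointwise minimum of affine functions with slopes $1-y_j\in[0,1]$ where the paper writes $u$ as a supremum, and your derivative step (the option selected at $v$ stays available at $v+\delta e_j$) is exactly the paper's subgradient observation. One small slip is that the Lipschitz upper bound $w(v+\delta e_j)-w(v)\le\delta$ comes from evaluating the option optimal at $v$ at the point $v+\delta e_j$, as in your derivative paragraph, whereas evaluating the option optimal at $v+\delta e_j$ at the point $v$ gives the monotonicity bound $w(v+\delta e_j)\ge w(v)$ instead.
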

\begin{proof}
Write $(a,b)$ for a menu option with allocation $a\in[0,1]^m$ and
payment $b$. Truthfulness gives $u(v)=\sup_{(a,b)\,\text{in the menu}}\{v\cdot a-b\}$.
As a supremum of affine functions, $u$ is convex, so
$w(v)=\sum_jv_j-u(v)$ is concave. For $\delta\ge0$, increasing $v_j$ by
$\delta$ increases every option's payoff by $\delta a_j\in[0,\delta]$. Taking suprema
gives $0\le u(v+\delta e_j)-u(v)\le \delta$, where $e_j$ is the $j$th unit vector. Consequently,
$0\le w(v+\delta e_j)-w(v)\le \delta$, proving monotonicity and the Lipschitz
bound in each coordinate.

Finally, truthfulness makes the selected allocation $x(v)$ a subgradient
of $u$ at $v$. Thus $\partial_j^+u(v)\ge x_j(v)$, and
\[
 \partial_j^+w(v)=1-\partial_j^+u(v)\le1-x_j(v). \qedhere
\]
\end{proof}

Thus, for a coordinate slice $g(z):=w(v_{-j},z)$, the derivative bound gives
\[
 \E_{V_j}[V_jg'_+(V_j)]
 \le\E_{V_j}[V_j(1-x_j(v_{-j},V_j))].
\]
Here $v_{-j}$ is fixed, and the right side is the expected unallocated value of item $j$.
The next lemma shows that regularity bounds the slice's variation by the left side.\footnote{The overview assumes a positive density for simplicity. The quantile proof here covers general regular distributions, including those with atoms.} Independence will then combine the
scalar bounds.

\begin{lemma}
\label{lem:regular-scalar}
Let $X\sim F$ be nonnegative with a concave revenue curve
$q\mapsto qv_F(q)$, and let
$g:[0,\infty)\to\mathbb R$ be bounded, nondecreasing, concave, and
$1$-Lipschitz. Then
\[
 G(g(X))\le\E[Xg'_+(X)],
\]
where $g'_+ \ge 0$ denotes the right derivative of $g$.
\end{lemma}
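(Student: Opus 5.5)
We will work in quantile space so that atoms are handled uniformly. Let $U$ be uniform on $(0,1)$ and set $X=v_F(U)$, which has law $F$. Write $h(q):=g(v_F(q))$. Then $h$ is nonincreasing in $q$, and $g(X)=h(U)$.

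\textbf{Step 1: threshold representation of the left side.} For a nonincreasing function of a uniform variable, the threshold formula $G(Z)=\int F_Z(1-F_Z)$ becomes
\[
 G(h(U))=\int_{-\infty}^{\infty}\Pr[h(U)>t]\,\Pr[h(U)\le t]\,dt .
\]
Each level $t$ corresponds to a cut quantile $Q(t)$, giving a contribution $Q(1-Q)\,dt$. Changing variables along the quantile axis, I would write this as a Stieltjes integral
\[
 G(g(X))=\int_0^1 q(1-q)\,d(-h)(q).
\]
This is valid for bounded monotone $h$ with jumps: $\E|h(U)-h(U')|/2=\int_0^1 q(1-q)\,d(-h)(q)$ follows by integrating $|h(U)-h(U')|=\int \one[\text{jump lies between }U,U']\,d(-h)$.

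\textbf{Step 2: use concavity of $R(q)=qv_F(q)$.} The goal is to dominate $q(1-q)\,d(-h)(q)$ by a measure whose total mass is $\E[Xg'_+(X)]$. Since $g$ is concave, nondecreasing and $1$-Lipschitz, we have $d(-h)(q)=g'(v)\,d(-v_F)(q)$ on continuous parts. On a jump of $v_F$ from $a$ to $b$, the chord satisfies $g(b)-g(a)\le g'_+(a)(b-a)$. So in every case
\[
 d(-h)\le g'_+(v_F(q))\,d(-v_F)(q)
\]
as measures, where $g'_+(v_F(q))$ is nondecreasing in $q$.

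Concavity of $R$ with $R(0^+)\ge 0$ gives $R(q)\ge q R'(q^-)$. Combined with $v_F$ nonincreasing, $-dv_F=(v_F\,dq-dR)/q$. Hence
\[
 q(1-q)\,d(-v_F)=(1-q)\,(v_F\,dq-dR).
\]

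\textbf{Step 3: integrate.} Writing $\psi(q):=g'_+(v_F(q))\in[0,1]$, which is nondecreasing in $q$, Steps 1 and 2 give
\[
 G(g(X))\le \int_0^1 \psi(q)(1-q)v_F(q)\,dq-\int_0^1\psi(q)(1-q)\,dR(q).
\]
The target is $\int_0^1\psi(q)v_F(q)\,dq=\E[Xg'_+(X)]$. It therefore suffices to show
\[
 -\int_0^1\psi(1-q)\,dR\le\int_0^1\psi\, q\,v_F\,dq=\int_0^1\psi R\,dq .
\]
Integrating by parts, and using that $R$ is concave and nonnegative, reduces this to a comparison that follows from $\psi$ being nondecreasing together with concavity. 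Concretely, concavity gives $dR\ge -R\,dq/(1-q)$ on the region where $R$ decreases, because the chord to $(1,R(1))$ with $R(1)\ge0$ bounds the slope. This yields $-(1-q)\,dR\le R\,dq$ pointwise wherever $dR<0$, while the terms with $dR\ge0$ have the favorable sign. The inequality then holds pointwise in $q$, so the monotonicity of $\psi$ is not even needed.

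\textbf{Main obstacle.} The delicate point is the measure-theoretic bookkeeping at atoms and flat parts of $F$, namely jumps and flat pieces of $v_F$ and the endpoints $q\to0,1$. Boundary terms arise there, for example $(1-q)\psi R$ at $q=0$, and these require $R(0^+)$ finite and $g$ bounded. I would first prove the result for $F$ with positive density on an interval, where it is exactly the overview computation $F(1-F)\le sf(s)$, which is equivalent to $-(1-q)R'\le R$. I would then handle the general case by approximating regular $F$ with smooth regular distributions, mixing the concave revenue curves and passing to the limit by dominated convergence using boundedness of $g$. If that limit proves troublesome, I would fall back to the direct Stieltjes argument above.
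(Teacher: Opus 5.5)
Your argument is correct and is essentially the paper's proof. Your Stieltjes form $G(g(X))=\int_0^1 q(1-q)\,d(-h)(q)$ is what the paper obtains by integrating $\int_0^1(1-2q)\,g(v(q))\,dq$ by parts, and your chord bound $-(1-q)\,dR\le R\,dq$ (from $R(1)\ge 0$) is exactly the paper's pointwise inequality $q(1-q)(-v'(q))\le v(q)$.

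The jump case and the smoothing fallback are unnecessary: a concave $R$ is continuous, indeed locally Lipschitz, on $(0,1)$, so $v_F=R/q$ has no jumps and $dR=R'\,dq$. In any case, at a genuine jump your bound would need $g'_+$ at the lower endpoint, not at $v_F(q)$, which is the upper value under the paper's convention.
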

\begin{proof}
Write $v(q):=v_F(q)$ and $R(q):=qv(q)$, and let
$R(1):=\lim_{q\uparrow1}R(q)\ge0$. Concavity gives
$R(q)+(1-q)R'(q)\ge R(1)\ge0$ almost everywhere on $(0,1)$.
Substituting $R(q)=qv(q)$ yields
\begin{equation}
 q(1-q)(-v'(q))\le v(q). \label{eq:reg-derivative}
\end{equation}
The smaller of two draws corresponds to the larger of their independent uniform quantiles, which has density $2q$. Thus,
\begin{align*}
     G(g(X))=\int_0^1(1-2q)g(v(q))\,dq &= [q(1-q)g(v(q))]\bigg |_0^1-\int_0^1 q(1-q)(g(v(q)))' dq\\&=
     \int_0^1q(1-q)(-v'(q))g'_+(v(q))\,dq,
\end{align*}
where the second equality follows from integration by parts and the boundary term vanishes by boundedness. Applying the pointwise bound~\eqref{eq:reg-derivative}, with $g'_+\ge0$, proves
\[
 G(g(X))\le\int_0^1v(q)g'_+(v(q))\,dq=\E[Xg'_+(X)]. \qedhere
\]
\end{proof}

The next lemma combines the one-item bounds. We compare two
independent value vectors by replacing one coordinate at a time.
Independence keeps the distribution of each intermediate vector
unchanged, allowing us to bound the full variation by the sum of
conditional scalar variations.

\begin{lemma}
\label{clm:coordinate-replacement}
Let $V=(V_1,\ldots,V_m)$ have independent coordinates, and let
$h:\mathbb R^m\to\mathbb R$ satisfy $\E|h(V)|<\infty$. Then
\[
 G(h(V))\le\sum_{j=1}^m
 \E_{V_{-j}}\bigl[G(h(V_{-j},V_j))\bigr].
\]
Here $V_{-j}$ denotes all coordinates other than $j$, and the $G$ inside
the expectation compares two independent draws of coordinate $j$ while
holding $V_{-j}$ fixed.
\end{lemma}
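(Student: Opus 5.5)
The plan is a hybrid (telescoping) argument combined with the triangle inequality for $\E|\cdot|$. Let $V'$ be an independent copy of $V$. For $k=0,1,\dots,m$, define the hybrid vector $W^{(k)}:=(V'_1,\ldots,V'_k,V_{k+1},\ldots,V_m)$. Then $W^{(0)}=V$ and $W^{(m)}=V'$, and consecutive hybrids $W^{(j-1)}$ and $W^{(j)}$ differ only in coordinate $j$. The triangle inequality gives
\[
 2G(h(V))=\E|h(V)-h(V')|\le\sum_{j=1}^m\E\bigl|h(W^{(j-1)})-h(W^{(j)})\bigr|.
\]
Integrability of each term follows from $\E|h(V)|<\infty$, because each $W^{(k)}$ has the same law as $V$.

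Next, we identify the law of each summand. In the pair $(W^{(j-1)},W^{(j)})$, the common coordinates $(V'_1,\ldots,V'_{j-1},V_{j+1},\ldots,V_m)$ have the law of $V_{-j}$ by independence. Coordinate $j$ takes the value $V_j$ in the first vector and $V'_j$ in the second. These two values are independent of each other and of the common coordinates, and each is distributed as $F_j$. Hence $(W^{(j-1)},W^{(j)})$ has the same joint law as $\bigl((V_{-j},V_j),(V_{-j},V'_j)\bigr)$, where $V_{-j},V_j,V'_j$ are mutually independent.

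Conditioning on $V_{-j}$ by Fubini, the $j$th summand becomes
\[
 \E_{V_{-j}}\E_{V_j,V'_j}\bigl|h(V_{-j},V_j)-h(V_{-j},V'_j)\bigr|=2\,\E_{V_{-j}}\bigl[G(h(V_{-j},V_j))\bigr].
\]
Dividing the resulting inequality by $2$ gives the claim.

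The only delicate step is measure-theoretic: we must check that the conditional $G$ is well defined for almost every $V_{-j}$ and that Fubini applies. Both follow because the integrand is nonnegative and, by the triangle inequality, integrable with total integral at most $2\E|h(V)|$. Consequently $h(V_{-j},\cdot)$ is integrable for almost every $V_{-j}$, which is what the conditional $G$ requires. Beyond this, the argument is routine.
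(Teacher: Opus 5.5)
Your proposal is correct and is essentially the paper's proof. The paper uses the same hybrid vectors $V^{(j)}$ and the same triangle inequality, then conditions on the shared coordinates. Your extra measure-theoretic remarks are harmless but not needed, since $G$ is always defined as a nonnegative extended real.
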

\begin{proof}
For $0\le j\le m$, set
$V^{(j)}:=(V'_1,\ldots,V'_j,V_{j+1},\ldots,V_m)$.
The triangle inequality gives
\[
 |h(V)-h(V')|\le\sum_j|h(V^{(j-1)})-h(V^{(j)})|.
\]
Take expectations and divide by two. Conditional on the other
coordinates in the $j$th summand, the two values in coordinate $j$ are
independent draws from $F_j$, giving the stated bound.
\end{proof}

Together, these two lemmas show that unallocated value covers the
variation of $w$. We can then compare payments directly with the
minimum of two bundle values.

\begin{proof}[Proof of the upper bound in \cref{thm:regular-bundling-two}]
First consider a menu containing a full-bundle offer at a finite price
$M$. Then $0\le w(V)\le M$. Apply
\cref{clm:coordinate-replacement,lem:regular-scalar} to its coordinate
slices, using $\partial_j^+w\le1-x_j$ from
\cref{lem:menu-residual}, to obtain
\[
 G(w(V))
 \le\sum_j\E_{V_{-j}}[G(w(V_{-j},V_j))]
 \le\E\!\left[\sum_jV_j(1-x_j(V))\right]=\E[\ell(V)].
\]
Since $w(V)=p(V)+\ell(V)$, subtracting this bound gives
\[
 \E[p(V)]\le\E[w(V)]-G(w(V))
 =\E[\min\{w(V),w(V')\}].
\]
The inequality $w(V)\le S$ bounds the last expression by
$\E[\min\{S,S'\}]=\BSPA_2(D)$.

For an arbitrary menu, add a full-bundle offer at price $M$, retaining
the original choice whenever it remains optimal. The new payment
$p_M(v)$ is either $p(v)$ or $M$, so $p_M(v)\ge\min\{p(v),M\}$.
The preceding bound applies to this augmented menu, giving
\[
 \E[\min\{p(V),M\}]\le\BSPA_2(D).
\]
Let $M\to\infty$ by monotone convergence and take the supremum over
menus. Finally, $\BSPA_2(D)\le2\BREV(D)$ follows from
\cref{lem:bundle-tail}.
\end{proof}

\paragraph{A Tight Lower Bound.}
We now show that the approximation $\REV(D)\le2\BREV(D)$ for regular items is tight, proving the lower bound in~\cref{thm:regular-bundling-two}.

\begin{theorem}
\label{thm:regular-bundling-tightness}
For every $\varepsilon>0$, there is a two-item product distribution $D$
with atomless, regular marginals and $\BREV(D)>0$ such that
\[
 \REV(D)\ge(2-\varepsilon)\BREV(D).
\]
\end{theorem}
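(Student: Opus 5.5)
The plan is to make $\SREV(D)$ close to $2$ while holding $\BREV(D)\le1$, and then use the trivial bound $\REV(D)\ge\SREV(D)$: selling the items separately at posted prices is a truthful, individually rational menu, and taking near-optimal prices for each item gives revenue arbitrarily close to $\sum_j r_j$. The intuition is that one item is almost deterministic with value about $1$, and the other has a shifted equal-revenue tail. Adding the nearly deterministic value to the heavy-tailed one exactly turns the shifted tail into an unshifted equal-revenue tail, so bundling gains nothing from the first item.

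\textbf{Construction.} Fix $\delta\in(0,1)$. Let $V_1$ be uniform on $[1-\delta,1]$, and let $V_2$ have the shifted equal-revenue distribution $\Pr[V_2\ge y]=1/(1+y)$ for $y\ge0$. Both are atomless, so we verify regularity next.
\begin{itemize}
\item $V_1$: the uniform distribution has increasing virtual value, so it is regular.
\item $V_2$: we have $v_{F_2}(q)=1/q-1$, so the revenue curve is $R(q)=q\,v_{F_2}(q)=1-q$. This is linear, hence concave.
\end{itemize}

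\textbf{Separate revenues.} Item $1$ priced at $1-\delta$ sells with probability $1$, so $r_1\ge1-\delta$. For item $2$, $r_2=\sup_y y/(1+y)=1$; it is finite but not attained. Hence $\SREV(D)\ge2-\delta$, and therefore $\REV(D)\ge2-\delta$.

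\textbf{Bundle revenue.} Since $V_1\le1$, for any price $t\ge1$ we get
\[
 \Pr[S\ge t]\le\Pr[V_2\ge t-1]=\frac1t,
\]
so the revenue at price $t$ is at most $1$. For $t<1$, the revenue is at most $t<1$. Thus $\BREV(D)\le1$. It is also positive, because $\BREV(D)\ge(1-\delta)\Pr[S\ge1-\delta]=1-\delta>0$.

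\textbf{Conclusion.} Combining these bounds gives $\REV(D)\ge(2-\delta)\,\BREV(D)$. Choosing $\delta=\varepsilon$ proves the claim.

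The only delicate point is the bundle bound. It relies on the exact matching between the shift in $V_2$'s tail and the upper endpoint of $V_1$, which makes every bundle price $t\ge1$ earn at most $1$. If we did not want $V_1$ to touch the value $1$, we could shift both components by an appropriate amount instead.
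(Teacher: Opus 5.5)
Your proposal is correct and is essentially the paper's own proof: the same uniform item on $[1-\delta,1]$ (the paper writes $\delta=1/(T+1)$) with the shifted equal-revenue item, the same bundle bound $p\Pr[V_2\ge p-1]\le 1$ for $p\ge 1$, and $\REV(D)\ge\SREV(D)\ge 2-\delta$. The only nit is to take $\delta=\min\{\varepsilon,1/2\}$ rather than $\delta=\varepsilon$, so that $\delta\in(0,1)$ also when $\varepsilon\ge 1$.
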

\begin{proof}
Fix $T\ge1$. Consider independent values $X_T$ and $Y$ defined by
\[
 X_T\sim\mathrm{Unif}\left(\frac{T}{T+1},1\right),
 \qquad \Pr[Y\ge y]:=\frac{1}{1+y}\quad(y\ge0).
\]
Let $D$ be their joint distribution. Both marginals are atomless. The
virtual value of $X_T$ is $2x-1$, and that of $Y$ is $-1$. Both are
nondecreasing, so the marginals are regular.

Posting prices $T/(T+1)$ for $X_T$ and $T$ for $Y$ earns $T/(T+1)$
from each item. Hence $\REV(D)\ge\SREV(D)\ge2T/(T+1)$.
Since $X_T\le1$, every bundle price $p\ge1$ earns at most
\[
 p\Pr[X_T+Y\ge p]\le p\Pr[Y\ge p-1]=1.
\]
Prices below $1$ also earn at most $1$. Thus $0<\BREV(D)\le1$.
The ratio $\REV(D)/\BREV(D)$ is therefore at least $2T/(T+1)$,
which approaches $2$ as $T\to\infty$. Taking $T$ large enough proves
the claim.
\end{proof}

\subsection{Beyond Regularity}
\label{subsec:deficit-implications}
\label{subsec:ironing-interpretation}

Our proof of the upper bound uses regularity only in the scalar comparison
$G(g(X))\le\E[Xg'_+(X)]$. For irregular distributions, we can bound its
shortfall by the area needed to make the revenue curve concave.

Recall the upper-tail quantile $v_F$ from the preliminaries.
The revenue curve of $F$ is $R_F(q):=qv_F(q)$. 
We use the endpoint conventions $R_F(0):=\limsup_{q\downarrow0}R_F(q)$ and
$R_F(1):=\lim_{q\uparrow1}R_F(q)$. The quantile convention retains the
higher value at a downward jump, and $R_F$ is upper
semicontinuous. The regularity condition is exactly concavity of this curve.
Let $\overline R_F$ denote the least concave majorant of $R_F$, also called
its ironed revenue curve. We retain its decreasing portion after the
maximum. Define the area added by Myerson ironing~\cite{Myerson1981} as
\[
 \Ical(F):=\int_0^1(\overline R_F(q)-R_F(q))\,dq.
\]
This area is zero for regular distributions and measures the departure
from concavity otherwise. \Cref{fig:ironing-area} illustrates it. Besides our application here, this natural notion of \emph{distance to regularity} may be of independent interest.

\begin{figure}[tbp]
\centering
\scalebox{0.8}{%
\begin{tikzpicture}[line cap=round,line join=round]
  \begin{scope}[x=1.8cm,y=3.2cm]
    \draw[->] (0,0) -- (3.3,0) node[right] {$x$};
    \draw[->] (0,0) -- (0,1.13);
    \node[below left] at (0,0) {$0$};
    \foreach \x in {1,2,3} {
      \draw (\x,0.015) -- (\x,-0.015);
      \node[below] at (\x,0) {$\x$};
    }
    \draw (-0.035,0.5) -- (0.035,0.5);
    \node[left] at (0,0.5) {$\tfrac12$};
    \draw (-0.035,1) -- (0.035,1);
    \node[left] at (0,1) {$1$};
    \draw[thick] (0,0) -- (1,0.5) -- (2,0.5) -- (3,1) -- (3.13,1);
    \node[above] at (1.5,0.5) {$F(x)$};
  \end{scope}

  \begin{scope}[shift={(10cm,0)},x=5.2cm,y=3.2cm]
    \fill[blue!13]
      (0.5,1) -- (1,0)
      .. controls (0.833333,0.333333) and (0.666667,0.5) .. (0.5,0.5)
      -- cycle;
    \draw[->] (0,0) -- (1.13,0) node[right] {$q$};
    \draw[->] (0,0) -- (0,1.13);
    \node[below left] at (0,0) {$0$};
    \draw (0.5,0.015) -- (0.5,-0.015);
    \node[below] at (0.5,0) {$\tfrac12$};
    \draw (1,0.015) -- (1,-0.015);
    \node[below] at (1,0) {$1$};
    \draw (-0.012,0.5) -- (0.012,0.5);
    \node[left] at (0,0.5) {$\tfrac12$};
    \draw (-0.012,1) -- (0.012,1);
    \node[left] at (0,1) {$1$};

    \draw[thick]
      (0,0) .. controls (0.166667,0.5) and (0.333333,0.833333) .. (0.5,1);
    \draw[thick]
      (0.5,0.5) .. controls (0.666667,0.5) and (0.833333,0.333333) .. (1,0);
    \draw[thick,blue!65!black,dashed] (0.5,1) -- (1,0);
    \fill (0.5,1) circle[radius=1.7pt];
    \draw[thick,fill=white] (0.5,0.5) circle[radius=1.7pt];

    \node[above,text=blue!65!black] at (0.75,0.84) {$\overline R_F(q)$};
    \node[below] at (0.76,0.2) {$R_F(q)$};
    \node[font=\small] at (0.6,0.57) {$\Ical(F)$};
  \end{scope}
\end{tikzpicture}%
}
\caption{An example of $\Ical(F)$. The distribution function of $X \sim \mathrm{Unif}([0,1]\cup[2,3])$ (left), and its revenue curve (right). The shaded region on the right has area $\Ical(F)$.}
\label{fig:ironing-area}
\end{figure}

We compare with the distribution obtained by raising values to the
ironed revenue curve. The regular scalar bound applies there, and the
change in the two-sample minimum controls the error. This change is
exactly twice the area added by ironing.

\begin{lemma}
\label{lem:irreg-area-control}
Let $X\sim F$ be nonnegative with finite monopoly revenue, and let
$g:[0,\infty)\to\mathbb R$ be bounded, nondecreasing, concave, and
$1$-Lipschitz. Then $G(g(X))-\E[Xg'_+(X)]\le 2\Ical(F)$.
\end{lemma}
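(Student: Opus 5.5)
The plan is to compare $X$ with a regularized value $\overline X$ through a common uniform quantile, apply \cref{lem:regular-scalar} to $\overline X$, and pay for the comparison with the change in the two-sample minimum. That change will be exactly twice the ironing area.

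\textbf{Step 1 (the coupling).} Let $U\sim\mathrm{Unif}(0,1)$ and set $X:=v_F(U)$. Define $\overline v(q):=\overline R_F(q)/q$ and $\overline X:=\overline v(U)$.
\begin{itemize}
\item Since $\overline R_F$ is concave with $\overline R_F(0)\ge 0$, the ratio $\overline R_F(q)/q$ is nonincreasing, so $\overline v$ is a genuine upper-tail quantile function.
\item The revenue curve of $\overline X$ is therefore $\overline R_F$, which is concave, so $\overline X$ is regular.
\item Because $\overline R_F\ge R_F$, we have $\overline v(q)\ge v_F(q)$ pointwise, hence $\overline X\ge X$.
\item $\overline R_F$ is finite because $F$ has finite monopoly revenue.
\end{itemize}

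\textbf{Step 2 (splitting $G$).} Write $G(g(Z))=\E[g(Z)]-\E[\min\{g(Z),g(Z')\}]$. Put $h(x):=g(x)-xg'_+(x)$. Then
\[
G(g(X))-\E[Xg'_+(X)]=\E[h(X)]-\E[\min\{g(X),g(X')\}].
\]
We treat the two terms separately.

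\textbf{Step 3 (the $h$ term).} For concave $g$ and $0\le x<y$,
\[
h(y)-h(x)\ge g'_+(y)(y-x)-yg'_+(y)+xg'_+(x)=x\bigl(g'_+(x)-g'_+(y)\bigr)\ge0.
\]
So $h$ is nondecreasing, and $X\le\overline X$ gives $\E[h(X)]\le\E[h(\overline X)]$.

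\textbf{Step 4 (the minimum term).} The minimum of two draws corresponds to quantile density $2q$. Using monotonicity and the $1$-Lipschitz property of $g$,
\[
\E[\min g(\overline X)]-\E[\min g(X)]=\int_0^1 2q\bigl(g(\overline v(q))-g(v_F(q))\bigr)dq\le\int_0^1 2q\bigl(\overline v(q)-v_F(q)\bigr)dq=2\Ical(F).
\]

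\textbf{Step 5 (combining).} Steps 3 and 4 give
\[
G(g(X))-\E[Xg'_+(X)]\le \E[h(\overline X)]-\E[\min g(\overline X)]+2\Ical(F)=G(g(\overline X))-\E[\overline Xg'_+(\overline X)]+2\Ical(F).
\]
By \cref{lem:regular-scalar} applied to the regular $\overline X$, the first difference on the right is at most $0$. This proves the lemma.

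\textbf{Main obstacle.} I expect the difficulty to be technical rather than conceptual; the monotonicity of $h$ is the key idea and is elementary. The points to check are:
\begin{itemize}
\item The endpoint conventions for $R_F$ and $\overline R_F$ at $0$ and $1$ must be handled correctly.
\item $\overline X$ may be unbounded near $q=0$. This is harmless because $g$ is bounded, so the $G$ terms are finite.
\item $\E[Xg'_+(X)]$ may be infinite. If so the claim is trivial; otherwise all subtractions above are of finite quantities, since $0\le xg'_+(x)\le g(x)-g(0)$ by concavity.
\item The boundary term in \cref{lem:regular-scalar} must still vanish when it is applied to $\overline X$; bounded $g$ should suffice here as well.
\end{itemize}
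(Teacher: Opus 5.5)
Your proposal is correct and follows the paper's proof essentially step for step. It uses the same quantile coupling $X=v_F(U)$, $\overline X=\overline v(U)$ with $\overline v(q)=\overline R_F(q)/q$, the same nondecreasing function $g(x)-xg'_+(x)$, \cref{lem:regular-scalar} applied to $\overline X$, and the same Lipschitz bound that turns the change in the two-sample minimum into exactly $2\Ical(F)$. The only differences are cosmetic: the paper normalizes $g(0)=0$, and it states explicitly that $\overline v$ is continuous (automatic from concavity of $\overline R_F$ on $(0,1)$), which is the fact behind your claim that $\overline v$ is the upper-tail quantile of $\overline X$.
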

\begin{proof}
By translation, take $g(0)=0$. Put
$p_g(x):=g(x)-xg'_+(x)$. This function is nondecreasing: for
$0\le x\le y$, concavity gives
$g(y)-g(x)\ge(y-x)g'_+(y)$ and $g'_+(x)\ge g'_+(y)$, which imply
$p_g(y)\ge p_g(x)$.
Writing $X_{\min}:=\min\{X,X'\}$, the shortfall becomes
\[
 G(g(X))-\E[Xg'_+(X)]=\E[p_g(X)]-\E[g(X_{\min})].
\]
We will bound it by comparing both terms with those of a value whose
revenue curve is concave.

Define $\overline v(q):=\overline R_F(q)/q$. Concavity of
$\overline R_F$ and its nonnegative left endpoint make $\overline v$
continuous and nonincreasing, so it is the quantile function of a nonnegative value
$\overline X$. Couple $X=v_F(U)$ and $\overline X=\overline v(U)$
using $U\sim\mathrm{Unif}(0,1)$, so $X\le\overline X$. Couple independent copies
in the same way, and put
$\overline X_{\min}:=\min\{\overline X,\overline X'\}$.
The revenue curve of $\overline X$ is concave, so
\cref{lem:regular-scalar} gives
$\E[p_g(\overline X)]\le\E[g(\overline X_{\min})]$.
Monotonicity of $p_g$ and the Lipschitz bound on $g$ therefore give
\[
 \E[p_g(X)]-\E[g(X_{\min})]
 \le\E[g(\overline X_{\min})-g(X_{\min})]\le\E[\overline X_{\min}-X_{\min}].
\]
The larger of two independent uniform quantiles has density $2q$.
Consequently, the increase in the minimum is exactly twice the
ironing area:
\[
 \E[\overline X_{\min}-X_{\min}]
 =2\int_0^1\bigl(\overline R_F(q)-R_F(q)\bigr)\,dq=2\Ical(F). \qedhere
\]
\end{proof}

Replacing the regular scalar bound by \cref{lem:irreg-area-control}
adds the marginal ironing areas to the same coordinate comparison.

\begin{lemma}
\label{lem:regular-bundling-minimum}
For every product distribution $D$, $\REV(D)\le\BSPA_2(D)+2\sum_{j=1}^m\Ical(F_j)$.
\end{lemma}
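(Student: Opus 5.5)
We follow the proof of the upper bound in \cref{thm:regular-bundling-two} verbatim, changing only the scalar step. First fix a menu that contains a full-bundle offer at a finite price $M$, so that $0\le w(V)\le M$. Then every coordinate slice $g(z):=w(v_{-j},z)$ is bounded, and by \cref{lem:menu-residual} it is nondecreasing, concave, and $1$-Lipschitz, with $g'_+(z)\le 1-x_j(v_{-j},z)$.

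We then apply \cref{lem:irreg-area-control} to each slice in place of \cref{lem:regular-scalar}. This gives
\[
 G(g(V_j))\le\E_{V_j}[V_jg'_+(V_j)]+2\Ical(F_j)\le\E_{V_j}[V_j(1-x_j(v_{-j},V_j))]+2\Ical(F_j).
\]
Next we average over $V_{-j}$, which is legitimate because the coordinates are independent, and sum over $j$. Combining this with \cref{clm:coordinate-replacement} applied to $h=w$ (integrable since $w$ is bounded) yields
\[
 G(w(V))\le\E[\ell(V)]+2\sum_j\Ical(F_j).
\]
From the identity $w=p+\ell$ we conclude
\[
 \E[p(V)]=\E[w(V)]-\E[\ell(V)]\le\E[w(V)]-G(w(V))+2\sum_j\Ical(F_j)=\E[\min\{w(V),w(V')\}]+2\sum_j\Ical(F_j).
\]
Since $0\le w\le S$, the first term on the right is at most $\E[\min\{S,S'\}]=\BSPA_2(D)$.

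For a general menu we use the same truncation as before. Adding a full-bundle offer at price $M$, and keeping the original choice whenever it remains optimal, gives a payment $p_M\ge\min\{p,M\}$. Hence
\[
 \E[\min\{p(V),M\}]\le\BSPA_2(D)+2\sum_j\Ical(F_j)
\]
for every $M$. Letting $M\to\infty$ by monotone convergence and then taking the supremum over menus proves the lemma. If some $\Ical(F_j)$ is infinite the statement is trivial; in fact each one is finite because each marginal has finite monopoly revenue.

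The only point that needs care is measurability and integrability when averaging the conditional bounds over $V_{-j}$. Since $w$ is bounded and the bounds hold pointwise in $v_{-j}$, this is routine, so I expect no real obstacle beyond what was already handled in the regular case.
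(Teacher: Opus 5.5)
Your proposal is correct and matches the paper's proof essentially step for step. You apply \cref{lem:irreg-area-control} together with $\partial_j^+w\le 1-x_j$ to each bounded coordinate slice, combine the slice bounds via \cref{clm:coordinate-replacement}, use $w=p+\ell$ and $w\le S$, and remove the finite-price bundle offer by the same truncation and monotone convergence; your remarks on the finiteness of $\Ical(F_j)$ and on integrability are extra care that the paper leaves implicit.
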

\begin{proof}
First consider a menu containing a finite-price full-bundle offer, so
$w$ is bounded. \Cref{clm:coordinate-replacement,lem:irreg-area-control}
and $\partial_j^+w\le1-x_j$ give
\[
 G(w(V))\le\E[\ell(V)]+2\sum_j\Ical(F_j).
\]
Consequently,
\[
 \E[p(V)]\le\E[\min\{w(V),w(V')\}]+2\sum_j\Ical(F_j)
 \le\BSPA_2(D)+2\sum_j\Ical(F_j).
\]
The finite-price bundle reduction in
\cref{subsec:regular-bundling-proof} extends the bound to every menu.
Taking the supremum proves the claim.
\end{proof}

The comparison gives a more precise guarantee when the marginals
are close to regular. If $\Ical(F_j)\le\varepsilon r_j$ for every item,
then $\REV(D)\le2\BREV(D)+2\varepsilon\SREV(D)$. Thus the guarantee improves continuously as the area added by ironing decreases. For general distributions, the next lemma shows that $\Ical(F_j) \le r_j/2$.


\begin{lemma}
\label{lem:area}
For every scalar distribution $F$ with finite revenue $r:=\REV(F)$,
$\Ical(F)\le r/2$.
\end{lemma}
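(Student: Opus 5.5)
The plan is to compare areas under the revenue curve directly. Since $R_F(q)=qv_F(q)\le r$ for every $q$, the least concave majorant satisfies $\overline R_F\le r$ as well, because the constant $r$ is a concave majorant. So
\[
 \Ical(F)=\int_0^1\overline R_F(q)\,dq-\int_0^1R_F(q)\,dq\le r-\int_0^1R_F(q)\,dq,
\]
which is not enough by itself. We need a better upper bound on $\int\overline R_F$ in terms of $\int R_F$. We use the interpretation from the proof of \cref{lem:irreg-area-control}. There $\int_0^1 R_F(q)\,dq=\tfrac12\E[\min\{X,X'\}]$, because the minimum has quantile density $2q$. Likewise $\int_0^1\overline R_F=\tfrac12\E[\min\{\overline X,\overline X'\}]$. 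So it suffices to show
\[
 \E[\min\{\overline X,\overline X'\}]-\E[\min\{X,X'\}]\le r .
\]

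The first main step is to bound $\int_0^1\overline R_F$ using only concavity and the fact that $\overline R_F\le r$. The concave majorant is built from pieces of $R_F$ joined by chords over ironed intervals. On each maximal ironed interval $(a,b)$, the function $\overline R_F$ is linear, with $\overline R_F(a)=R_F(a)$ and $\overline R_F(b)=R_F(b)$ (by upper semicontinuity). Outside these intervals the two curves agree. So $\Ical(F)$ is the sum over ironed intervals of the chord area minus the area under $R_F$.

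On a single interval, the natural estimate is as follows. For $q\in(a,b)$, monotonicity of $v_F$ gives $R_F(q)=qv_F(q)\ge qv_F(b)=\tfrac{q}{b}R_F(b)$. Hence $R_F$ lies above the ray from the origin through $(b,R_F(b))$. The gap on $(a,b)$ is therefore at most the area between the chord and that ray. This is the area of the triangle with vertices $(0,0)$, $(a,R_F(a))$ and $(b,R_F(b))$, namely $\tfrac12\bigl(bR_F(a)-aR_F(b)\bigr)=\tfrac12\,ab\,(v_F(a)-v_F(b))$.

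The last step is to sum these triangle areas and show the total is at most $r/2$. The triangles for different ironed intervals are disjoint wedges from the origin. All of them lie in the region under $\overline R_F$ and above the ray structure. Their union sits inside the star-shaped region $\{(q,y):0\le y\le\overline R_F(q)\}$, and each triangle's area is $\tfrac12\int$ of the angular sector. Parametrize the triangles by the slope $v$ of rays from the origin; slope $v$ corresponds to value $v$. The triangle on $(a,b)$ covers slopes in $[v_F(b),v_F(a)]$. In polar-type coordinates along slope $v$, the triangle's area equals $\tfrac12\int_{v_F(b)}^{v_F(a)}q(v)^2\,dv$, where $q(v)$ is the chord's abscissa at slope $v$. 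The chord point lies under $\overline R_F\le r$, so $q(v)\,v\le r$, and also $q(v)\le1$. Summing gives
\[
 \Ical(F)\le\tfrac12\int_0^\infty\min\{1,r/v\}^2\,dv=\tfrac12\Bigl(r+r\Bigr)=r,
\]
which is off by a factor of $2$. The main obstacle is sharpening this. I would try two things. First, use the sharper tail constraint $q\le \Pr[X\ge v]$ in place of $q\le r/v$, together with the fact that the triangle sits above $R_F$. Second, compute the sector area as $\tfrac12\int (q_{\text{chord}}^2-q_{R_F}^2)\,dv$ rather than $\tfrac12\int q_{\text{chord}}^2\,dv$. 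Combined with $\int_0^1R_F=\tfrac12\int_0^\infty \Pr[X\ge v]^2dv$, this should produce $\tfrac12\int_0^\infty(\min\{1,r/v\}^2-\Pr[X\ge v]^2)\,dv$ minus a nonnegative correction. From there the intended route is to show that the ironed $\Pr[\overline X\ge v]^2-\Pr[X\ge v]^2$ integrates to at most $r$. A fallback is to verify the extremal case, a two-point mass at $0$ and at a large value, where the revenue curve is a ray and then flat, and the ironing gap is exactly $r/2$.
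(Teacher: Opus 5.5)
Your proposal does not prove the lemma. As you note yourself, it only yields $\Ical(F)\le r$, and the factor of $2$ is lost at one specific step. Your starting point is exactly the paper's: on each component $(a,b)$ of $\{\overline R_F>R_F\}$ the envelope is a chord, and monotonicity of $v_F$ gives $R_F(q)\ge qR_F(b)/b$.

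The error is the region you assign to each component. Over $q\in(a,b)$, the region between the chord and that ray is not the triangle with apex at the origin. It is the triangle with vertices $(a,R_F(a))$, $(a,aR_F(b)/b)$, $(b,R_F(b))$, whose area satisfies
\[
 \frac{b-a}{2}\Bigl(R_F(a)-\frac ab\,R_F(b)\Bigr)\le\frac r2\,(b-a),
\]
using $R_F(a)\le r$ and $R_F(b)\ge0$. Your wedge is larger by the factor $b/(b-a)$. The excess is the triangle $(0,0)$, $(a,aR_F(b)/b)$, $(a,R_F(a))$, which lies over $(0,a)$, outside the interval. With the correct triangle you are done at once: the components are disjoint subintervals of $(0,1)$, so $\sum(b-a)\le1$ and $\Ical(F)\le r/2$. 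This is the paper's proof, and no slope parametrization is needed.

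The wedge route cannot be rescued by a better summation, because the per-interval wedge bounds themselves can sum to nearly $r$. Let $X=0$ with probability $1-a$, and otherwise place atoms at $r/q_k$ with $\Pr[X\ge r/q_k]=q_k$, where $q_k=a\beta^{-k}$ for $0\le k\le K$. Then $\REV(F)=r$, and the component $(a,1)$ alone has wedge area $r/2$ for every $a$. The components $(q_{k+1},q_k)$ contribute wedges of total area $\frac r2(a-q_K)$. The wedge sum therefore tends to $r$ as $a\to1$ and $q_K\to0$, even though $\Ical(F)\le r/2$.

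Your suggested repairs do not help either:
\begin{itemize}
\item The chord point at slope $v$ sits at abscissa $\Pr[\overline X\ge v]\ge\Pr[X\ge v]$, so the constraint $q\le\Pr[X\ge v]$ points the wrong way.
\item The identity $\int_0^\infty(\Pr[\overline X\ge v]^2-\Pr[X\ge v]^2)\,dv=2\Ical(F)$ holds exactly. Bounding this integral by $r$ is the lemma restated, not a step toward it.
\item Checking the two-point extremal case only shows that $r/2$ is approached, not that it is an upper bound.
\end{itemize}
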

\begin{proof}
Write $R:=R_F$, $\overline R:=\overline R_F$, and $v:=v_F$.
Consider a component $(a,b)$ of the open set
$\{q\in(0,1):\overline R(q)>R(q)\}$, allowing $a=0$ or $b=1$.
On this interval, the concave envelope is the chord joining the
endpoint contact points. Upper semicontinuity of $R$ ensures that
these contact points are included.

Since $v$ is nonincreasing, $R(q)\ge qR(b)/b$ for $a<q<b$.
This lower line meets the envelope at $b$, and its gap from the
envelope at $a$ is $R(a)-aR(b)/b$. The area between the two lines is a triangle. Therefore
\begin{align*}
  \int_a^b\bigl(\overline R(q)-R(q)\bigr)\,dq
  &\le\int_a^b\left(\overline R(q)-\frac qbR(b)\right)\,dq=\frac{b-a}{2}\left(R(a)-\frac abR(b)\right)
  \le\frac r2(b-a),
\end{align*}
where the equality computes the triangle's area, and the last inequality uses
$R(a)\le r$ and $R(b)\ge0$.
There are at most countably many such disjoint intervals. Summing
their bounds gives $\mathcal I(F)\le r/2$.
\end{proof}

Consequently, $2 \sum_{j=1}^m \Ical(F_j) \le \SREV(D)$. Together with $\BSPA_2(D)\le2\BREV(D)$, this gives the following factor-three
guarantee, which we improve in the next section.

\begin{corollary}
\label{cor:simple-three}
For every product distribution $D$,
\[
 \REV(D)\le2\BREV(D)+\SREV(D)
 \le3\max\{\SREV(D),\BREV(D)\}.
\]
\end{corollary}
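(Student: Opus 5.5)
The corollary follows by chaining results already proved, so the plan is a short combination argument with no new analysis. First, invoke \cref{lem:regular-bundling-minimum}, which holds for every product distribution without regularity:
\[
 \REV(D)\le\BSPA_2(D)+2\sum_{j=1}^m\Ical(F_j).
\]

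Next, bound the two terms separately. For the bundle term, use \cref{lem:bundle-tail} with $k=2$ to get $\BSPA_2(D)\le\BREV_2(D)\le2\BREV(D)$. For the irregularity term, apply \cref{lem:area} to each marginal $F_j$. Each marginal has finite monopoly revenue $r_j$ by the standing assumption, so $2\Ical(F_j)\le r_j$. Summing over items gives $2\sum_j\Ical(F_j)\le\sum_j r_j=\SREV(D)$. Substituting both bounds yields
\[
 \REV(D)\le2\BREV(D)+\SREV(D).
\]

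The final inequality is immediate: $2\BREV(D)+\SREV(D)\le3\max\{\SREV(D),\BREV(D)\}$.

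There is essentially no obstacle, since all the substance lies in \cref{lem:regular-bundling-minimum} and \cref{lem:area}. The only point to check is that those lemmas' hypotheses hold for an arbitrary product distribution in the model. They require only finite monopoly revenue of each marginal, which is assumed throughout, and finiteness of $\BREV(D)$, which follows from $\BREV(D)\le m\SREV(D)<\infty$ in \cref{lem:bundle-tail}. This ensures both sides are finite and the inequalities are meaningful.
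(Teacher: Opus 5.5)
Your proposal is correct and follows the paper's argument exactly: it combines \cref{lem:regular-bundling-minimum} with \cref{lem:area} summed over the marginals to get $2\sum_j\Ical(F_j)\le\SREV(D)$, and with \cref{lem:bundle-tail} for $\BSPA_2(D)\le2\BREV(D)$. The last inequality is then immediate.
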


\section{An Entropy Bound for Simple Auctions}
\label{sec:entropy-upper}
We now improve the coefficient $2$ in \cref{cor:simple-three} to
$\pi^2/6$, proving \cref{thm:entropy-upper-summary}. We use a larger
variation term so that subtracting it leaves a smaller bundle
statistic. The new variation must still combine across independent
coordinates, be unchanged by adding a constant, and be at most the
mean of a nonnegative variable.

For bounded nonnegative $X$, the variation used in
\cref{sec:regular-bundling-tight} can also be written as
\[
 G(X)=\int_0^\infty\operatorname{Var}(\one_{\{X\le t\}})\,dt.
\]
We replace variance in this integral with another natural measure of variation, the functional
entropy~\cite{BoucheronLugosiMassart2013,Ledoux1999}.
For a bounded nonnegative random variable $Z$, its functional entropy is
\[
 \Ent(Z):=\E[Z\log Z]-\E[Z]\log\E[Z].
\]
Here $0\log0:=0$, and all logarithms are natural.

\subsection{The Entropy Statistic}
For bounded $X\ge0$ with distribution function $F_X$, define its
cumulative entropy by
\[
 L(X):=\int_0^\infty\Ent(\one_{\{X\le t\}})\,dt
       =-\int_0^\infty F_X(t)\log F_X(t)\,dt.
\]
This quantity was introduced in~\cite{DiCrescenzo2009}; to the best of
our knowledge, its application to mechanism comparison is new.
Like $G$, it vanishes for a deterministic value. A change of variables in the integral shows that it is unchanged by
adding a constant. The inequalities
$z(1-z)\le-z\log z\le1-z$ for $0\le z\le1$ give $ G(X)\le L(X)\le\E[X]$. Thus subtracting $L$ from the mean leaves a nonnegative quantity no larger than the expected two-sample minimum, suggesting a route to improve the factor-three bound in~\cref{cor:simple-three}.

For the bounded residual $w$ from \cref{subsec:regular-bundling-proof},
we will bound $L(w(V))$ by $\E[\ell(V)]+\SREV(D)$.
Subtracting this variation from $\E[w(V)]$ will give the improved
comparison. We first establish a coordinatewise bound for $L$, analogous to the one in \cref{clm:coordinate-replacement}. We use the following standard tensorization of functional entropy, specialized to bounded nonnegative functions. A proof is also included in Appendix~\ref{app:entropy-tensorization} for completeness.

\begin{lemma}[{\cite[Proposition~12]{Chafai2004}}]
\label{lem:functional-entropy-tensorization}
Let $V:=(V_1,\ldots,V_m)$ have independent coordinates, and let $Z$ be a bounded nonnegative measurable function of $V$. Then
\[
 \Ent(Z)\le\sum_{j=1}^m
 \E_{V_{-j}}[\Ent_{V_j}(Z)].
\]
Here $\Ent_{V_j}$ is computed over $V_j$ with $V_{-j}$ fixed;
the outer expectation averages over $V_{-j}$.
\end{lemma}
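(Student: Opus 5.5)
We prove the tensorization by the standard martingale (telescoping) argument, using the variational characterization of functional entropy. By homogeneity ($\Ent(cZ)=c\Ent(Z)$) and a limiting argument ($Z+\varepsilon$ with $\varepsilon\downarrow0$, using continuity of $y\mapsto y\log y$ on bounded ranges), we may assume $Z$ is bounded away from zero. The tool is the duality formula
\[
 \Ent(Z)=\sup\bigl\{\E[ZT]:\ T \text{ bounded},\ \E[e^{T}]\le1\bigr\},
\]
which follows from Young's inequality $zt\le z\log z - z + e^{t}$. This inequality gives $\E[ZT]\le\Ent(Z)$ after normalizing $\E Z$ and optimizing a constant shift. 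Equality holds at $T=\log(Z/\E Z)$. The same formula holds conditionally: for each fixed $V_{-j}$, $\Ent_{V_j}(Z)\ge\E_{V_j}[ZT]$ whenever $\E_{V_j}[e^{T}]\le1$.

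For the decomposition, let $\mathcal F_j:=\sigma(V_1,\ldots,V_j)$ and $Z_j:=\E[Z\mid\mathcal F_j]$, so $Z_0=\E Z$ and $Z_m=Z$. With $T:=\log(Z/\E Z)$ we have
\[
 \Ent(Z)=\E[ZT]=\sum_{j=1}^m\E[Z\,T_j],\qquad T_j:=\log\frac{Z_j}{Z_{j-1}}.
\]
The key step is to check that each $T_j$ is admissible for the conditional variational formula in coordinate $j$. Fix $V_{-j}$ and integrate over $V_j$ alone. Since $T_j$ depends only on $V_1,\ldots,V_j$, and $Z_{j-1}$ does not depend on $V_j$, independence of the coordinates gives
\[
 \E_{V_j}[e^{T_j}]=\frac{\E_{V_j}[Z_j]}{Z_{j-1}}=1,
\]
because integrating $Z_j$ over $V_j$ alone yields exactly $Z_{j-1}$. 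This is the one place where independence is used. Hence $\E_{V_j}[Z T_j]\le\Ent_{V_j}(Z)$ for every fixed $V_{-j}$. Averaging over $V_{-j}$ and summing over $j$ gives the claim.

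The main obstacle is being careful that the conditional variational inequality is applied with $V_{-j}$ fixed. The function $T_j$ may depend on $V_1,\ldots,V_{j-1}$, and these are fixed inside $\Ent_{V_j}$, so that is fine. We also need $T_j$ to be bounded so that Young's inequality applies without integrability issues. This is why we first reduce to $Z\ge\varepsilon>0$: then all the $Z_j$ lie in $[\varepsilon,\sup Z]$, and the $T_j$ are bounded. The reduction itself is routine dominated convergence on both sides, since $\Ent$ of bounded variables is continuous under uniform perturbation.
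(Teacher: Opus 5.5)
Your proof is correct, but it follows a different route from the paper's.

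The paper inducts on $m$. It splits off the last coordinate with the chain identity $\Ent(Z)=\E_{V_{-m}}[\Ent_{V_m}(Z)]+\Ent_{V_{-m}}(\E_{V_m}Z)$. It proves convexity of $\Ent$ by differentiating $\Ent((1-s)h_0+sh_1+\delta)$ twice and applying Cauchy--Schwarz. It then uses Jensen's inequality for this convex functional to bound $\Ent_{V_{-m}}(\E_{V_m}Z)$ by $\E_{V_m}[\Ent_{V_{-m}}(Z)]$, and finishes with the induction hypothesis.

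You instead combine the variational formula $\Ent(Z)=\sup\{\E[ZT]:\E[e^T]\le1\}$ with the martingale telescoping $\log(Z/\E Z)=\sum_j\log(Z_j/Z_{j-1})$. You then check that each increment is admissible for the conditional formula in coordinate $j$.

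What your version buys:
\begin{itemize}
\item It needs no induction.
\item The paper's Jensen step averages a convex functional over a continuum of functions indexed by $V_m$, so it needs somewhat more than convexity along segments. In your proof this is absorbed into the duality formula: you apply Young's inequality for each fixed $V_{-j}$ and then simply average.
\item The price is the reduction to $Z\ge\varepsilon$, which keeps the $T_j$ bounded, and the limit $\varepsilon\downarrow0$. You handle this correctly via uniform continuity of $y\log y$ on a bounded interval. The paper's convexity proof uses a similar $\delta$-regularization, so this cost is comparable.
\end{itemize}

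What the paper's route buys: the convexity argument is the one that extends to the $\Phi$-entropies of the cited reference, with the variance case giving Efron--Stein. Your exponential duality formula is specific to $x\log x$.

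One small correction: your claim that $\E_{V_j}[Z_j]=Z_{j-1}$ is the only place independence is used is not quite right. Independence is also needed to compute $Z_j$ by integrating out $V_{j+1},\ldots,V_m$. It is needed again to write $\E[ZT_j]$ as the iterated integral $\E_{V_{-j}}\E_{V_j}[ZT_j]$ over a product measure.
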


Applying this inequality to threshold indicators and integrating gives the desired bound for $L$.

\begin{lemma}
\label{lem:entropy-tensorization}
Let $V:=(V_1,\ldots,V_m)$ have independent coordinates, and suppose
$h(V)$ is bounded and nonnegative. Then
\[
 L(h(V))\le\sum_{j=1}^m
 \E_{V_{-j}}[L(h(V_{-j},V_j))].
\]
In each summand, $L$ is computed over $V_j$ with $V_{-j}$ fixed;
the outer expectation averages over $V_{-j}$.
\end{lemma}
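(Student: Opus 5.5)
The plan is to apply \cref{lem:functional-entropy-tensorization} threshold by threshold and then integrate over thresholds. Fix $t\ge0$ and set $Z_t:=\one_{\{h(V)\le t\}}$, which is a bounded nonnegative measurable function of $V$. By definition of $L$ and the identity $\Ent(\one_A)=-\Pr[A]\log\Pr[A]$ (the term $\E[Z\log Z]$ vanishes for an indicator), we have
\[
 L(h(V))=\int_0^\infty\Ent(Z_t)\,dt .
\]
Tensorization gives, for each $t$,
\[
 \Ent(Z_t)\le\sum_{j=1}^m\E_{V_{-j}}\bigl[\Ent_{V_j}(Z_t)\bigr].
\]
Here $\Ent_{V_j}(Z_t)=-F_{j,t}\log F_{j,t}$, with $F_{j,t}:=\Pr_{V_j}[h(V_{-j},V_j)\le t]$ the conditional distribution function of the slice at $t$.

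Next, integrate both sides over $t\in[0,\infty)$. Since $h$ is bounded, say $0\le h\le M$, every integrand vanishes for $t\ge M$: there $Z_t\equiv1$ and every conditional indicator is also identically $1$. So all integrals run over $[0,M]$. Every integrand is nonnegative and bounded by $1/e$, so Tonelli's theorem lets us exchange $\int dt$ with $\E_{V_{-j}}$. This gives
\[
 \int_0^\infty\E_{V_{-j}}\bigl[\Ent_{V_j}(Z_t)\bigr]dt=\E_{V_{-j}}\Bigl[\int_0^\infty -F_{j,t}\log F_{j,t}\,dt\Bigr]=\E_{V_{-j}}\bigl[L(h(V_{-j},V_j))\bigr],
\]
where the inner $L$ is the cumulative entropy of the slice $z\mapsto h(V_{-j},z)$ under $V_j\sim F_j$. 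Summing over $j$ yields the claim.

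The only delicate point is measurability, needed for Tonelli. Specifically, the map $(t,v_{-j})\mapsto\Ent_{V_j}(Z_t)$ must be jointly measurable. This holds because $(t,v)\mapsto\one_{\{h(v)\le t\}}$ is jointly Borel: its defining set $\{h(v)\le t\}$ is a measurable subset of the product space. Then $F_{j,t}$ is jointly measurable in $(t,v_{-j})$ by Fubini, and composing with the continuous function $z\mapsto-z\log z$ preserves measurability. Apart from this, the argument is a direct integration of the tensorization inequality, and I expect no substantive obstacle.
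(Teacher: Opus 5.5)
Your proposal is correct and matches the paper's proof: apply the functional-entropy tensorization to each threshold indicator $\one_{\{h(V)\le t\}}$, integrate over $t\ge0$, and use Tonelli's theorem to exchange the nonnegative integrals with the expectations over $V_{-j}$. Your remarks on truncation at the bound of $h$ and on joint measurability in $(t,v)$ only spell out details the paper leaves implicit.
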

\begin{proof}
For each $t\ge0$, apply \cref{lem:functional-entropy-tensorization} to
$Z:=\one_{\{h(V)\le t\}}$ and integrate over $t\ge0$. The left side becomes $L(h(V))$. On the right, Tonelli's theorem interchanges the nonnegative integrals and expectations, giving
\[
 L(h(V))\le\sum_{j=1}^m
 \E_{V_{-j}}\!\left[\int_0^\infty
 \Ent_{V_j}(\one_{\{h(V)\le t\}})\,dt\right]
 =\sum_{j=1}^m\E_{V_{-j}}[L(h(V_{-j},V_j))]. \qedhere
\]
\end{proof}

\subsection{The Mechanism Comparison and Its Implications}
The quantity left after subtracting cumulative entropy will be our bundle
statistic. For bounded nonnegative $X$, define
\[
 J(X):=\E[X]-L(X)\le\E[\min\{X,X'\}].
\]
To use this remainder for unbounded values, we express it directly
through the tail. Put $\varphi(q):=q+(1-q)\log(1-q)$ for
$0\le q\le1$, with $\varphi(1):=1$. Writing the mean and $L$ as tail
integrals gives
\[
 J(X)=\int_0^\infty\varphi(\Pr[X>t])\,dt.
\]
For unbounded $X$, define $J$ by this integral. Since $\varphi$ is
nonnegative and increasing, $J$ is nonnegative and monotone under
stochastic dominance. This lets us compare $J(w(V))$ with $J(S)$ even
when total value is unbounded.

We apply tensorization to $w$. After fixing the other values,
subtracting utility at value zero gives a truthful one-item mechanism
whose revenue is at most $r_j$. Along that coordinate, the increase
in $w$ equals unallocated value plus this one-item payment.
Translation invariance and $L(X)\le\E[X]$ bound the conditional
variation by these two terms. Summing over items then leaves $J(S)$
as the bundle term.

\begin{theorem}
\label{thm:entropy-comparison}
For every product distribution $D$, $\REV(D)\le J(S)+\SREV(D)$.
\end{theorem}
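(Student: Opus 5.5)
We follow the same template as the proof of the upper bound in \cref{thm:regular-bundling-two}, replacing $G$ by $L$. First fix a menu with nonnegative prices that contains a full-bundle offer at a finite price $M$. Then $0\le w(V)\le M$, so $L(w(V))$ is defined and $J(w(V))=\E[w(V)]-L(w(V))$. Since $w=p+\ell$, we have
\[
 \E[p(V)]=\E[w(V)]-\E[\ell(V)].
\]
It therefore suffices to show
\begin{equation*}
 L(w(V))\le\E[\ell(V)]+\SREV(D).
\end{equation*}
Given this, $\E[p(V)]\le J(w(V))+\SREV(D)\le J(S)+\SREV(D)$. The last step uses $w(V)\le S$ pointwise together with the monotonicity of $J$ under stochastic dominance, which follows from the tail-integral formula because $\varphi$ is increasing. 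For a general menu, we add the full-bundle offer at price $M$. As before, the new payment satisfies $p_M\ge\min\{p,M\}$. Letting $M\to\infty$ by monotone convergence and taking the supremum over menus gives the theorem.

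For the displayed bound, apply \cref{lem:entropy-tensorization} to $h=w$, which is bounded and nonnegative. This reduces the claim to a per-coordinate estimate, for each fixed $v_{-j}$:
\[
 L(g(V_j))\le\E_{V_j}\bigl[V_j(1-x_j(v_{-j},V_j))\bigr]+r_j,\qquad g(z):=w(v_{-j},z).
\]
Summing over $j$ and averaging over $V_{-j}$ then gives $\E[\ell(V)]+\SREV(D)$. To prove the estimate, use translation invariance and $L\le\E$ applied to the nonnegative variable $g(V_j)-g(0)$. This gives $L(g(V_j))\le\E[g(V_j)-g(0)]$, so it is enough to decompose the increase $g(z)-g(0)$.

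The main step is this decomposition. Write $\tilde u(z):=u(v_{-j},z)-u(v_{-j},0)$. Then
\[
 g(z)-g(0)=z-\tilde u(z)=z(1-x_j(v_{-j},z))+\bigl(z\,x_j(v_{-j},z)-\tilde u(z)\bigr).
\]
The second term is the payment of a one-item mechanism for item $j$: to a type $z$, offer allocation $x_j(v_{-j},z)$ at price $\tilde p(z):=z\,x_j(v_{-j},z)-\tilde u(z)$. Truthfulness of this mechanism follows from convexity of $z\mapsto u(v_{-j},z)$ with subgradient $x_j(v_{-j},z)$, which holds because the bidder picks options from a fixed menu with $v_{-j}$ held fixed. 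The resulting utility is $\tilde u(z)\ge0$ by monotonicity of $u$ (\cref{lem:menu-residual}), and $\tilde u(0)=0$, so the mechanism is individually rational. Its price satisfies $\tilde p\ge0$ because $u$ is $1$-Lipschitz in $z$ with slope at least $x_j$, giving $\tilde u(z)\le z x_j(v_{-j},z)$. By Myerson's single-item result, its expected revenue under $V_j\sim F_j$ is at most $r_j=\REV(F_j)$. Taking expectations gives the per-coordinate estimate.

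The main thing to verify carefully is that this derived one-item mechanism is a genuine truthful, individually rational mechanism. Concretely, one must check that the menu $\{(x_j(v_{-j},z),\tilde p(z))\}_z$ induces exactly utility $\tilde u$, which holds via convexity and the envelope representation. One must also check that atoms and right derivatives cause no trouble, so that the revenue bound $r_j$ applies.
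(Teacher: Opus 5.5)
Your proposal is correct and follows the paper's proof essentially step for step: the finite-price bundle reduction, tensorization of $L$, translation invariance together with $L\le\E$ applied to $g(V_j)-g(0)$, and the decomposition of this increase into item $j$'s unallocated value plus the payment of an induced truthful, individually rational one-item mechanism earning at most $r_j$. One small correction: $\tilde p(z)\ge0$ does not follow from the $1$-Lipschitz bound (which only gives $\tilde u(z)\le z$) or from $\partial_j^+u\ge x_j$; it follows from the subgradient inequality at $z$ evaluated at $0$, namely $u(v_{-j},0)\ge u(v_{-j},z)-z\,x_j(v_{-j},z)$, which is exactly the zero-type truthfulness the paper uses.
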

\begin{proof}
By the reduction in \cref{subsec:regular-bundling-proof}, it suffices
to consider menus with nonnegative prices containing a grand-bundle
option at a finite price $M$. Fix any such menu. Then $0\le w(V)\le M$
and $w(V)=p(V)+\ell(V)$.

Fix $j$ and $v_{-j}$. Normalize utility at value zero and define the
corresponding one-item payment by
\[
 u_j(z):=u(v_{-j},z)-u(v_{-j},0),
 \qquad p_j(z):=zx_j(v_{-j},z)-u_j(z).
\]
Restricting reports to this coordinate preserves truthfulness.
Monotonicity of $u$ gives $u_j(z)\ge0$, and truthfulness for the zero
type gives $p_j(z)\ge0$. Thus the induced one-item mechanism is
truthful and individually rational. Independence keeps $V_j$
distributed as $F_j$, so $\E_{V_j}[p_j(V_j)]\le r_j$.
Moreover,
\[
 w(v_{-j},z)-w(v_{-j},0)
 =z-u_j(z)=z(1-x_j(v_{-j},z))+p_j(z).
\]
This difference is nonnegative. Translation invariance and
$L(X)\le\E[X]$ therefore give
\begin{align*}
    L(w(v_{-j},V_j))&=L(w(v_{-j},V_j) - w(v_{-j}, 0))\\&\le \E_{V_j}[V_j(1-x_j(v_{-j},V_j))+p_j(V_j)]
 \\&\le\E_{V_j}[V_j(1-x_j(v_{-j},V_j))]+r_j.
\end{align*}

Applying \cref{lem:entropy-tensorization} and the conditional bound above yields
\begin{align*}
 L(w(V))
 \le\sum_{j=1}^m\E_{V_{-j}}[L(w(V_{-j},V_j))]&\le\sum_{j=1}^m\E_{V_{-j}}\!\left[\E_{V_j}[V_j(1-x_j(V))]+r_j\right]\\
 &=\E\!\left[\sum_{j=1}^mV_j(1-x_j(V))\right]+\sum_{j=1}^m r_j
 =\E[\ell(V)]+\SREV(D).
\end{align*}
Using $w(V)=p(V)+\ell(V)$, we therefore obtain
\[
 \E[p(V)]=\E[w(V)]-\E[\ell(V)]
 \le J(w(V))+\SREV(D).
\]
Since $w(V)\le S$ and $J$ is monotone,
$J(w(V))\le J(S)$. Taking the supremum over menus proves the claim.
\end{proof}

It remains to bound $J(S)$ by bundle revenue. A scalar value with
monopoly revenue $r$ has tail at most $\min\{1,r/t\}$, the
equal-revenue tail. Since $\varphi$ is increasing, this tail maximizes
$J$ among values with monopoly revenue at most $r$.

\begin{lemma}
\label{lem:entropy-bundle-cost}
For every nonnegative scalar $X$ with finite monopoly revenue $r$, $J(X) \le (\pi^2 /6) r$.
\end{lemma}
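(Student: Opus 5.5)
We bound $J(X)$ through its tail representation $J(X)=\int_0^\infty\varphi(\Pr[X>t])\,dt$. Since $\varphi(q)=q+(1-q)\log(1-q)$ has $\varphi'(q)=-\log(1-q)\ge0$ on $[0,1)$ and is continuous at $1$ with $\varphi(1)=1$, it is nondecreasing on $[0,1]$. It therefore suffices to bound each tail probability pointwise. For every $t>0$, $t\Pr[X>t]\le t\Pr[X\ge t]\le r$, so $\Pr[X>t]\le\min\{1,r/t\}$. If $r=0$, then $X=0$ almost surely and $J(X)=0$, so assume $r>0$. Monotonicity of $\varphi$ gives
\[
 J(X)\le\int_0^\infty\varphi\bigl(\min\{1,r/t\}\bigr)\,dt
 =r+\int_r^\infty\varphi(r/t)\,dt .
\]

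Next, substitute $q=r/t$, so that $dt=-r\,dq/q^2$. This turns the right side into $r\bigl(1+\int_0^1\varphi(q)q^{-2}\,dq\bigr)$. To evaluate this, integrate by parts with antiderivative $-1/q$ for $q^{-2}$:
\[
 \int_0^1\frac{\varphi(q)}{q^2}\,dq=\Bigl[-\frac{\varphi(q)}{q}\Bigr]_0^1+\int_0^1\frac{\varphi'(q)}{q}\,dq
 =-1+\int_0^1\frac{-\log(1-q)}{q}\,dq .
\]
The upper boundary term is $-\varphi(1)=-1$. The lower boundary term vanishes because $\varphi(q)=q^2/2+O(q^3)$ near $0$. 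The $-1$ cancels the leading $r$, so $J(X)\le r\int_0^1\frac{-\log(1-q)}{q}\,dq$, matching the formula in the overview.

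Finally, expand $-\log(1-q)/q=\sum_{n\ge1}q^{n-1}/n$. All terms are nonnegative, so by Tonelli we may integrate term by term. This gives $\sum_{n\ge1}1/n^2=\pi^2/6$, and hence $J(X)\le(\pi^2/6)r$.

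The only delicate step is justifying the boundary terms and convergence at the endpoints. Near $q=0$, the integrand $\varphi(q)/q^2$ is bounded, so there is no problem there. Near $q=1$, the integrand $-\log(1-q)/q$ has only an integrable logarithmic singularity, and the boundary value at $1$ is finite since $\varphi(1)=1$. To be safe, I would carry out the integration by parts on $[\varepsilon,1-\delta]$ and then pass to the limit, using continuity of $\varphi$ and monotone convergence. Everything else is routine.
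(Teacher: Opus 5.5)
Your proposal is correct and follows the paper's proof essentially step for step. Both use the pointwise tail bound $\min\{1,r/t\}$ with monotonicity of $\varphi$, then the substitution $q=r/t$, then integration by parts using $\varphi(q)=O(q^2)$ at zero, and finally a Tonelli expansion of $-\log(1-q)/q$ yielding $\sum_k 1/k^2=\pi^2/6$.
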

\begin{proof}
If $r=0$, then $X=0$ almost surely. For $r>0$, bound
$\Pr[X>t]$ by $1$ for $t\le r$ and by $r/t$ for $t>r$.
Monotonicity of $\varphi$ and the substitution $q:=r/t$ give
\[
 J(X)\le r\left(1+\int_0^1\frac{\varphi(q)}{q^2}\,dq\right).
\]
Integration by parts, using $\varphi'(q)=-\log(1-q)$ and
$\varphi(q)=O(q^2)$ at zero, then gives
\[
 1+\int_0^1\frac{\varphi(q)}{q^2}\,dq
 =\int_0^1\frac{-\log(1-q)}q\,dq
 =\sum_{k\ge1}\frac1{k^2}=\frac{\pi^2}{6}.
\]
The second equality follows by expanding the logarithm and applying
Tonelli's theorem.
\end{proof}

Apply \cref{lem:entropy-bundle-cost} to $S$, whose monopoly revenue is
$\BREV(D)$, and combine the result with \cref{thm:entropy-comparison} to improve
the bundle coefficient in \cref{cor:simple-three}.

\begin{corollary}
\label{cor:entropy-general}
For every product distribution $D$,
\[
 \REV(D)\le\SREV(D)+\frac{\pi^2}{6}\BREV(D).
\]
\end{corollary}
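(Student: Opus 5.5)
This corollary follows by chaining two results already in hand. \cref{thm:entropy-comparison} gives $\REV(D)\le J(S)+\SREV(D)$ for every product distribution $D$, where $S=\sum_j V_j$ is the grand-bundle value. The plan is to bound the term $J(S)$ by applying \cref{lem:entropy-bundle-cost} to the scalar $X:=S$.

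To invoke that lemma we must confirm its hypotheses. First, $S$ is a nonnegative scalar, since it is a sum of nonnegative item values. Second, $S$ must have finite monopoly revenue. By definition that revenue is $\sup_{t\ge0}t\Pr[S\ge t]=\BREV(D)$. It is finite by the first inequality of \cref{lem:bundle-tail}, namely $\BREV(D)\le m\SREV(D)<\infty$, which holds because each $r_j$ is finite. The lemma then gives $J(S)\le(\pi^2/6)\BREV(D)$.

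Substituting this into \cref{thm:entropy-comparison} yields $\REV(D)\le\SREV(D)+\frac{\pi^2}{6}\BREV(D)$. For the second inequality of \cref{thm:entropy-upper-summary}, bound each of $\SREV(D)$ and $\BREV(D)$ by $\max\{\SREV(D),\BREV(D)\}$, giving the factor $1+\pi^2/6<2.645$.

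There is no real obstacle. The one point to watch is that $J$ is defined through the tail integral $\int_0^\infty\varphi(\Pr[S>t])\,dt$, which makes sense even when $S$ is unbounded. Both \cref{thm:entropy-comparison} and \cref{lem:entropy-bundle-cost} are stated for that definition, so no boundedness of $S$ is needed.
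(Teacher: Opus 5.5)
Your proposal is correct and follows the paper's argument: apply \cref{lem:entropy-bundle-cost} to $S$, whose monopoly revenue is $\BREV(D)$, and substitute the resulting bound into \cref{thm:entropy-comparison}. The paper leaves the finiteness of $\BREV(D)$ implicit, and your check of it via \cref{lem:bundle-tail} is a sensible addition.
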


This proves \cref{thm:entropy-upper-summary}. For i.i.d. item values,
Kupfer~\cite{Kupfer2016} proves $\SREV(D)\le1.787\BREV(D)$.
Combining this bound with \cref{cor:entropy-general} gives the following
corollary.

\begin{corollary}
\label{thm:iid-bundling}
For i.i.d. item values, $ \REV(D)\le3.432\BREV(D)$.
\end{corollary}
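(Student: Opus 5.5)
This is a direct combination: \cref{cor:entropy-general} with Kupfer's inequality $\SREV(D)\le1.787\BREV(D)$ for i.i.d.\ item values~\cite{Kupfer2016}. The plan has three steps.

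First, I invoke \cref{cor:entropy-general}, which holds for every product distribution, and in particular for i.i.d.\ values:
\[
 \REV(D)\le\SREV(D)+\frac{\pi^2}{6}\BREV(D).
\]

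Second, I substitute Kupfer's bound for $\SREV(D)$ to obtain
\[
 \REV(D)\le\Bigl(1.787+\frac{\pi^2}{6}\Bigr)\BREV(D).
\]

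Third, I check the numerics. Since $\pi^2/6<1.64494$, we get $1.787+1.64494=3.43194<3.432$, which gives the stated constant.

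The main point needing care is compatibility of the model with Kupfer's result. His inequality is stated for i.i.d.\ nonnegative values with finite revenue, and that is exactly the setting of this paper: each marginal has finite monopoly revenue, and \cref{lem:bundle-tail} gives $\BREV(D)<\infty$. If Kupfer's statement assumed, for instance, bounded support, I would extend it by truncating the values at a level $M$ and letting $M\to\infty$. Under truncation, $\SREV$ of the truncated distribution converges to $\SREV(D)$, and $\BREV$ of the truncated bundle is at most $\BREV(D)$ because truncation is stochastically dominated. This passes the inequality to the limit. Beyond this check, the argument is a one-line combination of the two bounds.
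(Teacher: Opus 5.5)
Your proposal is correct and follows the paper's argument exactly: combine Corollary~\ref{cor:entropy-general} with Kupfer's bound $\SREV(D)\le1.787\,\BREV(D)$, and check that $1.787+\pi^2/6<3.4320$. Your truncation remark is a valid extra safeguard about the hypotheses of Kupfer's result, which the paper simply cites without addressing.
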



\section{Improved Lower Bounds}
\label{sec:improved-lower}
We now turn to lower bounds. We show that the natural target factor of $2$, suggested by the lower bound
of~\cite{Rubinstein2016}, cannot be achieved: three disjoint bundle prices can
outperform both separate selling and grand bundling by a factor greater
than $2.11$.

The lower bound of $2$ in~\cite{Rubinstein2016} combines two groups of items: one whose revenue
comes mainly from bundling, and another whose revenue comes mainly
from separate prices. To exceed two, we add a third group on which a
menu earns more than separate selling. We call this group the
\emph{seed}: its gain over separate selling supplies the improvement
beyond two. The challenge is to retain this gain without making the
grand bundle too profitable. We isolate the requirement in
\cref{sec:req}, then illustrate it with a very simple two-item seed in \cref{sec:two-item}. Finally,
\cref{sec:lower-analytic} chooses the seed through its total-value
distribution and obtains the factor $2.11$ stated in
\cref{thm:lower-summary}.

\subsection{The Lower Bound Framework}
\label{sec:req}
Write $\SREV_{\mathrm{seed}}$ for the seed's separate revenue,
$\REV_{\mathrm{seed}}$ for the revenue of a chosen truthful menu, and
$S_{\mathrm{seed}}$ for its total value. The seed provides a gain
$\REV_{\mathrm{seed}}-\SREV_{\mathrm{seed}}>0$. To preserve it in a
larger instance, we seek a constant $c$ satisfying $c\ge\SREV_{\mathrm{seed}}$ and $\REV(c+S_{\mathrm{seed}})=c$.

This constant will be the common revenue target for both simple
benchmarks. A bundle price of $c$ sells surely. The second condition
ensures that adding the seed's value to this constant does not raise
bundle revenue: higher prices reduce the sale probability enough to offset the
price increase.

An item with constant value $c$ would give the required control over
bundling, but it would also add $c$ to separate revenue. Instead, we
spread the value across many independent items whose sum concentrates
near $c$ and whose separate revenue tends to zero. A rare, expensive
item then supplies separate revenue $c-\SREV_{\mathrm{seed}}$ while barely
affecting the grand bundle. Both simple benchmarks approach $c$.
Selling the concentrated group together, using the seed menu, and
pricing the rare item separately earns approximately
\[
 c+\REV_{\mathrm{seed}}+(c-\SREV_{\mathrm{seed}})
 =2c+\REV_{\mathrm{seed}}-\SREV_{\mathrm{seed}}.
\]
Thus the concentrated group supplies the revenue from bundling, the
rare item fills the gap in separate revenue, and the seed preserves
its gain over separate selling. This gain, divided by $c$, is the
improvement beyond two. The following lemma implements the
construction with finitely many items.

\begin{lemma}[Completing a seed]
\label{lem:lower-completion}
Consider a finite product of bounded nonnegative item values with
total value $S_{\mathrm{seed}}$, separate revenue $\SREV_{\mathrm{seed}}$,
and a truthful menu earning $\REV_{\mathrm{seed}}$.
If $c>0$, $c\ge\SREV_{\mathrm{seed}}$, and
$\REV(c+S_{\mathrm{seed}})=c$, then, for every
$\varepsilon>0$, there is a finite product distribution $D$ with bounded
nonnegative item values and a truthful menu earning more than
\[
 \left(2+\frac{\REV_{\mathrm{seed}}-\SREV_{\mathrm{seed}}}{c}
       -\varepsilon\right)\max\{\SREV(D),\BREV(D)\}.
\]
\end{lemma}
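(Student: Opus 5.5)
We build $D$ from three independent groups and compare one explicit menu against both simple benchmarks. The groups are the seed items; $n$ concentrating items $Y_1,\ldots,Y_n$, i.i.d.\ uniform on $[0,2c/n]$, so that $T_n:=\sum_i Y_i$ has mean $c$ and variance $O(c^2/n)$; and one rare item $Z$ equal to $H$ with probability $(c-\SREV_{\mathrm{seed}})/H$ and $0$ otherwise, with $H$ large. All values are bounded and the product is finite.

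\textbf{The menu.} Offer the concentrating group as one bundle at price $c(1-\delta)$. Run the seed's truthful menu on the seed items. Post price $H$ on $Z$. These are prices on disjoint item sets, so the combined menu is truthful for an additive bidder. Its revenue is
\[
c(1-\delta)\Pr[T_n\ge c(1-\delta)]+\REV_{\mathrm{seed}}+(c-\SREV_{\mathrm{seed}}).
\]
By Chebyshev, letting $n\to\infty$ for fixed $\delta$ and then $\delta\to0$ drives this to $2c+\REV_{\mathrm{seed}}-\SREV_{\mathrm{seed}}$.

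\textbf{Separate revenue.} Each $Y_i$ has revenue at most $2c/n$, so the concentrating group contributes at most $2c\to0$ only as a total bound; more precisely each item's revenue is $c/(2n)$, giving total $c/2$.

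This is the step that needs care. To make the group's separate revenue vanish, I will instead take $Y_i$ equal to $c/n$ plus a small uniform perturbation of width $\eta/n$. Their separate revenue is then about $c$, which is not small. The fix is to absorb a constant part: item revenue is at least its minimum, so a sum concentrating at $c$ forces separate revenue $\ge$ roughly $c\cdot\Pr$. Bernoulli items avoid this. Take $Y_i=c/(np)$ with probability $p$, where $p\to0$ and $np\to\infty$. Then $T_n$ still concentrates at $c$, and the separate revenue is $n\cdot p\cdot c/(np)=c$. That is again $c$.

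So I will take $Y_i=a$ with probability $q_n$, choosing $nq_na=c$ with $q_n\to1$? Revenue per item is at least $a q_n$, which again totals $c$. Every such choice gives $\SREV(\text{group})\ge c\cdot\Pr[\cdot]$. The intended reading must therefore be to tolerate this, with the rare item supplying $c-\SREV_{\mathrm{seed}}-(\text{group revenue})$.

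I would resolve this by using heavy-tailed items. Take $Y_i$ equal-revenue truncated with tiny revenue $\rho$, e.g.\ $\Pr[Y_i\ge t]=\min\{1,\rho/t\}$ capped at $1/\rho$. Then $\SREV$ for the group is $n\rho$, while the mean is about $n\rho\log(1/\rho^2)$. Choosing $n\rho\log(1/\rho^2)=c$ gives group separate revenue $\approx c/(2\log(1/\rho))\to0$ with concentration (relative variance $\to0$). Hence $\SREV(D)\le\SREV_{\mathrm{seed}}+o(1)+(c-\SREV_{\mathrm{seed}})+o(1)\to c$, using that $Z$'s revenue is exactly $c-\SREV_{\mathrm{seed}}$ once $H$ is large.

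\textbf{Bundle revenue (main obstacle).} We need $\BREV(D)\le c+o(1)$. At a price $t$, a sale requires either $Z=H$, contributing at most $t\cdot(c-\SREV_{\mathrm{seed}})/H$, which is negligible for $t$ up to the bounded total plus $H$, with more care when $t\approx H$; or $T_n+S_{\mathrm{seed}}\ge t$. When $Z=H$ and $t>H$, the revenue is at most $(t/H)(c-\SREV)\cdot\Pr[\ldots]$, which is $o(1)$ relative to $c$ as $H\to\infty$ because the other values are bounded. For the second term, $T_n\to c$ in probability, so $\Pr[T_n+S_{\mathrm{seed}}\ge t]\le\Pr[c+S_{\mathrm{seed}}\ge t-\delta]+o(1)$, uniformly in $t$ over the bounded range. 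Applying the hypothesis $\REV(c+S_{\mathrm{seed}})=c$ at price $t-\delta$ then bounds the revenue by $c\,t/(t-\delta)+o(1)$. The delicate point is uniformity near small $t$ and the truncation of $Y_i$; the heavy-tailed upper part of $T_n$ must not create a profitable high price. This holds because its excess tail has vanishing mass and bounded support.

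Letting $\delta\to0$, $n\to\infty$ and $H\to\infty$ in the right order gives the ratio $(2c+\REV_{\mathrm{seed}}-\SREV_{\mathrm{seed}})/c-\varepsilon$, which is the claimed bound.
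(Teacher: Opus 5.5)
Your architecture matches the paper's: the seed menu, a concentrated group of truncated equal-revenue items sold as one bundle, and a rare item priced separately. The gap is that the concentrated group you finally choose does not concentrate at $c$, and this breaks both the menu revenue and the bundling bound.

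With $\Pr[Y_i\ge t]=\min\{1,\rho/t\}$ capped at $1/\rho$, one computes $\E[Y_i^2]=2-\rho^2$. So $\operatorname{Var}(T_n)\approx 2n\to\infty$, and the relative variance diverges rather than vanishes. In fact $T_n\to c/2$ in probability. Since $n\rho\approx c/(2\log(1/\rho))\to0$, with probability tending to one no $Y_i$ exceeds $1$. Meanwhile $\sum_iY_i\mathbf{1}_{\{Y_i\le1\}}$ has mean $n\rho\log(1/\rho)\to c/2$ and variance at most $n\rho\to0$. Half of $\E[T_n]$ sits in values that are almost never realized. Consequently the bundle at price $c(1-\delta)$ almost never sells, the group earns only about $c/2$, and the hypothesis $\REV(c+S_{\mathrm{seed}})=c$ no longer controls the resulting grand bundle.

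The error is tying the truncation level to the per-item scale. For $n$ items with tail $a/t$ on $[a,aK]$, the mean is about $na\log K$ and the variance about $2na^2K$, so Chebyshev needs $n\gg K/\log^2K$. Your choice has $K=\rho^{-2}$ and $n\approx cK^{1/2}/\log K$. The paper keeps the two parameters separate: it takes $N^2$ copies of $cZ_N/(N^2H_N)$ with $Z_N$ truncated at $N$. This gives $\E C_N=c$, $\operatorname{Var}(C_N)\le 2c^2/(NH_N^2)$, and separate revenue $c/H_N$.

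Once the group is repaired, your bundle bound still needs a uniform argument at large prices. The support of $T_n$ grows with $n$, so ``bounded support'' gives no uniformity. Convergence in probability alone does not control $t\Pr[T_n>c+\delta]$ for $t$ comparable to the maximum of $T_n$. Use $\E|T_n-c|\to0$ with Markov's inequality instead. A sale at price $t$ requires $c+S_{\mathrm{seed}}\ge(1-\theta)t$ or $|T_n-c|\ge\theta t$, so every price earns at most $c/(1-\theta)+\E|T_n-c|/\theta$. This is the paper's inequality $\REV(X+E)\le(\sqrt{\REV(X)}+\sqrt{\E[E]})^2$ with $X=c+S_{\mathrm{seed}}$ and $E=|T_n-c|$.

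Your treatment of the rare item also misstates a case. Let $W$ be the total of the other items, bounded by $W_{\max}$, and put $d:=c-\SREV_{\mathrm{seed}}$. Prices between $W_{\max}$ and $H+W_{\max}$ can earn up to $d(1+W_{\max}/H)$, not $o(1)$ as you claim. This is harmless only because $d\le c$. A case split over price ranges gives $\BREV(D)\le\max\{\REV(W),d\}+dW_{\max}/H$, which is what you need.
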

\begin{proof}
First, construct the concentrated group. Let $Z_N$ take integer values
in $\{1,\ldots,N\}$ with tail $\Pr[Z_N\ge j]:=1/j$, and put
$H_N:=\sum_{j=1}^N1/j$. The relations
\[
 \E Z_N=H_N,\qquad
 \E Z_N^2=\sum_{j=1}^N\frac{2j-1}{j}\le2N,\qquad
 \REV(Z_N)=1
\]
show that $Z_N$ has a growing mean but constant monopoly revenue.
Take $N^2$ independent items, each distributed as
$cZ_N/(N^2H_N)$, and denote their sum by $C_N$. Then
\[
 \E C_N=c,\qquad
 \operatorname{Var}(C_N)\le\frac{2c^2}{NH_N^2}.
\]
The group has separate revenue $c/H_N$. For every fixed
$0<\delta<c$, Chebyshev's inequality shows that a bundle price
$c-\delta$ sells with probability tending to one. Taking
$N$ large and then $\delta$ small makes the group's bundle revenue
arbitrarily close to $c$, while its separate revenue tends to zero.

We next check that replacing the constant by this concentrated group
preserves the bound on grand bundling. For nonnegative random
variables $X,E$ and $0<\theta<1$,
a sale at price $p$ requires either $X\ge(1-\theta)p$ or
$E\ge\theta p$. The union bound and Markov's inequality give
\[
 \REV(X+E)\le\frac{\REV(X)}{1-\theta}+\frac{\E[E]}{\theta}.
\]
Optimizing over $\theta$ yields
\[
 \REV(X+E)\le\left(\sqrt{\REV(X)}+\sqrt{\E[E]}\right)^2.
\]
Apply this with $X:=S_{\mathrm{seed}}+c$ and $E:=|C_N-c|$.
Since $S_{\mathrm{seed}}+C_N\le X+E$ and $\E|C_N-c|\to0$,
bundling the seed with the concentrated group earns at most $c+o(1)$.

Finally, add a rare item to bring separate revenue close to $c$.
Put $d:=c-\SREV_{\mathrm{seed}}$. For each fixed $N$, the combined
value $W$ of the seed and concentrated group is bounded by some $W_{\max}$.
Add an independent item worth $M$ with probability $d/M$ and zero
otherwise, where $M>\max\{W_{\max},d\}$. Its separate revenue is $d$.
Prices at most $W_{\max}$ gain at most $dW_{\max}/M$ from this item; prices between
$W_{\max}$ and $M$ earn at most $d$; and prices above $M$ earn at most
$d(1+W_{\max}/M)$. The new bundle revenue is therefore at most
\[
 \max\{\REV(W),d\}+\frac{dW_{\max}}{M}.
\]

Use the seed menu, sell the concentrated group together, and price
the rare item at $M$. Offers on disjoint groups form a truthful menu.
As $M\to\infty$ and then $N\to\infty$, with the concentrated group's
price approaching $c$, the menu revenue approaches
$2c+\REV_{\mathrm{seed}}-\SREV_{\mathrm{seed}}$, while the larger
simple benchmark approaches $c$. Sufficiently accurate finite choices
prove the claim.
\end{proof}

\subsection{A Two-Item Seed}
\label{sec:two-item}
A two-item example already gives a ratio strictly above $2$.
Let $X$ and $Y$ be independent and identically distributed, each taking
values $0,1,2,3$ with respective probabilities
$3/5,1/5,1/15,2/15$, and put $S_{\mathrm{seed}}:=X+Y$.
Each of the item prices $1,2,3$ earns the same revenue. Compared with
pricing each item at $3$, selling the pair at price $3$ gains sales
at value pairs $(1,2)$, $(2,1)$, and $(2,2)$, while giving a discount at $(3,3)$.
The additional sales more than pay for this discount.

For completion, take $c:=2$. The atom at zero makes higher prices for
$2+S_{\mathrm{seed}}$ unattractive: even a small price increase sharply reduces
the sale probability, and a price of at least $6$ requires both items to be
positive. Three short checks cover all possible prices.

\begin{lemma}
\label{lem:lower-hand-seed}
The two-item seed has separate revenue $\SREV_{\mathrm{seed}}=4/5$.
Selling the two items together at price $3$ earns $\REV_{\mathrm{seed}}=21/25$, and
$\REV(2+S_{\mathrm{seed}})=2$.
\end{lemma}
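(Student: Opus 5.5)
The plan is direct computation. All three quantities involve discrete distributions on a few atoms, and for a posted price on a variable with integer support only prices equal to support points need checking.

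\textbf{Separate revenue.} First I record the tails of $X$ (and of $Y$, which has the same law):
\[
\Pr[X\ge1]=\tfrac25,\qquad \Pr[X\ge2]=\tfrac15,\qquad \Pr[X\ge3]=\tfrac2{15}.
\]
The prices $1,2,3$ therefore each earn $\tfrac25$. Any price in $(k-1,k]$ earns at most what price $k$ earns, so $\REV(X)=\tfrac25$, and summing over the two items gives $\SREV_{\mathrm{seed}}=\tfrac45$.

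\textbf{Bundle at price $3$.} I compute $\Pr[S_{\mathrm{seed}}\le2]$ by listing the pairs with sum at most $2$:
\[
(0,0):\ \tfrac9{25},\qquad (0,1),(1,0):\ \tfrac6{25},\qquad (1,1):\ \tfrac1{25},\qquad (0,2),(2,0):\ \tfrac2{25}.
\]
These total $\tfrac{18}{25}$, so $\Pr[S_{\mathrm{seed}}\ge3]=\tfrac7{25}$. Posting price $3$ on the pair is a truthful menu, and it earns $3\cdot\tfrac7{25}=\tfrac{21}{25}$.

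\textbf{The bound $\REV(2+S_{\mathrm{seed}})=2$.} Price $2$ sells with certainty, so the revenue is at least $2$. For the upper bound, $S_{\mathrm{seed}}$ is integer valued in $\{0,\dots,6\}$, so it suffices to check prices $2+k$ for $k=1,\dots,6$. Prices at most $2$ earn at most $2$, and a price in $(2+k-1,2+k]$ is dominated by the price $2+k$. The required tails follow from the computation above together with
\[
\Pr[S_{\mathrm{seed}}=3]=\tfrac4{25}+\tfrac2{75},\qquad \Pr[S_{\mathrm{seed}}\ge5]=\tfrac8{225},\qquad \Pr[S_{\mathrm{seed}}=6]=\tfrac4{225}.
\]
The revenues at each price are:
\begin{itemize}
\item price $3$: $3\cdot\tfrac{16}{25}=\tfrac{48}{25}$;
\item price $4$: $4\cdot\tfrac{10}{25}=\tfrac{40}{25}$;
\item price $5$: $5\cdot\tfrac7{25}=\tfrac{35}{25}$;
\item price $6$: $6\cdot\tfrac7{75}=\tfrac{42}{75}$;
\item price $7$: $7\cdot\tfrac8{225}$;
\item price $8$: $8\cdot\tfrac4{225}$.
\end{itemize}
Every one of these is below $2$.

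Following the paper's ``three short checks'', I would group these cases as follows. Prices in $(2,3]$ lose the atom $\tfrac9{25}$ at zero. Prices in $(3,5]$ require $S_{\mathrm{seed}}\ge2$ or $\ge3$. Prices of at least $6$ essentially require both items to be positive, which caps the sale probability at $\Pr[X\ge1]^2=\tfrac4{25}$; then $8\cdot\tfrac4{25}<2$ settles all such prices at once.

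The only real obstacle is arithmetic care in the convolution. The closest case is price $3$, at $\tfrac{48}{25}$ versus the target $2$, so I would double-check the value $\Pr[S_{\mathrm{seed}}=0]=\tfrac9{25}$ that it depends on.
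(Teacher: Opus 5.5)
Your proposal is correct and is essentially the paper's direct finite computation; all of your tails and per-price revenues check out. The only difference is bookkeeping: you compute the exact law of $S_{\mathrm{seed}}$ and check each integer price, whereas the paper gets $21/25$ from a gain/loss comparison with pricing each item at $3$, and covers the shifted prices with three grouped bounds (price $3$; prices $4,5$ via $\Pr[S_{\mathrm{seed}}\ge2]=2/5$; prices $6$ through $8$ via both items being positive).
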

\begin{proof}
The item tails at $1,2,3$ are $2/5,1/5,2/15$, respectively.
Each price earns $2/5$, and no other price earns more, so
$\SREV_{\mathrm{seed}}=4/5$.
Compared with pricing each item at $3$, the bundle gains a sale on the
three moderate pairs, whose total probability is $7/225$. It loses
$3$ on $(3,3)$, whose probability is $4/225$. The revenue gain is
$3(7-4)/225=1/25$, giving $\REV_{\mathrm{seed}}=21/25$.

The shifted total is integer-valued and at most $8$, so it suffices
to check prices $2,3,\ldots,8$. Price $2$ sells surely. Price $3$
sells unless both items are zero, and earns
\[
 3\bigl(1-(3/5)^2\bigr)=\frac{48}{25}<2.
\]
For prices $4$ and $5$, a sale requires $S_{\mathrm{seed}}\ge2$,
which has probability $1-9/25-6/25=2/5$. Their revenue is at most
$5(2/5)=2$. For prices $6,7,8$, both items must be positive, an event
of probability $(2/5)^2$. Their revenue is at most
$8(2/5)^2=32/25<2$. Hence $\REV(2+S_{\mathrm{seed}})=2$.
\end{proof}

The completion lemma converts the seed's gain of $1/25$ into an
improvement of $(1/25)/2=0.02$ beyond two. Since the seed is sold
at a single bundle price, the completed mechanism posts prices for three
disjoint groups: the seed, the concentrated group, and the rare item.

\begin{theorem}[A lower bound using three bundle prices]
\label{thm:lower-hand-proof}
For every $\varepsilon>0$, there is a finite product distribution $D$
for which posted prices for three disjoint bundles earn more than
\[
 (2.02-\varepsilon)\max\{\SREV(D),\BREV(D)\}.
\]
\end{theorem}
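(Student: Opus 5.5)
The plan is to feed the two-item seed of \cref{lem:lower-hand-seed} into the completion lemma, \cref{lem:lower-completion}, with $c:=2$. First I check its hypotheses.

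\begin{itemize}
\item The seed is a finite product of bounded nonnegative values: $X,Y\in\{0,1,2,3\}$.
\item The selling rule ``offer the pair at price $3$'' is a truthful menu. It consists of the single option (both items, price $3$) together with the outside option, and it earns $\REV_{\mathrm{seed}}=21/25$.
\item The separate revenue is $\SREV_{\mathrm{seed}}=4/5$, so $c=2>0$ and $c\ge\SREV_{\mathrm{seed}}$.
\item $\REV(2+S_{\mathrm{seed}})=2=c$.
\end{itemize}

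All three quantitative facts are exactly the content of \cref{lem:lower-hand-seed}.

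Applying \cref{lem:lower-completion} with $\varepsilon$ then yields, for every $\varepsilon>0$, a finite product distribution $D$ with bounded nonnegative values and a truthful menu earning more than
\[
\left(2+\frac{21/25-4/5}{2}-\varepsilon\right)\max\{\SREV(D),\BREV(D)\}=(2.02-\varepsilon)\max\{\SREV(D),\BREV(D)\},
\]
since $(21/25-20/25)/2=1/50=0.02$.

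It remains to check that the menu produced by the completion is a posted-price menu for three disjoint bundles. The proof of \cref{lem:lower-completion} builds it from three components:
\begin{itemize}
\item the seed menu, applied to the seed items;
\item a single price for the whole concentrated group, close to $c$;
\item a price $M$ for the rare item.
\end{itemize}
Because the seed menu is itself one bundle price (price $3$ on $\{X,Y\}$), the combined mechanism posts three prices on the three disjoint bundles: the seed pair, the concentrated group, and the rare item. The bidder buys each bundle independently exactly when its value meets its price, which is precisely the additive disjoint-offer menu the lemma uses.

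I expect no real obstacle. The only point needing care is confirming that the completion lemma's menu has the claimed form, not merely that it is truthful. This can be read off its proof, which explicitly says ``use the seed menu, sell the concentrated group together, and price the rare item at $M$.''
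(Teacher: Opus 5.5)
Your proposal is correct and matches the paper's argument. The paper likewise plugs the two-item seed of \cref{lem:lower-hand-seed} into \cref{lem:lower-completion} with $c=2$, which turns the gain $21/25-4/5=1/25$ into the improvement $0.02$. It then observes, as you do, that because the seed is sold at a single bundle price, the completed mechanism posts prices for three disjoint groups: the seed, the concentrated group, and the rare item.
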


\subsection{An Improved Seed}
\label{sec:lower-analytic}

We obtain the factor $2.11$ by first choosing a total-value
distribution that satisfies the completion condition, then realizing
it with independent items whose separate revenue is small. The total
must often be large enough to sell as a bundle, while higher prices
for the shifted total must reduce the sale probability sufficiently. Concentration alone
cannot achieve this balance: if $S_{\mathrm{seed}}\ge a>0$ surely,
then price $c+a$ sells surely, violating
$\REV(c+S_{\mathrm{seed}})=c$.

Our target combines two independent components. Let $X$ be zero with
probability $3/4$ and otherwise exponentially distributed with rate $1$; let $Y$ be
zero with probability $1/12$ and otherwise exponentially distributed with rate $4$.
The first component is less frequent but larger, while the second is
more frequent but smaller. Put $S_{\mathrm{seed}}:=X+Y$. Conditioning on
which components are present gives
\[
 Q(t):=\Pr[S_{\mathrm{seed}}\ge t]
 =\frac{47}{144}e^{-t}+\frac{11}{18}e^{-4t}
 \qquad(t>0),
\]
and $\Pr[S_{\mathrm{seed}}=0]=1/16$.
The atom at zero makes a small increase above $c$ reduce the sale
probability by a fixed positive amount. The exponential tails control larger price
increases. At the same time, the two components combine often enough
to support a profitable seed bundle.

The components describe the total we want, rather than two physical
items. Putting them in different item populations would let separate
selling choose a price for each value scale. Instead, we express each
component as a Poisson sum of positive contributions and distribute
both kinds of contributions among the same items. Every item then
has the same mixture of the two scales, so separate selling must
choose one price for that mixture. The seed bundle retains the chosen
distribution of total value.

\paragraph{Realizing the Target with Independent Items.}
The compound-Poisson representation uses the nonnegative intensity
\[
 \nu(x):=\frac{e^{-x}-e^{-4x/3}+e^{-4x}-e^{-48x}}{x}
 \qquad(x>0).
\]
Each component contributes one difference of exponentials: an
exponential random variable of rate $a$, present with probability $q\in(0,1)$,
corresponds to the pair of rates $a$ and $a/(1-q)$ in this formula.
Thus our two components give the pairs $(1,4/3)$ and $(4,48)$.
The intensity has total mass
$\lambda:=\int_0^\infty\nu(x)\,dx=\log16$ and first moment
$\int_0^\infty x\nu(x)\,dx=23/48$.

To split this intensity among $n$ independent items, let
$N_i\sim\operatorname{Pois}(\lambda/n)$ be independent counts.
Let $W_{i,n}$ be the sum of $N_i$ positive contributions with density
$\nu/\lambda$, all drawn independently of one another and of the counts. As $n$ grows, each item is
unlikely to receive a contribution, and every contribution is drawn
from the same mixture of the two scales. Their total has exactly the
target law for every $n$: the compound-Poisson identity gives, for
$s\ge0$,
\[
 \E e^{-s\sum_iW_{i,n}}
 =\exp\!\left(-\int_0^\infty(1-e^{-sx})\nu(x)\,dx\right)
 =\frac{(1+3s/4)(1+s/48)}{(1+s)(1+s/4)}.
\]
This is also the Laplace transform of $S_{\mathrm{seed}}$.
The integral identity behind the calculation appears in
Appendix~\ref{app:lower-analytic}.

Splitting the intensity preserves the aggregate law while making
multiple contributions to any one item rare. The proof below shows
that their contribution to separate revenue vanishes uniformly over
prices. We can therefore bound separate revenue through single
contributions while keeping the bundle's total-value distribution
unchanged.

\paragraph{The Revenue Bounds.}
We sell the seed at price $1/2$ and take $c:=1/4$. Write
$\REV_{\mathrm{seed}}:=(1/2)Q(1/2)$ for the revenue at this price.
For separate selling, we first bound the revenue from items receiving
exactly one contribution; the revenue from items receiving multiple
contributions vanishes as $n$ grows. The required bounds come from elementary exponential estimates,
proved in Appendix~\ref{app:lower-analytic}.

\begin{lemma}
\label{lem:lower-analytic-seed}
The prescribed total and intensity satisfy
\[
 \REV(c+S_{\mathrm{seed}})=c=1/4
 \qquad\text{and}\qquad
 \REV_{\mathrm{seed}}>\frac7{50}.
\]
Moreover, for every $t>0$,
\[
 t\int_t^\infty\nu(x)\,dx\le\frac9{80}.
\]
\end{lemma}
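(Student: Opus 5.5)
Each of the three claims reduces to a one-variable inequality in explicit exponentials, which I will verify by splitting into ranges and using monotonicity or convexity. Throughout I use the explicit tail
\[
 Q(t)=\frac{47}{144}e^{-t}+\frac{11}{18}e^{-4t}\quad(t>0),\qquad \Pr[S_{\mathrm{seed}}=0]=\tfrac1{16}.
\]

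\emph{Seed revenue.} Evaluate $\REV_{\mathrm{seed}}=\tfrac12 Q(1/2)=\tfrac12\bigl(\tfrac{47}{144}e^{-1/2}+\tfrac{11}{18}e^{-2}\bigr)$. With $e^{-1/2}>0.6065$ and $e^{-2}>0.1353$, this gives about $0.0990+0.0414\approx0.1404>0.14$. The margin is small, so I will state rational lower bounds for $e^{-1/2}$ and $e^{-2}$, for instance from truncated alternating series, and check the sum exactly.

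\emph{Completion condition.} Price $1/4$ sells surely, so $\REV(c+S_{\mathrm{seed}})\ge1/4$. For a price $p=1/4+t$ with $t>0$, the revenue is $(1/4+t)Q(t)$, and I must show $h(t):=(1/4+t)Q(t)\le1/4$ for all $t>0$. As $t\downarrow0$, $Q(t)\to 15/16$, so $h\to 15/64<1/4$; this is where the atom at zero helps. Rather than working with $h'$, I will bound each exponential term separately:
\[
 \bigl(\tfrac14+t\bigr)e^{-t}\le\max_t\bigl(\tfrac14+t\bigr)e^{-t}=e^{-3/4},\qquad \bigl(\tfrac14+t\bigr)e^{-4t}\le\tfrac14 .
\]
This gives $h\le\tfrac{47}{144}(0.4724)+\tfrac{11}{18}\cdot\tfrac14\approx0.154+0.153=0.307$, which is too weak. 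Instead I will split $t$ into ranges. For $t\ge t_0$ both terms decay, and crude bounds suffice. On a bounded range I will use interval bounds: partition $[0,t_0]$ into a few subintervals and bound $h\le(\tfrac14+b)Q(a)$ on $[a,b]$, since $Q$ is decreasing. Here $t_0$ is about $2$, because $(1/4+t)e^{-t}\cdot 47/144$ is already small there. A quick check at $t=0.25$ gives $0.5\,(0.2542+0.2248)=0.2395$, and at $t=0.5$ gives $0.75\,(0.198+0.0827)=0.21$, so the maximum appears to be near $t\approx0.15$–$0.2$ with value about $0.24$. There is enough room for a grid of width about $0.05$, which makes this a finite verification. \textbf{This step is the main obstacle}, because the margin near small $t$ is only about $0.01$.

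\emph{Intensity tail.} I need $t\int_t^\infty\nu(x)\,dx\le 9/80$. Write
\[
 \int_t^\infty\frac{e^{-ax}-e^{-bx}}{x}\,dx=\int_a^b\int_t^\infty e^{-sx}\,dx\,ds=\int_a^b\frac{e^{-st}}{t}\,ds,
\]
so that
\[
 t\int_t^\infty\nu=\int_1^{4/3}e^{-st}\,ds+\int_4^{48}e^{-st}\,ds=\frac{e^{-t}-e^{-4t/3}+e^{-4t}-e^{-48t}}{t}.
\]
Each piece is at most $\max_t(e^{-at}-e^{-bt})/t$, which I will bound by elementary estimates. For the first piece, the bound $\tfrac13$ is too crude, so I will treat $k(t):=(e^{-t}-e^{-4t/3}+e^{-4t}-e^{-48t})/t$ directly. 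For small $t$, use $e^{-4t}-e^{-48t}\le1-e^{-48t}$ together with explicit bounds. For moderate and large $t$, use $e^{-t}-e^{-4t/3}\le e^{-t}(1-e^{-t/3})\le e^{-t}t/3$ and $e^{-4t}-e^{-48t}\le e^{-4t}$, so that
\[
 k(t)\le \frac{e^{-t}}{3}+\frac{e^{-4t}}{t}.
\]
Some case splitting on $t$ will again be needed, handled by the same monotone interval-bounding technique. In each range I will confirm that the resulting bound stays below $9/80=0.1125$.
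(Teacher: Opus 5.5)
\textbf{Gap in the intensity-tail step.} Your third step contains a computational error that turns the target into a false inequality. The inner integral is $\int_t^\infty e^{-sx}\,dx=e^{-st}/s$, not $e^{-st}/t$. With your version, the function $k(t)$ you set out to bound is exactly $\nu(t)$, so you are in effect claiming $t\int_t^\infty\nu(x)\,dx=\nu(t)$. But $\nu(t)\le 9/80$ fails badly. As $t\downarrow0$, $\nu(t)\to-1+\frac43-4+48=\frac{133}{3}$. Even at $t=1$, $\nu(1)=e^{-1}-e^{-4/3}+e^{-4}-e^{-48}\approx0.1226>0.1125$. So no case split can rescue this step. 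The correct reduction is
\[
 t\int_t^\infty\nu(x)\,dx
 =t\int_1^{4/3}\frac{e^{-st}}{s}\,ds+t\int_4^{48}\frac{e^{-st}}{s}\,ds
 =t\int_t^{4t/3}\frac{e^{-y}}{y}\,dy+t\int_{4t}^{48t}\frac{e^{-y}}{y}\,dy.
\]
This is a combination of exponential integrals with no closed form. Its maximum, about $0.109$ near $t\approx0.26$, lies only about $0.004$ below $9/80$, so crude estimates will not do. The paper proceeds as follows:
\begin{itemize}
\item It bounds the first piece by the trapezoid rule, $\int_1^{4/3}u^{-1}e^{-ut}\,du\le\frac16e^{-t}+\frac18e^{-4t/3}$, using convexity in $u$.
\item It drops the $-e^{-48x}$ term and uses $\int_a^\infty x^{-1}e^{-x}\,dx\le\frac{4a+9}{21a}e^{-a}$, checked by differentiating.
\item It proves the resulting explicit bound $\eta(t)\le9/80$ by showing that $e^{4t}(9/80-\eta(t))$ has a positive quadratic lower bound, via a third-derivative estimate.
\end{itemize}

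\textbf{Numerics in the completion step.} This step is valid in principle: $Q$ decreases, so $(\frac14+t)Q(t)\le(\frac14+b)Q(a)$ on $[a,b]$. However, your numerics misjudge it. The maximum of $(\frac14+t)Q(t)$ is about $0.2470$, near $t\approx0.12$, so the margin is about $0.003$, not $0.01$. A grid of width $0.05$ fails at once. On $[0,0.05]$ your bound is $0.3\cdot\frac{15}{16}=0.28125>\frac14$, and on $[0.1,0.15]$ it is about $0.282$. The bound on $[a,b]$ exceeds the value at $a$ by $(b-a)Q(a)$. You would therefore need widths below $0.005$ near the maximum, which means several dozen certified exponential evaluations. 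The paper avoids this by clearing denominators. The inequality is equivalent to
\[
 \psi(t):=144e^{4t}-(1+4t)(47e^{3t}+88)\ge0 .
\]
We have $\psi(0)=9$, $\psi'(0)=-105$, and $\psi''(t)=e^{3t}(2304e^t-1551-1692t)\ge750$ using $e^t\ge1+t$. Integrating twice gives $\psi(t)\ge9-105t+375t^2>0$.

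Your seed-revenue step is fine once you make the rational bounds on $e^{-1/2}$ and $e^{-2}$ explicit.
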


To apply the completion lemma, cap each item at the seed price $1/2$.
This makes the items bounded and preserves sales at price
$1/2$: if an item exceeds the cap, its capped value alone still
suffices for a sale. Capping can only decrease separate revenue and
the shifted total, so it also preserves the other completion
requirements. The resulting seed has separate revenue at most $9/80+o(1)$, while its bundle revenue is strictly greater than
$7/50$. For sufficiently many items, its gain is therefore greater than
\[
 \frac7{50}-\frac9{80}=\frac{11}{400}.
\]
Relative to $c=1/4$, this is an improvement of $0.11$ beyond two.

\begin{theorem}
\label{thm:lower-analytic}
There is a finite product distribution $D$ with bounded nonnegative
item values for which posted prices for three disjoint bundles earn
more than
\[
 2.11\max\{\SREV(D),\BREV(D)\}.
\]
\end{theorem}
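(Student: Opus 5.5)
The plan is to apply \cref{lem:lower-completion} to the seed of \cref{sec:lower-analytic}, sold at the single bundle price $1/2$, with $c=1/4$. Concretely, for large $n$ I take the $n$ independent items $W_{i,n}$ and cap each at $1/2$. This seed has three properties. It is a finite product with bounded nonnegative values. Its menu is a single bundle price. Its total is dominated by $S_{\mathrm{seed}}$, since capping only lowers values, and the total still exceeds $1/2$ whenever the uncapped total does, since an item exceeding the cap alone pushes the sum to at least $1/2$. Hence the capped seed earns at least $\REV_{\mathrm{seed}}=(1/2)Q(1/2)>7/50$ by \cref{lem:lower-analytic-seed}.

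Next I verify the completion hypotheses. For the revenue condition, the capped total $\widetilde S\le S_{\mathrm{seed}}$ gives $\REV(c+\widetilde S)\le\REV(c+S_{\mathrm{seed}})=c$. The price $c$ sells surely, so equality holds. For the separate revenue, I would show $\SREV_{\mathrm{seed}}\le 9/80+o(1)$, which in particular is at most $c=1/4$. For each item and price $t>0$,
\[
\Pr[W_{i,n}\ge t]\le \Pr[N_i=1,\ \text{contribution}\ge t]+\Pr[N_i\ge2]\le \frac1n\int_t^\infty\nu(x)\,dx+\frac{\lambda^2}{n^2}.
\]
Only prices $t\le 1/2$ matter after capping. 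Using the tail bound $t\int_t^\infty\nu\le 9/80$ from \cref{lem:lower-analytic-seed}, each item earns at most $\frac{9}{80n}+\frac{\lambda^2}{2n^2}$. Summing over the $n$ items gives $\SREV_{\mathrm{seed}}\le 9/80+\lambda^2/(2n)$.

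With these hypotheses, \cref{lem:lower-completion} gives, for every $\varepsilon>0$, a finite product distribution and a menu earning more than
\[
\left(2+\frac{\REV_{\mathrm{seed}}-\SREV_{\mathrm{seed}}}{1/4}-\varepsilon\right)\max\{\SREV(D),\BREV(D)\}.
\]
Since $\REV_{\mathrm{seed}}$ strictly exceeds $7/50$, I set $\eta:=\REV_{\mathrm{seed}}-7/50>0$. I then choose $n$ large so that $\lambda^2/(2n)<\eta/2$, and pick $\varepsilon<2\eta$. The factor then strictly exceeds $2+4(7/50-9/80)=2.11$.

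The construction in the completion lemma uses three disjoint groups: the seed sold at a single price, the concentrated group sold together, and the rare item priced alone. So the mechanism consists of posted prices for three disjoint bundles. The hard analytic inputs, namely the exact revenue identity and the $9/80$ tail bound, are supplied by \cref{lem:lower-analytic-seed}. The only delicate step that remains is the uniform-in-$t$ control of items with multiple contributions, which is handled by the $\Pr[N_i\ge 2]$ term together with the cap at $1/2$.
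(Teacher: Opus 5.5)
Your proposal is correct and follows the paper's proof essentially step for step: you cap the $W_{i,n}$ at $1/2$, note that capping preserves sales at price $1/2$ and can only lower the shifted total, bound the single-contribution revenue by $9/80$ via \cref{lem:lower-analytic-seed}, and invoke \cref{lem:lower-completion} with $c=1/4$. The only difference is the multiple-contribution term: you use $\Pr[N_i\ge2]\le\lambda^2/n^2$ together with the cap, so only prices $t\le 1/2$ matter, whereas the paper applies Markov's inequality with the first moment $23/48$ of $\nu$, which gives a bound uniform in $t$ without the cap; both routes give $9/80+o(1)$.
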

\begin{proof}
Cap the independent items $W_{i,n}$ at $1/2$, giving
$V_{i,n}:=\min\{W_{i,n},1/2\}$, and denote their total by
$S_{\mathrm{seed},n}$. Capping preserves sales at the seed price and
only decreases the total. Thus the capped seed still earns
$\REV_{\mathrm{seed}}$ at price $1/2$. By
\cref{lem:lower-analytic-seed} and the fact that price $c$ sells surely,
\[
 \REV(c+S_{\mathrm{seed},n})=c.
\]

It remains to bound the capped seed's separate revenue. For a price
$t$, first consider items receiving exactly one contribution:
\[
 nt\Pr[W_{1,n}\ge t,\ N_1=1]
 =e^{-\lambda/n}t\int_t^\infty\nu(x)\,dx
 \le\frac9{80}e^{-\lambda/n}.
\]
For items receiving two or more contributions, revenue is bounded by
their expected value. Uniformly over all prices, we have
\[
 nt\Pr[W_{1,n}\ge t,\ N_1\ge2]
 \le n\E[W_{1,n}\mathbf1_{\{N_1\ge2\}}]
 =\frac{23}{48}(1-e^{-\lambda/n}).
\]
Here the mean contribution is $23/(48\lambda)$, and the Poisson count
satisfies
$\E[N_1\mathbf1_{\{N_1\ge2\}}]
=(\lambda/n)(1-e^{-\lambda/n})$.

These bounds hold for every price, and capping cannot increase monopoly
revenue. Therefore
\[
 \SREV_{\mathrm{seed},n}
 \le n\REV(W_{1,n})
 \le\frac9{80}e^{-\lambda/n}
     +\frac{23}{48}(1-e^{-\lambda/n})
 =\frac9{80}+o(1).
\]
For sufficiently large $n$, we have
$\SREV_{\mathrm{seed},n}<c$ and
$\REV_{\mathrm{seed}}-\SREV_{\mathrm{seed},n}>11/400$.
The completion lemma then gives a ratio arbitrarily close to
\[
 2+\frac{\REV_{\mathrm{seed}}-\SREV_{\mathrm{seed},n}}{c}
 >2+\frac{11/400}{1/4}=2.11.
\]
Choose the completion error smaller than the strict surplus above $2.11$.
The completed mechanism posts one price for the seed, one for the
concentrated group, and one for the rare item, giving three disjoint
bundle prices as claimed.
\end{proof}

\section{Competition with Separate Selling and Bundling}
\label{sec:competition}
We now prove the lower bound in~\cref{thm:competition}.
For regular items, two bidders suffice for bundling to match one
bidder's optimal revenue. For general independent items, even the
better of separate selling and bundling needs $\Omega(\log m)$
bidders for exact recovery. This matches the $O(\log m)$ upper bound
for separate selling alone~\cite{BeyhaghiWeinberg2019}.
By \cref{thm:regular-bundling-two}, such an instance must violate regularity;
the two-point rare items below have nonconcave revenue curves.

The construction must make both selling rules lose revenue.
A sum of many capped equal-revenue items has a large, concentrated
bundle value but little separate revenue. We add a second group of
rare items whose positive values lie on widely separated scales,
following Rubinstein~\cite[Example~1]{Rubinstein2016}. These items
earn substantial revenue separately, while their sum has constant
monopoly revenue. One bidder can be offered the first group as a bundle and the
rare items at separate prices. Separate selling misses the first group's
bundle revenue, while grand bundling misses revenue from the rare
items even with additional bidders.

To bound grand-bundle revenue, we separate these two contributions.
The next lemma bounds the revenue from $X+Y$ by the optimal revenue
from $X$ plus the expected maximum of the $Y$ values. After fixing the $Y$
values, subtracting each bidder's allocated $Y$ value from its payment
leaves an auction for $X$. The total amount subtracted is at most the
largest $Y$ value, because only one object can be allocated.
The first inequality is the marginal-mechanism bound
in~\cite[Lemma~1]{BabaioffEtAl2020}; the second follows from
\cref{lem:bundle-tail}. We include a proof in
Appendix~\ref{app:basic-revenue-bounds} for completeness.

\begin{lemma}[{\cite[Lemma 1]{BabaioffEtAl2020}}]
\label{lem:competition-scalar-components}
Let $X$ and $Y$ be independent nonnegative random variables. For $k$
independent bidders whose values are distributed as $X+Y$,
\[
 \REV_k(X+Y)\le\REV_k(X)+\E\!\left[\max_{i\le k}Y_i\right]
 \le k\REV(X)+\E\!\left[\max_{i\le k}Y_i\right],
\]
where $Y_1,\ldots,Y_k$ are independent copies of $Y$.
\end{lemma}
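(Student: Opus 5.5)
We prove the first inequality by conditioning on the $Y$ values and reducing to a single-object auction for $X$. Fix any truthful, individually rational $k$-bidder auction for a single object, where bidder $i$ has value $X_i+Y_i$, and write $x_i$ for allocation probabilities and $p_i$ for expected payments. Condition on the realization $y=(y_1,\ldots,y_k)$ of the $Y$ values. By independence, the $X_i$ are still independent copies of $X$. We then define an auction for values $X_1,\ldots,X_k$ alone. It simulates the original auction on reports $(x_i+y_i)_i$, allocates identically, and charges bidder $i$ the amount $p_i(x+y)-y_i\,x_i(x+y)$.

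We must check two things about this new auction.

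\emph{Incentives.} Bidder $i$'s utility in the new auction equals $(X_i+y_i)x_i-p_i$, which is exactly its utility in the original auction. Truthfulness therefore transfers: reporting $\widehat x_i$ in the new auction is the same as reporting $\widehat x_i+y_i$ in the original one, so it is dominated by truthful reporting. Individual rationality transfers in the same way.

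\emph{Revenue.} Conditional on $y$, the new revenue is at most $\REV_k(X)$. It differs from the original revenue by $\sum_i y_i x_i\le\max_i y_i$, since $\sum_i x_i\le1$ for a single object.

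Taking expectations over $Y$ and the supremum over auctions gives $\REV_k(X+Y)\le\REV_k(X)+\E[\max_i Y_i]$.

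One subtlety is that the new auction depends on $y$, which is known to the designer. This is harmless because $\REV_k(X)$ is a supremum over all auctions for $X$, and the bound holds pointwise in $y$.

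The second inequality is $\REV_k(X)\le k\REV(X)$. This is the scalar bidder-sampling bound from \cref{lem:bundle-tail}, applied with $m=1$ so that $D$ is the law of $X$ and $\BREV_k=\REV_k(X)$, $\BREV=\REV(X)$. For $k=1$ it is trivial.

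The only delicate step is making the conditioning rigorous for randomized mechanisms and possibly unbounded values. We handle this by working with expected allocations and payments and noting that each bound holds for every fixed $y$. No real obstacle is expected.
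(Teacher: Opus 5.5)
Your proposal is correct and follows essentially the same route as the paper: fix $y$, charge $p_i - y_i x_i$ so that utilities are unchanged, bound the removed payments by $\max_i y_i$ using $\sum_i x_i \le 1$, and obtain the second inequality from the one-item case of \cref{lem:bundle-tail}. One point the paper states explicitly and you leave implicit is that the argument should be run over dominant-strategy truthful, ex-post individually rational auctions, which by Myerson approach the optimal single-item revenue; fixing $y_{-i}$ would not preserve merely Bayesian incentive constraints.
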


We choose enough items in the first group that its total value stays
close to its mean for each of $k$ bidders. The lemma then limits
grand-bundle revenue to roughly this mean, plus $k$ times the rare
group's constant monopoly revenue. The rare items' one-bidder separate
revenue covers this contribution and the small losses from
concentration. The construction uses $e^{O(k)}$ items, yielding the
logarithmic bidder lower bound.

\begin{lemma}[A loss that persists with additional competition]
\label{lem:competition-two-block}
For every integer $k\ge64$, there is a product distribution $D$
over at most $e^{20k}$ items such that
\[
 \max\{\SREV_k(D),\BREV_k(D)\}<\REV(D).
\]
\end{lemma}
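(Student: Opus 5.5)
We build $D$ from two independent groups, as sketched before the statement. The first group, $A$, consists of $n$ i.i.d. items, each distributed as $\min\{Z,K\}$, where $Z$ is equal-revenue. Concretely, each has tail $\Pr[Z\ge t]=\min\{1,1/t\}$, scaled so that every item has monopoly revenue $\rho$; we choose $n\le e^{19k}$ and the cap $K$ so that the sum $C$ has mean $\mu$ and a standard deviation that is a small fraction of $\mu$. The second group, $B$, is Rubinstein's rare group: items $Y_1,\dots,Y_L$ with $Y_i = a^i$ with probability $a^{-i}\beta$ and $0$ otherwise. The scales $a^i$ are widely separated, so $\SREV(B)=L\beta$, while the sum $T=\sum_i Y_i$ has $\REV(T)=O(\beta)$ and $\E[\max$ of $k$ copies of $T]$ is bounded. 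The one-bidder benchmark mechanism sells $A$ as a bundle at price $(1-\eta)\mu$ and each $Y_i$ separately at $a^i$. By Chebyshev, its revenue is at least $(1-\eta)\mu\Pr[C\ge(1-\eta)\mu]+L\beta$.

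To bound $\BREV_k$, we apply \cref{lem:competition-scalar-components} with $X:=T$ and $Y:=C$, i.e.\ we put the rare group in the "$k\REV(X)$" slot:
\[
\BREV_k(D)\le k\REV(T)+\E\Bigl[\max_{i\le k}C_i\Bigr].
\]
For the second term we write $\max_i C_i\le\mu+\max_i(C_i-\mu)^+$. Using Bernstein or a sub-exponential bound on capped equal-revenue sums, we get $\E[\max_i (C_i-\mu)^+]\le \sigma\sqrt{2\log k}+K\log k$, and we want this to be $\le \eta\mu/2$. Thus $\BREV_k\le \mu+\eta\mu/2+O(k\beta)$. The benchmark beats this once $L\beta$ exceeds the $O(k\beta)$ term plus the concentration losses, so we take $L=\Theta(k)$ rare items with a suitably small constant times $\mu$ for $\beta$. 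Note that $\REV(T)$ depends on $\beta$ but not on $L$ (up to constants), which is what makes this work.

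To bound $\SREV_k$, we use $\REV_k(F)\le k\REV(F)$ from \cref{lem:bundle-tail}, applied per item. This gives
\[
\SREV_k(D)\le k\bigl(n\rho+L\beta\bigr)\quad\text{naively,}
\]
which is too weak for the rare items, since the benchmark collects only $L\beta$ from them. We therefore handle the rare items more carefully: for a two-point item, with $k$ bidders the optimal auction earns at most $a^i\Pr[\text{some bidder positive}]\le k a^{-i}\beta a^i = k\beta$. This is still $k$ times larger. So the calibration must be reversed: the rare group's separate revenue with $k$ bidders, $kL\beta$, must stay below the benchmark's $L\beta+\mu$. To achieve this we take $L\beta\approx\mu/(2k)$ (small), and the benchmark's advantage over $\BREV_k$ then comes from $L\beta$ beating $k\REV(T)$. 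Since $\REV(T)\approx\beta$ (constant in $L$), we need $L\gg k^2$. The $A$-group then needs $kn\rho\ll\mu$, i.e.\ $\rho\ll\mu/(kn)$, which the capped equal-revenue items give because their mean is $\rho\log K$; we need $\log K\gg k$, so $K=e^{\Theta(k)}$. Concentration with cap $K$ then requires $n$ around $K\cdot\mathrm{poly}(k)$, so $n\le e^{20k}$ total items, consistent with the statement.

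The main obstacle is this simultaneous calibration. Separate selling with $k$ bidders multiplies the rare group's revenue by up to $k$ and the capped group's by $k$, while bundling gains from $k$ bidders through $\E[\max_i C_i]$. All three losses must be beaten by a $\Theta(1/k)$ relative margin. Concretely, I would set $\mu=1$, $L\beta=1/(4k)$, $\REV(T)\le 2\beta$ with $L\ge 64k^2$, $n\rho\le 1/(8k^2)$, and $\eta=1/(16k)$. Then I would verify the Bernstein tail $\E[\max_{i\le k}(C_i-1)^+]\le 1/(32k)$ using $\operatorname{Var}(C)\le n\rho K$ with $K=e^{c k}$ and $n\rho\log K=1$. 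The hypothesis $k\ge64$ absorbs the constants, and the total item count $n+L\le e^{20k}$.
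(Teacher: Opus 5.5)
Your construction is essentially the paper's: a concentrated sum $C$ of capped equal-revenue items sold as one bundle at a slight discount, plus Rubinstein's separated-scale rare items sold separately. Like the paper, you bound $\BREV_k$ via \cref{lem:competition-scalar-components} with the rare sum in the $k\REV(X)$ slot and $C$ in the $\E[\max_i C_i]$ slot, and $\SREV_k$ via $\REV_k\le k\REV$ item by item. Even your parameters match the paper's: $L\beta/\mu=1/(4k)$ is its $4n/(16kn)$, and $\eta=1/(16k)$ is its bundle price $\E[C]-n$. The paper uses $4n$ rare items with $a=2$, $\beta=1$, and the cruder bound $\E[\max_i C_i]\le\E[C]+k\sqrt{\operatorname{Var}(C)}$ instead of Bernstein.

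The concrete calibration in your last paragraph does not close as written, however. The constraints $n\rho\le 1/(8k^2)$ and $n\rho\log K=1$ force $\log K\ge 8k^2$, which contradicts $K=e^{ck}$. Since concentration needs $n$ of order $K$ up to polylogarithmic and $\mathrm{poly}(k)$ factors, this would also exceed the $e^{20k}$ item budget. You only need $kn\rho$ to stay a constant below $3/4$, because $kL\beta=1/4$ and the benchmark exceeds $1$. Taking $n\rho\approx 1/(8k)$ suffices; this is your own earlier requirement $\log K\gg k$, and it mirrors the paper's mean $16k$ per unit of item revenue with cap $T=e^{16k-1}$.

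Relatedly, keep the absolute cap separate from the cap-to-revenue ratio. For items $\min\{\rho Z,K\}$ the mean condition is $n\rho(1+\log(K/\rho))=1$, so it is $K/\rho$ that should be $e^{\Theta(k)}$. By contrast, $\operatorname{Var}(C)\le 2n\rho K$, the Bernstein range term $K\log k$, and the Chebyshev loss at price $1-\eta$ are all small only when the absolute cap $K$ is a sufficiently small power of $1/k$. With that choice your argument goes through using $n=e^{O(k)}$ items, well within $e^{20k}$.
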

\begin{proof}
Fix $k\ge64$, and put $T:=e^{16k-1}$ and $n:=\lceil k^2T\rceil$.
Let $Y_T$ be the capped equal-revenue variable supported on $[1,T]$,
with $\Pr[Y_T\ge t]=1/t$ for $1\le t\le T$ and an atom at $T$.
Direct integration gives
\[
 \REV(Y_T)=1,\qquad \E[Y_T]=1+\log T=16k,\qquad
 \operatorname{Var}(Y_T)\le2T.
\]

\paragraph{A Concentrated Group.}
Take $n$ independent copies of $Y_T$, and let $C$ be their sum.
Since $\E[C]=16kn$ and $\operatorname{Var}(C)\le2nT$,
Chebyshev's inequality gives
\[
 \Pr[C<\E[C]-n]\le\frac{2nT}{n^2}\le\frac{2}{k^2}.
\]
The bundle price $\E[C]-n$ therefore earns at least
\[
 (\E[C]-n)(1-2/k^2)\ge \E[C]-n-\frac{32n}{k}\ge \E[C]-\frac32n.
\]
In contrast, separate selling to $k$ bidders earns at most $kn$ from
this group, because each item has one-bidder revenue one.

\paragraph{A Group with Separated Rare Values.}
For $i=1,\ldots,4n$, take an independent item $Z_i$ worth $2^i$ with
probability $2^{-i}$ and zero otherwise. Put $Z:=\sum_iZ_i$.
Each item has monopoly revenue one, so their total separate revenue
is $4n$.

To bound the revenue of their sum, fix a positive price $p$ no larger
than its maximum possible value. Let $j$ be the least index with
$p\le\sum_{i=1}^j2^i$. All lower-index values together fall below $p$,
so a sale requires some active item with index at least $j$. Hence
\[
 \Pr[Z\ge p]\le\sum_{i=j}^{4n}2^{-i}\le2^{1-j}.
\]
Since $p<2^{j+1}$, every such price earns at most four. Higher prices
earn zero, and therefore $\REV(Z)\le4$.

\paragraph{Comparing the Revenues.}
A menu can offer the first group at price $\E[C]-n$ and the rare
items at separate prices. Thus
\[
 \REV(D)\ge \E[C]-\frac32n+4n=\E[C]+\frac52n.
\]
Separate selling to $k$ bidders earns at most $k$ times the one-bidder
separate revenue, so
\[
 \SREV_k(D)\le k(n+4n)=5kn<\E[C].
\]

For grand bundling, apply~\cref{lem:competition-scalar-components}
with $X:=Z$ and $Y:=C$. Since $\REV(Z)\le4$, we have
\[
 \BREV_k(D)\le4k+\E\!\left[\max_{i\le k}C_i\right].
\]
The inequalities $\max_i C_i\le \E[C]+\sum_i(C_i-\E[C])^+$ and
$\E[(C-\E[C])^+]\le\sqrt{\operatorname{Var}(C)}$ give
\[
 \BREV_k(D)\le \E[C]+4k+k\sqrt{2nT}\le \E[C]+4k+\sqrt2n.
\]
Here we used $n\ge k^2T$. Also, $n\ge k^2\ge64k$ gives
$4k\le n/16$. Since $\sqrt2+1/16<3/2$, it follows that
\[
 \max\{\SREV_k(D),\BREV_k(D)\}\le \E[C]+\frac32n
 <\E[C]+\frac52n\le\REV(D).
\]
Finally, the number of items satisfies
\[
 5n\le10k^2e^{16k-1}\le e^{20k}. \qedhere
\]
\end{proof}

For sufficiently large $m$, take $k:=\lfloor\log(m)/20\rfloor\ge64$.
The construction uses at most $e^{20k}\le m$ items. Adding deterministic
zero-valued items gives exactly $m$ items without changing any revenue
benchmark. The better of separate selling and bundling still earns
less than $\REV(D)$ with $k$ bidders. Both simple revenues are nondecreasing in the number of bidders, so the same strict inequality holds with fewer bidders. This proves \cref{thm:competition}.

\section{Conclusion and Discussion}
We obtain tight competition and approximation guarantees for bundling
under regularity. An entropy refinement of the same argument gives a
$(1+\pi^2/6)$-approximation by the better of separate selling and
bundling. The lower-bound construction shows that the general approximation
factor must exceed two. We also show that even the better of separate selling and bundling can require
$\Omega(\log m)$ bidders to match one bidder's optimal revenue. An immediate open problem is to close the
approximation gap. Another direction is to find simple auction formats
that compete with optimal revenue from several original bidders.

\paragraph{AI Methodology.} The results were developed with the assistance of ChatGPT-6 Astra.
The author first observed, without AI assistance, that
the variation term can be useful for proving $\REV(D)\le\BSPA_2(D)$ for regular
item distributions. The AI subsequently generalized the argument to
irregular distributions at the author's request. The author then suggested exploring other forms of variation, and the AI developed the entropy refinement in \cref{sec:entropy-upper}. For \cref{sec:improved-lower}, the author proposed the idea of a third group of items, and the AI formalized the idea and carried out the numerical search. The selected parameters and all claimed bounds are verified analytically. The mathematical ideas in \cref{sec:competition} are the author's, developed with little AI input. ChatGPT-6.1 Sol and Claude Opus 5.5 were used for proofreading the manuscript and improving the exposition. The author takes full responsibility for the originality and correctness of all AI-assisted content.

\paragraph{Acknowledgements.} I thank Yannai A. Gonczarowski for introducing the problem to me and
for many insightful discussions, particularly his question about whether the
correction in \cref{sec:regular-bundling-tight} admits an interpretation
as a distance to regularity. I also thank Sadie Zhao and Michael Mitzenmacher for
many useful comments on earlier versions of the paper.

\appendix

\section{Basic Revenue Bounds}
\label{app:basic-revenue-bounds}

A bundle sale implies that some item has value at least a $1/m$
fraction of the bundle price. For the bidder-sampling bound, placing
one real bidder among simulated bidders recovers a $1/k$ fraction of
the auction's revenue.

\begin{proof}[Proof of~\cref{lem:bundle-tail}]
For every $t>0$, the union bound gives
\[
 t\Pr[S\ge t]
 \le\sum_{j=1}^m t\Pr[V_j\ge t/m]
 \le m\sum_{j=1}^m\REV(F_j)=m\SREV(D).
\]
Taking the supremum over $t$ proves the first comparison.
The no-reserve second-price auction is feasible, so
$\BSPA_k(D)\le\BREV_k(D)$.
For the last inequality, take any $k$-bidder bundle auction. Place one
real bidder in a uniformly chosen position and simulate the other
$k-1$ independent bundle values. Retain only the real bidder's
allocation and payment. This is a truthful, individually rational
one-bidder bundle mechanism whose expected revenue is a $1/k$ fraction of the
original revenue. Thus every such auction earns at most $k\BREV(D)$.
\end{proof}

For the component bound, fixing the $Y$ values and subtracting their
allocated value from payments leaves an auction for $X$. The total
amount subtracted is at most the largest $Y$ value.

\begin{proof}[Proof of~\cref{lem:competition-scalar-components}]
Take a dominant-strategy truthful, ex-post individually rational
scalar auction, and fix the vector of $Y$ values at $y$. If bidder $i$
receives allocation probability $a_i(x+y)$ and pays $p_i(x+y)$, charge
$p_i(x+y)-y_i a_i(x+y)$ instead. Utility is unchanged, so this gives
a truthful, individually rational auction for the $X$ values.
The removed payments sum to at most $\max_i y_i$, because
$\sum_i a_i\le1$. Thus the original auction's expected revenue,
conditional on $y$, is at most $\REV_k(X)+\max_i y_i$. Averaging over
$y$ proves the first inequality. Optimal scalar revenue can be
attained or approached by such auctions~\cite{Myerson1981}.
The second inequality follows from the one-item case of
\cref{lem:bundle-tail}, with the $k=1$ case immediate.
\end{proof}

\section{Entropy Tensorization}
\label{app:entropy-tensorization}

The entropy chain identity separates one coordinate from the rest. Convexity lets us compare the entropy of an average with the average of the entropies; induction then gives tensorization.

\begin{proof}[Proof of \cref{lem:functional-entropy-tensorization}]
Let $V$ and $Z$ be as in the lemma. A subscript on $\E$ or $\Ent$ specifies the coordinates averaged over, with the others held fixed.
We induct on $m$. The case $m=1$ is equality. For $m\ge2$,
put $\overline Z:=\E_{V_m}Z$. Expanding the definition of entropy and canceling
the terms $\E[\overline Z\log \overline Z]$ gives the chain identity
\[
 \Ent(Z)=\E_{V_{-m}}[\Ent_{V_m}(Z)]+\Ent_{V_{-m}}(\overline Z).
\]
To bound the entropy of the average $\overline Z$, we use convexity of the
entropy functional on a fixed probability space. For bounded
nonnegative functions $h_0,h_1$ and $\delta>0$, put
$h_s:=(1-s)h_0+sh_1+\delta$ for $0<s<1$.
Direct differentiation and Cauchy--Schwarz give
\[
 \frac{d^2}{ds^2}\Ent(h_s)
 =\E\!\left[\frac{(h_1-h_0)^2}{h_s}\right]
  -\frac{(\E[h_1-h_0])^2}{\E[h_s]}\ge0.
\]
Letting $\delta\downarrow0$ proves convexity for bounded nonnegative
functions.

Independence keeps the law of $V_{-m}$ fixed as we average over
$V_m$. Jensen's inequality for the convex entropy functional therefore
gives
\[
 \Ent_{V_{-m}}(\overline Z)
 \le\E_{V_m}[\Ent_{V_{-m}}(Z)].
\]
For each fixed value of $V_m$, apply the induction hypothesis to $Z$ as a
function of the remaining $m-1$ independent coordinates, and then
average over $V_m$:
\[
 \E_{V_m}[\Ent_{V_{-m}}(Z)]
 \le\sum_{j=1}^{m-1}\E_{V_{-j}}[\Ent_{V_j}(Z)].
\]
Together with the chain identity, this proves the entropy inequality.
\end{proof}

\section{Price Bounds for the Prescribed Seed}
\label{app:lower-analytic}

We prove the three estimates in \cref{lem:lower-analytic-seed}: the
revenue at the chosen seed price is greater than $7/50$, shifted
bundle prices earn at most $c=1/4$, and the separate-price quantity
$t\int_t^\infty\nu(x)\,dx$ is at most $9/80$ for every $t>0$.
The aggregate tail bounds bundle revenue, while the intensity bounds
separate revenue. For the two bounds that must hold at every price,
multiplying by an exponential reduces the problem to
showing that a function is positive; a derivative bound then gives
a positive quadratic lower bound.

We first verify the identity that realizes the prescribed aggregate
as a sum of independent items.

For $0<a<b$ and $s\ge0$,
\[
 \int_0^\infty(1-e^{-sx})\frac{e^{-ax}-e^{-bx}}x\,dx
 =\log(1+s/a)-\log(1+s/b).
\]
To see this, write $(e^{-ax}-e^{-bx})/x=\int_a^b e^{-ux}\,du$
and interchange the nonnegative integrals. Taking the two pairs
$(a,b)=(1,4/3)$ and $(4,48)$ gives the Laplace transform in
\cref{sec:lower-analytic}, as well as
$\int_0^\infty\nu(x)\,dx=\log16$.
For $q\in(0,1)$, a random variable that is zero with probability $1-q$ and otherwise exponentially distributed with rate $a$ has transform
\[
 1-q+q\frac{a}{a+s}
 =\frac{1+s/b}{1+s/a},\qquad b:=\frac{a}{1-q}.
\]
Thus the two pairs of rates correspond to the two exponential
components that are present with the prescribed probabilities.
Their intensities are nonnegative and can be split among
independent items. The first moment follows by direct integration:
$\int_0^\infty x\nu(x)\,dx=1-3/4+1/4-1/48=23/48$.

For the revenue at the chosen seed price, it is enough to bound
$e^{-1/2}$ from below. We then verify the two global price bounds by
the derivative argument described above.

\begin{proof}[Proof of \cref{lem:lower-analytic-seed}]
\emph{Revenue at the seed price.}
The aggregate tail gives
\[
 \REV_{\mathrm{seed}}
 =\frac{47e^{-1/2}+88e^{-2}}{288}.
\]
The exponential series implies
\[
 e^{1/2}<1+\frac12+\frac18+\frac{1/48}{1-1/8}
 =\frac{13}{8}+\frac1{42}<\frac{33}{20}.
\]
Consequently,
\[
 \REV_{\mathrm{seed}}
 >\frac{47(20/33)+88(20/33)^4}{288}>\frac7{50}.
\]

\par\smallskip\noindent\emph{Shifted bundle prices.}
Price $c=1/4$ sells surely, and every smaller price earns at most $c$.
For a price $c+t$ with $t>0$, the desired inequality
$(c+t)Q(t)\le c$ is equivalent to $\psi(t)\ge0$, where
\[
 \psi(t):=144e^{4t}-(1+4t)(47e^{3t}+88).
\]
We have $\psi(0)=9$, $\psi'(0)=-105$, and
\[
 \psi''(t)=e^{3t}(2304e^t-1551-1692t).
\]
Using $e^t\ge1+t$ gives
\[
 \psi''(t)\ge e^{3t}(753+612t)\ge750\qquad(t\ge0).
\]
Integrating twice gives
\[
 \psi(t)\ge9-105t+375t^2
 =375\left(t-\frac7{50}\right)^2+\frac{33}{20}>0.
\]
This proves $\REV(c+S_{\mathrm{seed}})=c$.

\par\smallskip\noindent\emph{Separate prices.}
Drop the last negative term of $\nu$ and change variables to obtain
\[
 t\int_t^\infty\nu(x)\,dx
 \le t\int_1^{4/3}\frac{e^{-ut}}u\,du
     +t\int_{4t}^\infty\frac{e^{-x}}x\,dx.
\]
For the first integral, convexity of $u\mapsto e^{-ut}/u$ gives
the trapezoidal bound
\[
 \int_1^{4/3}\frac{e^{-ut}}u\,du
 \le\frac16e^{-t}+\frac18e^{-4t/3}.
\]
For the second integral, we use the estimate
\[
 \int_a^\infty\frac{e^{-x}}x\,dx
 \le\frac{4a+9}{21a}e^{-a}\qquad(a>0).
\]
Indeed, differentiating the proposed upper bound gives
\[
 -\frac{d}{da}\left(\frac{4a+9}{21a}e^{-a}\right)
 -\frac{e^{-a}}a
 =\frac{(2a-3)^2}{21a^2}e^{-a}\ge0,
\]
and integration from $a$ to infinity proves the bound.
Combining the two estimates yields
\[
 t\int_t^\infty\nu(x)\,dx\le \eta(t):=
 \frac t6e^{-t}+\frac t8e^{-4t/3}
 +\left(\frac{4t}{21}+\frac3{28}\right)e^{-4t}.
\]
It remains to show that $\eta(t)\le9/80$ at every price. Put
$\chi(t):=e^{4t}(9/80-\eta(t))$. Direct differentiation gives
\[
 \chi(0)=\frac3{560},\qquad \chi'(0)=-\frac9{280},\qquad
 \chi''(0)=\frac2{15},
\]
and
\[
 e^{-3t}\chi'''(t)=\frac{36}{5}e^t-\frac92(t+1)
   -\frac8{27}(8t+9)e^{-t/3}.
\]
Using $e^t\ge1+t$ and $e^{-t/3}\le1$ for $t\ge0$, we obtain
\[
 e^{-3t}\chi'''(t)\ge\frac{36}{5}(1+t)-\frac92(t+1)
   -\frac8{27}(8t+9)=\frac1{30}+\frac{89}{270}t>0.
\]
Thus $\chi''(t)\ge2/15$ for $t\ge0$, and integrating twice gives
\[
 \chi(t)\ge\frac3{560}-\frac9{280}t+\frac1{15}t^2
 \ge\frac{36(t-1/4)^2+3/4}{560}>0.
\]
The second inequality uses $1/15\ge36/560$.
Hence $\eta(t)<9/80$ for all $t\ge0$, proving the separate-price bound.
\end{proof}

\begingroup
\setlength{\emergencystretch}{1em}
\printbibliography
\endgroup

\end{document}

%% file: preamble.tex
\newtheorem{theorem}{Theorem}[section]
\newtheorem{corollary}[theorem]{Corollary}
\newtheorem{lemma}[theorem]{Lemma}

\crefname{claim}{Claim}{Claims}
\Crefname{claim}{Claim}{Claims}

\newenvironment{proof-sketch}{\noindent{\bf Sketch of Proof}\hspace*{1em}}{\qed\bigskip}
\newenvironment{proof-idea}{\noindent{\bf Proof Idea}\hspace*{1em}}{\qed\bigskip}
\newenvironment{proof-of-lemma}[1]{\noindent{\bf Proof of Lemma #1}\hspace*{1em}}{\qed\bigskip}
\newenvironment{proof-attempt}{\noindent{\bf Proof Attempt}\hspace*{1em}}{\qed\bigskip}

\makeatletter
\def\fnum@figure{{\bf Figure \thefigure}}
\def\fnum@table{{\bf Table \thetable}}
\long\def\@mycaption#1[#2]#3{\addcontentsline{\csname
  ext@#1\endcsname}{#1}{\protect\numberline{\csname
  the#1\endcsname}{\ignorespaces #2}}\par
  \begingroup
    \@parboxrestore
    \small
    \@makecaption{\csname fnum@#1\endcsname}{\ignorespaces #3}\par
  \endgroup}
\def\mycaption{\refstepcounter\@captype \@dblarg{\@mycaption\@captype}}
\makeatother

\newcommand{\mathify}[1]{\ifmmode{#1}\else\mbox{$#1$}\fi}
\newcommand{\bigO}O
\newcommand{\set}[1]{\mathify{\left\{ #1 \right\}}}

\newcommand{\remove}[1]{}
\newcommand{\ignore}[1]{}



%% file: references.bib
@article{BabaioffEtAl2020,
  author  = {Babaioff, Moshe and Immorlica, Nicole and Lucier, Brendan and Weinberg, S. Matthew},
  title   = {A Simple and Approximately Optimal Mechanism for an Additive Buyer},
  journal = {Journal of the ACM},
  year    = {2020},
  volume  = {67},
  number  = {4},
  pages   = {24:1--24:40},
  doi     = {10.1145/3398745},
  url     = {https://doi.org/10.1145/3398745},
  note    = {Preliminary version in FOCS 2014}
}

@misc{CaiEtAl2026,
  ids       = {cai2026improved},
  author    = {Cai, Yang and Gupta, Vineet and Jiang, Yanchen and Liaw, Christopher and Mehta, Aranyak and Velegkas, Grigoris and Wang, Di},
  title     = {Improved Revenue Guarantees for Selling Separately and Bundling},
  year      = {2026},
  eprinttype = {arxiv},
  eprint    = {2609.28873},
  url       = {https://arxiv.org/abs/2609.28873}
}

@misc{GiannakopoulosHahn2026,
  ids       = {Giannakopoulos2026},
  author    = {Giannakopoulos, Yiannis and Hahn, Johannes},
  title     = {On the Power of Determinism in Multi-Item Auctions},
  year      = {2026},
  eprinttype = {arxiv},
  eprint    = {2609.35711},
  url       = {https://arxiv.org/abs/2609.35711v2},
  note      = {Version 2, October 1, 2026}
}

@inproceedings{BeyhaghiWeinberg2019,
  author    = {Beyhaghi, Hedyeh and Weinberg, S. Matthew},
  title     = {Optimal (and Benchmark-Optimal) Competition Complexity
               for Additive Buyers over Independent Items},
  booktitle = {Proceedings of the 51st Annual ACM SIGACT Symposium
               on Theory of Computing},
  series    = {STOC '19},
  year      = {2019},
  pages     = {686--696},
  publisher = {Association for Computing Machinery},
  address   = {New York, NY, USA},
  doi       = {10.1145/3313276.3316405},
  url       = {https://doi.org/10.1145/3313276.3316405}
}

@inproceedings{BeyhaghiEtAl2025,
  ids       = {beyhaghi2025competition},
  author    = {Beyhaghi, Hedyeh and Cai, Linda and Feng, Yiding and Li, Yingkai and Weinberg, S. Matthew},
  title     = {Competition Complexity in Multi-Item Auctions: Beyond {VCG} and Regularity},
  booktitle = {Proceedings of the 26th ACM Conference on Economics and Computation},
  series    = {EC '25},
  year      = {2025},
  pages     = {89},
  publisher = {Association for Computing Machinery},
  doi       = {10.1145/3736252.3742498},
  url       = {https://arxiv.org/abs/2506.09291}
}

@inproceedings{MaSimchiLevi2021,
  ids       = {ma2021reaping},
  author    = {Ma, Will and Simchi-Levi, David},
  title     = {Reaping the Benefits of Bundling under High Production Costs},
  booktitle = {Proceedings of the 24th International Conference on Artificial Intelligence and Statistics},
  year      = {2021},
  volume    = {130},
  pages     = {1342--1350},
  editor    = {Banerjee, Arindam and Fukumizu, Kenji},
  series    = {Proceedings of Machine Learning Research},
  publisher = {PMLR},
  url       = {https://proceedings.mlr.press/v130/ma21b.html}
}

@article{BulowKlemperer1996,
  author  = {Bulow, Jeremy and Klemperer, Paul},
  title   = {Auctions versus Negotiations},
  journal = {American Economic Review},
  year    = {1996},
  volume  = {86},
  number  = {1},
  pages   = {180--194},
  month   = mar,
  url     = {https://www.jstor.org/stable/2118262}
}

@inproceedings{CaiDevanurWeinberg2016,
  author    = {Cai, Yang and Devanur, Nikhil R. and Weinberg, S. Matthew},
  title     = {A Duality-Based Unified Approach to {Bayesian}
               Mechanism Design},
  booktitle = {Proceedings of the 48th Annual ACM SIGACT Symposium
               on Theory of Computing},
  series    = {STOC '16},
  year      = {2016},
  pages     = {926--939},
  publisher = {Association for Computing Machinery},
  address   = {New York, NY, USA},
  doi       = {10.1145/2897518.2897645},
  url       = {https://doi.org/10.1145/2897518.2897645}
}

@inproceedings{DerakhshanEtAl2024,
  author    = {Derakhshan, Mahsa and Ryu, Emily and
               Weinberg, S. Matthew and Xue, Eric},
  title     = {Settling the Competition Complexity of Additive Buyers
               over Independent Items},
  booktitle = {Proceedings of the 25th ACM Conference on Economics
               and Computation},
  series    = {EC '24},
  year      = {2024},
  pages     = {420--446},
  publisher = {Association for Computing Machinery},
  address   = {New York, NY, USA},
  doi       = {10.1145/3670865.3673627},
  url       = {https://doi.org/10.1145/3670865.3673627}
}

@inproceedings{EdenEtAl2017,
  author    = {Eden, Alon and Feldman, Michal and Friedler, Ophir and
               Talgam-Cohen, Inbal and Weinberg, S. Matthew},
  title     = {The Competition Complexity of Auctions:
               A {Bulow--Klemperer} Result for Multi-Dimensional Bidders},
  booktitle = {Proceedings of the 2017 ACM Conference on Economics
               and Computation},
  series    = {EC '17},
  year      = {2017},
  pages     = {343--343},
  publisher = {Association for Computing Machinery},
  address   = {New York, NY, USA},
  doi       = {10.1145/3033274.3085115},
  url       = {https://doi.org/10.1145/3033274.3085115}
}

@book{BoucheronLugosiMassart2013,
  author    = {St{\'e}phane Boucheron and G{\'a}bor Lugosi
               and Pascal Massart},
  title     = {Concentration Inequalities:
               A Nonasymptotic Theory of Independence},
  publisher = {Oxford University Press},
  year      = {2013},
  doi       = {10.1093/acprof:oso/9780199535255.001.0001}
}

@article{Ledoux1999,
  author  = {Michel Ledoux},
  title   = {Concentration of Measure and Logarithmic
             {Sobolev} Inequalities},
  journal = {S{\'e}minaire de Probabilit{\'e}s},
  volume  = {33},
  pages   = {120--216},
  year    = {1999}
}

@article{Chafai2004,
  author  = {Djalil Chafa{\"i}},
  title   = {Entropies, Convexity, and Functional Inequalities:
             On {$\Phi$}-Entropies and {$\Phi$}-{Sobolev} Inequalities},
  journal = {Journal of Mathematics of Kyoto University},
  volume  = {44},
  number  = {2},
  pages   = {325--363},
  year    = {2004},
  doi     = {10.1215/kjm/1250283556}
}

@article{DiCrescenzo2009,
title = {On cumulative entropies},
journal = {Journal of Statistical Planning and Inference},
volume = {139},
number = {12},
pages = {4072-4087},
year = {2009},
issn = {0378-3758},
doi = {https://doi.org/10.1016/j.jspi.2009.05.038},
url = {https://www.sciencedirect.com/science/article/pii/S0378375809001621},
author = {Antonio {Di Crescenzo} and Maria Longobardi}
}

@inproceedings{FeldmanFriedlerRubinstein2018,
  author    = {Feldman, Michal and Friedler, Ophir and Rubinstein, Aviad},
  title     = {{99\%} Revenue via Enhanced Competition},
  booktitle = {Proceedings of the 2018 ACM Conference on Economics
               and Computation},
  series    = {EC '18},
  year      = {2018},
  pages     = {443--460},
  publisher = {Association for Computing Machinery},
  address   = {New York, NY, USA},
  doi       = {10.1145/3219166.3219202},
  url       = {https://doi.org/10.1145/3219166.3219202}
}

@article{HartNisan2017,
  author  = {Hart, Sergiu and Nisan, Noam},
  title   = {Approximate Revenue Maximization with Multiple Items},
  journal = {Journal of Economic Theory},
  year    = {2017},
  volume  = {172},
  pages   = {313--347},
  doi     = {10.1016/j.jet.2017.09.001},
  url     = {https://doi.org/10.1016/j.jet.2017.09.001}
}

@article{Myerson1981,
  author  = {Myerson, Roger B.},
  title   = {Optimal Auction Design},
  journal = {Mathematics of Operations Research},
  year    = {1981},
  volume  = {6},
  number  = {1},
  pages   = {58--73},
  doi     = {10.1287/moor.6.1.58},
  url     = {https://doi.org/10.1287/moor.6.1.58}
}

@inproceedings{Rubinstein2016,
  author       = {Rubinstein, Aviad},
  title        = {On the Computational Complexity of Optimal Simple Mechanisms},
  booktitle    = {Proceedings of the 2016 ACM Conference on Innovations in Theoretical Computer Science},
  series       = {ITCS '16},
  year         = {2016},
  pages        = {21--28},
  publisher    = {Association for Computing Machinery},
  doi          = {10.1145/2840728.2840736},
}

@inproceedings{DaskalakisDeckelbaumTzamos2013,
  author    = {Daskalakis, Constantinos and Deckelbaum, Alan and Tzamos, Christos},
  title     = {Mechanism Design via Optimal Transport},
  booktitle = {Proceedings of the Fourteenth ACM Conference on Electronic Commerce},
  series    = {EC '13},
  year      = {2013},
  pages     = {269--286},
  publisher = {Association for Computing Machinery},
  doi       = {10.1145/2482540.2482593},
  url       = {https://doi.org/10.1145/2482540.2482593},
}

@inproceedings{DaskalakisDeckelbaumTzamos2014,
  author    = {Daskalakis, Constantinos and Deckelbaum, Alan and Tzamos, Christos},
  title     = {The Complexity of Optimal Mechanism Design},
  booktitle = {Proceedings of the Twenty-Fifth Annual ACM-SIAM Symposium on Discrete Algorithms},
  series    = {SODA '14},
  year      = {2014},
  pages     = {1302--1318},
  publisher = {Society for Industrial and Applied Mathematics},
  doi       = {10.1137/1.9781611973402.96},
  url       = {https://doi.org/10.1137/1.9781611973402.96},
}

@article{DaskalakisDeckelbaumTzamos2017,
  author    = {Daskalakis, Constantinos and Deckelbaum, Alan and Tzamos, Christos},
  title     = {Strong Duality for a Multiple-Good Monopolist},
  journal   = {Econometrica},
  year      = {2017},
  volume    = {85},
  number    = {3},
  pages     = {735--767},
  doi       = {10.3982/ECTA12618},
  url       = {https://doi.org/10.3982/ECTA12618},
  note      = {Preliminary version in EC 2015}
}

@article{Daskalakis2015,
  author  = {Daskalakis, Constantinos},
  title   = {Multi-Item Auctions Defying Intuition?},
  journal = {ACM SIGecom Exchanges},
  year    = {2015},
  volume  = {14},
  number  = {1},
  pages   = {41--75},
  doi     = {10.1145/2845926.2845928},
  url     = {https://www.sigecom.hosting.acm.org/exchanges/volume_14/1/DASKALAKIS.pdf}
}

@article{LiYao2013,
  author  = {Li, Xinye and Yao, Andrew Chi-Chih},
  title   = {On Revenue Maximization for Selling Multiple Independently Distributed Items},
  journal = {Proceedings of the National Academy of Sciences},
  year    = {2013},
  volume  = {110},
  number  = {28},
  pages   = {11232--11237},
  doi     = {10.1073/pnas.1309533110},
  url     = {https://doi.org/10.1073/pnas.1309533110}
}

@misc{Kupfer2016,
  author    = {Kupfer, Ron},
  title     = {A Note on the Ratio of Revenues Between Selling in a Bundle and Separately},
  year      = {2016},
  eprinttype = {arxiv},
  eprint    = {1611.09613},
  url       = {https://arxiv.org/abs/1611.09613}
}

@article{ManelliVincent2006,
  author  = {Manelli, Alejandro M. and Vincent, Daniel R.},
  title   = {Bundling as an Optimal Selling Mechanism for a Multiple-Good Monopolist},
  journal = {Journal of Economic Theory},
  year    = {2006},
  volume  = {127},
  number  = {1},
  pages   = {1--35},
  doi     = {10.1016/j.jet.2005.08.007},
  url     = {https://doi.org/10.1016/j.jet.2005.08.007}
}

@article{Rochet1987,
  author  = {Rochet, Jean-Charles},
  title   = {A Necessary and Sufficient Condition for Rationalizability in a Quasi-Linear Context},
  journal = {Journal of Mathematical Economics},
  year    = {1987},
  volume  = {16},
  number  = {2},
  pages   = {191--200},
  doi     = {10.1016/0304-4068(87)90007-3},
  url     = {https://doi.org/10.1016/0304-4068(87)90007-3}
}

@inproceedings{RoughgardenTalgamCohenYan2012,
  author    = {Roughgarden, Tim and Talgam-Cohen, Inbal and Yan, Qiqi},
  title     = {Supply-Limiting Mechanisms},
  booktitle = {Proceedings of the 13th ACM Conference on Electronic Commerce},
  series    = {EC '12},
  year      = {2012},
  pages     = {844--861},
  publisher = {Association for Computing Machinery},
  doi       = {10.1145/2229012.2229077},
  url       = {https://doi.org/10.1145/2229012.2229077}
}
